\definecolor{navyblue}{rgb}{0.0, 0.0, 0.5}
\newcommand{\supp}{\operatorname{supp}}
\newcommand{\tr}{\operatorname{tr}}
\newcommand{\cH}{\mathcal{H}}
\newcommand{\norm}[1]{\left\lVert#1\right\rVert}
\renewcommand{\L}{\mathcal L}
\newcommand{\poly}{\operatorname{poly}}
\newcommand{\polylog}{\operatorname{polylog}}
\newcommand{\refrho}{\rho^*}
\newcommand{\rhofixed}{\rho_\infty}
\renewcommand{\phi}{\varphi}
\renewcommand{\epsilon}{\varepsilon}
\theoremstyle{plain}
\newtheorem{theorem}{Theorem}[section]
\newtheorem{corollary}[theorem]{Corollary}
\newtheorem{lemma}[theorem]{Lemma}
\newtheorem{proposition}[theorem]{Proposition}
\newtheorem{definition}[theorem]{Definition}
\newtheorem{assumption}[theorem]{Assumption}
\newtheorem{condition}[theorem]{Condition}
\newtheorem{problem}[theorem]{Problem}
\theoremstyle{remark}
\newtheorem{remark}[theorem]{Remark}
\newtheorem*{claim*}{Claim}
\newtheorem*{remark*}{Remark}
\newtheorem*{example*}{Example}
\newtheorem*{notation*}{Notation}
\numberwithin{equation}{section}
\begin{document}

\author{\begingroup
\hypersetup{urlcolor=navyblue}
\href{https://orcid.org/0000-0003-1610-1597}{Emilio Onorati}
\endgroup}
\email[Emilio Onorati ]{emilio.onorati@tum.de}
\affiliation{Zentrum Mathematik, Technische Universit\"{a}t M\"{u}nchen, 85748 Garching, Germany}

\author{\begingroup
\hypersetup{urlcolor=navyblue}
\href{https://orcid.org/0000-0001-7712-6582}{Cambyse Rouz\'{e}
\endgroup}
}
\email[Cambyse Rouz\'{e} ]{cambyse.rouze@tum.de}
 \affiliation{Zentrum Mathematik, Technische Universit\"{a}t M\"{u}nchen, 85748 Garching, Germany}
\affiliation{Inria, T\'{e}l\'{e}com Paris - LTCI, Institut Polytechnique de Paris, 91120 Palaiseau, France}

\author{\begingroup
\hypersetup{urlcolor=navyblue}
\href{https://orcid.org/0000-0001-9699-5994}{Daniel Stilck Fran\c{c}a}
\endgroup}
\affiliation{Univ Lyon, ENS Lyon, UCBL, CNRS, Inria, LIP, F-69342, Lyon Cedex 07, France}
\email[Daniel Stilck Fran\c ca ]{daniel.stilck\_franca@ens-lyon.fr}

\author{\begingroup
\hypersetup{urlcolor=navyblue}
\href{https://orcid.org/0000-0002-6077-4898}{James D. Watson}
\endgroup}
\affiliation{University of Maryland, College Park, QuICS 3353 Atlantic Building, MD 20742-2420, USA
}
\email[James D. Watson ]{jdwatson@umd.edu}

\title[]{Provably Efficient Learning of Phases of Matter via Dissipative Evolutions}

\begin{abstract}

The combination of quantum many-body and machine learning techniques has recently proved to be a fertile ground for new developments in quantum computing. 
Since the pioneering work of Huang et al.~\cite{huang2021provably}, several works have shown that it is possible to classically efficiently predict the expectation values of local observables on all states within a phase of matter using a machine learning algorithm after learning from data obtained by measuring other states in the same phase.
However, existing results are restricted to phases of matter such as ground states of gapped Hamiltonians and Gibbs states that exhibit exponential decay of correlations. 
In this work, we drop this requirement and show how it is possible to learn local expectation values for all states in a phase, where we adopt the Lindbladian phase definition by Coser \& P\'erez-Garc\'ia \cite{Coser_Perez-Garcia_2019}, which defines states to be in the same phase if we can drive one to other rapidly with a local Lindbladian. 
This definition encompasses the better-known Hamiltonian definition of phase of matter for gapped ground state phases, and further applies to any family of states connected by short unitary circuits, as well as non-equilibrium phases of matter, and those stable under an external dissipative interaction. 
Under this definition, we show that $N = O(\log(n/\delta)2^{\polylog(1/\epsilon)})$ samples suffice to learn local expectation values within a phase for a system with $n$ qubits, to error $\epsilon$ with failure probability $\delta$. 
This sample complexity is comparable to that for previous results on learning gapped and thermal phases of matter, and it encompasses, in a unified and streamlined way, all previous results of this nature and more. 
As a complementary result, we show that we can learn families of states which go beyond the Coser \& P\'erez-Garc\'ia definition of phase, and we derive a more general bound on the sample complexity which is dependent on the mixing time between states under a Lindbladian evolution.

\end{abstract}

\maketitle

\newpage

\tableofcontents

\newpage

\section{Introduction}

Understanding quantum many-body systems is a fundamental task in quantum chemistry, solid state physics, and quantum information science.
A huge number of powerful techniques have been brought to bear on this problem, including the development of the density matrix renormalization group, tensor network techniques, Monte Carlo, and with the advent of quantum computing, methods such as the Variational Quantum Eigensolver have become popular, as well as more advanced techniques \cite{white1992density, vidal2008class, peruzzo2014variational, gubernatis2016quantum, cirac2021matrix, cubitt2023dissipative}.
Nonetheless, determining the properties of many-body quantum systems from first principles remains a computationally intractable task. 
While the advent of fault-tolerant quantum computers will hopefully enable us to solve a much wider array of problems, in the mean time we are still confined to classical computation. 

From a complexity theoretic perspective, tasks such as finding ground state energies, learning observables measured on many-body quantum states, or determining the boundaries of a phase are known to be computationally intractable in general \cite{kitaev2002classical, ambainis2014physical, gharibian2019oracle, watson2021complexity,  bravyi2022quantum}.
Thus we cannot expect a solution to such problems from just a description of the interactions between particles, however, we might expect that if we have access to additional information, we could significantly speed up these computations.
Recently new techniques have emerged from classical Machine Learning and have been successfully applied to quantum systems, including tasks such as identifying phases of matter \cite{rodriguez2019identifying,rem2019identifying,Dong_Pollmann_Zhang_2019}, characterising observables on phases of matter~\cite{Biamonte2017,Carrasquilla2017, coopmans2023sample}, and approximating quantum states \cite{gao2017efficient,park2020geometry,barr2020quantum, nomura2021purifying}. These proposals mark a paradigm shift: instead of directly computing physical properties of quantum states from a given class, we first generate some measurement data for a few examples (training stage) on a quantum computer or simulator and then use machine learning methods to extrapolate to unseen states.
However, the demonstration of the effectiveness of these techniques has been largely heuristic and in many cases lacked rigorous justification beyond intuition or empirical success.

In this work, we consider such an approach for the task of learning observables everywhere within a phase of matter, which we call \textit{learning the phase} (under some definition of ``phase of matter'').
The ability to learn a phase of matter in this way is desirable as it allows one to learn properties of states which have not been observed from samples which are different from --- but related to --- the state of interest.
This is particularly helpful if preparing a sample of the state of interest is experimentally difficult or computationally intractable
(or just reducing the overhead of sampling many states).
Additionally, it allows for a classical representation of the entire phase to be stored.

In seminal work by Huang et al., the task of learning to predict local observables of quantum many-body ground states everywhere within a gapped phase of matter was considered,
where the gapped phase is defined as the set of ground states where the ground states belong to a family of Hamiltonians which can be smoothly moved between without the spectral gap closing  \cite{huang2021provably}.
The authors proved that a classical machine learning algorithm, with access to data sampled from states within the quantum phase, could outperform algorithms which took no data from input samples (provided the complexity assumption $RP\neq NP$ holds).
A key step in this work was using classical shadow techniques to obtain a classically compact representation of quantum states, which could then be classically post-processed \cite{Huang2020}.

This first result by Huang et al. \cite{huang2021provably} proves learnability of quantum phases of gapped systems and has a sample complexity that is exponential in the precision and polynomial in the system size. 
Subsequent work addressed some of these limitations. 
The learnability of observables was extended to thermal phases of matter which have exponentially decaying correlations in \cite{Onorati_Rouze_Stilch_Franca_Watson_2023}, which also improved the the number of samples needed was improved to $O(\log(n/\delta)2^{\polylog(1/\epsilon)})$.
Improved scaling for gapped ground state phases was also achieved in \cite{Lewis_Huang_Preskill_2023} which reduced the scaling to $O(\log(n/\delta)2^{\polylog(1/\epsilon)})$.
Further work by \cite{che2023exponentially} demonstrated learnability under assumptions about the continuity of matrix elements of the density matrix with similar sample complexity.
However, learnability of more general phases of matter is unknown, and in particular, phases which have algebraically decaying correlation functions (i.e. a weaker form of decay compared to the exponential decay).

 In this work we seek to bridge this gap. 
 Here we use the definition introduced by Coser \& P\'erez-Garc\'ia in \cite{Coser_Perez-Garcia_2019}, under what we call \textit{Lindbladian Phases of Matter}.
 This definition asserts that two states are in the same phase of matter if one can mix from one to the other rapidly using a time-independent, local Lindbladian.
This definition naturally encompasses many of the properties we expect from a phase of matter, including stability under local perturbations and continuity of local observables within the phase.
This allows us to formalise the idea that we can take a representative state in the phase, and then ``move around'' in the phase while the properties vary smoothly.
Furthermore, the definition is physically motivated in the sense that if two states are mixed rapidly by a Lindbladian, then there exists a physical process which allows us to go from one to the other rapidly, and we can show that the ``long-range'' physics of the two states is similar.

 This characterisation provably encompasses the local unitary definition of phase and the gapped Hamiltonian definition of phase, and is expected to generalise thermal phases (although this has yet to be proved rigorously).
 The Lindbladian definition of phase also naturally applies to mixed states in an analogous way to the local unitary definition used to define quantum phases. 
 Since it makes no restrictions on the decay of correlations or spectral gap, this definition is in principle not limited to ground states of gapped Hamiltonians or thermal states with exponential decay of correlations --- it can apply to gapless phases and thermal states with algebraically decaying correlations.
 Rather than characterising these phases as static properties of a Hamiltonian (or other description), this definition gives an interpretation for phases in terms of relations between states.

 In this setting, we develop a learning procedure which provably allows us to learn observables efficiently and with high probability everywhere in the phase of matter, defined by this Lindbladian definition. The algorithm takes quantum data and performs a classical shadow procedure to extract classical data, which can then be used to construct an estimator for observables of interest.
 Since all we require are the shadows, we could instead directly use classical data generated from some alternative classical procedure such as a tensor network or Monte Carlo simulation.
 This may be advantageous if we can only perform the experiment on a classical computer and wish to reduce the number of computational experiments we wish to perform.

 To prove our results we employ a combination of techniques. 
 We use Lieb-Robinson bounds to relate temporal mixing times to spatial correlation lengths, allowing us to constrain how the system behaves within a phase.
 We then use techniques from classical shadows to reduce the quantum data we have been given to classical data that can be efficiently processed.
 Finally, we use results from concentration of measure to put bounds on sample efficiency and learnability.

\medskip

 This work is structured as follows. In \cref{Sec:Phase_Definition} we review the Lindbladian definition of phase put forward in \cite{Coser_Perez-Garcia_2019}.
 The rigorous statement of the learning problem and the results we prove are stated in \cref{Sec:Main_Results}.
 \Cref{Sec:Learning_Algorithm} gives the learning algorithm and outlines why it works.
 \Cref{Sec:Other_Phase_Definitions} contains a discussion of how our results relate to some other definitions of phase, and some physical examples.
 Finally, in \cref{Sec:Conclusions} we summarise our results and give a discuss for related and future work.

\section{The Lindbladian Definition of Phase of Matter}
\label{Sec:Phase_Definition}

Recent works~\cite{huang2021provably,Onorati_Rouze_Stilch_Franca_Watson_2023, Lewis_Huang_Preskill_2023} have demonstrated how one can learn particular phases of matter efficiently under some structural assumptions. 
However, these works were constrained to phases of ground or Gibbs states satisfying exponential decay of correlations. 
We will now discuss a definition of phase of matter that will allow us to extend these learning results to other classes of physical systems and (partially) recover previous results under a unified framework.
We call this the \textit{Lindbladian definition of phase of matter}.

Loosely speaking, the Lindbladian definition of a phase of matter asserts that two separate states on $n$ qubits are in the same phase if there exists a dissipative, time-independent evolution generated by a geometrically local Lindbladian which maps between these states (or rather within $\epsilon$ trace distance in time $\polylog(n/\epsilon)$).
The intuitive motivation for this is as follows: if there is a phase transition between two states, we should expect global, discontinuous changes to the properties of the system, and hence we expect long-ranged correlations to be instituted across the system at some point when evolving one state into the other. 
Famously, the physics of a system at a phase transition is ``scale invariant'' due to these long-ranged correlations.
But if one state can be evolved to another using a geometrically local evolution in only time $\sim \polylog(n)$, then, due to the locality of the interactions in the Lindbladian, there is not sufficient time to create these long-ranged correlations.
Thus the two states should be in the same phase.
In condensed matter, this is related to the notion that the correlation length should diverge when moving between phases.

 We note that typically one can go from a highly-ordered to disordered state quickly by destroying correlations, but the reverse is not true.
 As an example, consider taking some ordered state and applying the local depolarising channel to all the qudits --- it rapidly mixes to the maximally mixed state, but we should not expect the reverse to be true.
 As such, we say that two states are in the same phase only if there is a rapid Lindbladian evolution between the states going \textit{in both directions}.
 Formally, the definition of a Lindbladian phase of matter is given in \cite{Coser_Perez-Garcia_2019} as:

\begin{definition}[Phase of Matter, Definition 1 of \cite{Coser_Perez-Garcia_2019}] \label{Def:Phase_of_Matter}

We say that a state $\rho_0$ can be driven fast to another state $\rho_1$, and we write $\rho_0\rightarrow \rho_1$, if there exists a dissipative evolution generated by a geometrically local and time-independent Lindbladian $\mathcal{L}_n$ acting on the $n$-qudit system and a ancillary system which preserves the locality of the primary system
\begin{align*}
    \norm{e^{t\mathcal{L}_n}(\rho_0\otimes \omega_0) - \rho_1\otimes \omega_1  }_1 \leq \poly(n)e^{-\gamma t},
\end{align*}
with $\omega_0$ and $\omega_1$ respectively the initial and final states of the ancillas, and for a constant $\gamma$.

We say that two states belong to the same phase if there exist two local Lindbladian evolutions as described above such that $\rho_0 \rightarrow \rho_1$ and $\rho_0 \leftarrow \rho_1$, and in this case we write $\rho_0 	\longleftrightarrow \rho_1$.

\end{definition}
\noindent By an ``ancillary system which preserves the locality of the primary system'', we mean that the ancillary system has a spatial structure, and qudits in the ancillary system are only coupled to nearby qudits in the primary system.

  We leave any further discussion of the details of this definition of phase of matter to \cite{Coser_Perez-Garcia_2019}, and simply take this as the definition of phase that we will work with. Throughout we assume that we are working with open or periodic boundary conditions.
\noindent To make this definition of phase valid between systems of different sizes, we also impose a compatibility condition between states which belong to the same phase of different sized systems.
Roughly speaking, if two states are in the same phase at different system sizes.

\begin{condition}[Compatibility Condition]\label{Condition:Compatability_Condition}
Consider two local Lindbladians $\L,\L'$.
Let $\{\L_S\}_S$ denote a family of Lindbladians such that within $S\subseteq \Lambda$, $\L_S$ acts as $\L$, and outside of $S$ it acts as $\L'$.
Denote $\lim_{t\rightarrow \infty}e^{t \L_S}(\rho_0^S\otimes \omega_0^S)(\rho_0) = \rhofixed^S\otimes \omega_1^S$, where $\omega_0^S, \omega_1^S$ are associated auxiliary states.
Then consider three subsets of the lattice $A\subset R \subset W$ such that $A$ does not contain the boundaries of $R$, and $R$ does not contain the boundaries of $W$. 
Then the Lindbladian satisfies the compatibility condition if for\footnote{This results in this paper also hold for $\gamma = \Omega(1/\polylog(n))$, but since this mixing time does not appear physically, will ignore this subtlety.} $\gamma = \Omega(1)$:
\begin{align*}
   | \tr_{A^c}[e^{t\L_R}(\rhofixed^W\otimes \omega_0^R ) - \rhofixed^R\otimes \omega_1^R  ] |\leq \poly(|R|)e^{-\gamma t}.
\end{align*}
\end{condition}

The definition of phase in \cref{Def:Phase_of_Matter} (along with the compatibility condition, \cref{Condition:Compatability_Condition}) has the benefit of (a) defining phases of matter by relative relations between states (b) allowing us to easily define phases of matter for mixed states, not just pure states.
Additionally, this definition of phase allows us to meaningfully talk about phases for finite sized systems, whereas properties associated with phase transitions only truly happen in the thermodynamic limit (e.g. non-analyticity of observables).

In \cref{Def:Phase_of_Matter}, we state we wish to maintain geometric locality not only in the system of interest, but also in the ancillary system. 
To ensure this, we require that there is a notion of locality on the ancillary system as well as the full system, and that the norm of terms acting on the auxiliary system is bounded by the same bound on the primary system.
 We also ensure that all the interactions are of bounded strength and the size of the auxiliary system is no more than $\poly(n)$ if the system of interest is of size $n$.
 The addition of \cref{Condition:Compatability_Condition} enforces that if we consider a small local area of a steady state at two different sizes, then locally they should look very similar and mix rapidly between each other.

A key point that follows from the previous definitions and conditions is that generally, local observables will satisfy what is called \emph{local rapid mixing}, which states that local regions of the lattice mix to their steady state in a time independent of the full system size. 
More formally:
\begin{condition}[Local Rapid Mixing]
\label{Condition:LOcal_Rapid_Mixing}
     Let $\L$ be a Lindbladian with steady state $\rho_\infty$, and let $\rho_0$ be some point in the same phase as $\rho_\infty$, such that $\rho_0$ is the steady state of the Lindbladian $\L'$.
 Let $O_A$ be any geometrically local observable supported on $A\subset \Lambda$.
Then:
    \begin{align*}
       | \tr[O_A(T_t(\rho_0) - \rho_\infty)]| \leq \poly(|A|) e^{-\gamma t},
    \end{align*}
    for some constant $\gamma$ determined by the Lindbladian.
\end{condition}

\noindent In \cref{Sec:Learning_Lindbladian_Phases} we demonstrate that local rapid mixing follows from \cref{Def:Phase_of_Matter} with \cref{Condition:Compatability_Condition}. 

The Lindbladian definition of phase should be compared to the Hamiltonian definition of \textit{phase of matter} defined for pure states, in which a phase can be described either as the set of ground states of a smoothly parameterised family of Hamiltonians for which the gap does not close, or by a constant depth circuit where the unitaries have $\polylog(n)$ locality \cite{chen2010local}. 
These two definitions can be shown to be equivalent as per  \cite{bachmann2012automorphic} or \cite[Sec. 3.1]{Coser_Perez-Garcia_2019}. \
The fundamental motivation between the Hamiltonian definition and the Lindbladian definition is the same --- a gapped family of ground states is guaranteed to have exponentially decaying correlations everywhere, and so long-range correlations are never generated.
Similarly, when considering the finite-depth circuit definition of phase, we realise that time evolution generated by local operators should not generate long-ranged correlations and should ensure the smoothness of the expectation values of local observables.
However, the Hamiltonian and circuit definitions of phase struggle to deal with mixed states, whereas the Lindbladian definition naturally encapsulates them.
Importantly, like the circuit definition for gapped ground state phases, the Lindbladian definition here is applicable to phases with algebraic decay of correlations, rather than the stricter exponential decay of correlations.

\begin{figure}
    \centering
    \includegraphics[width=1.0\textwidth]{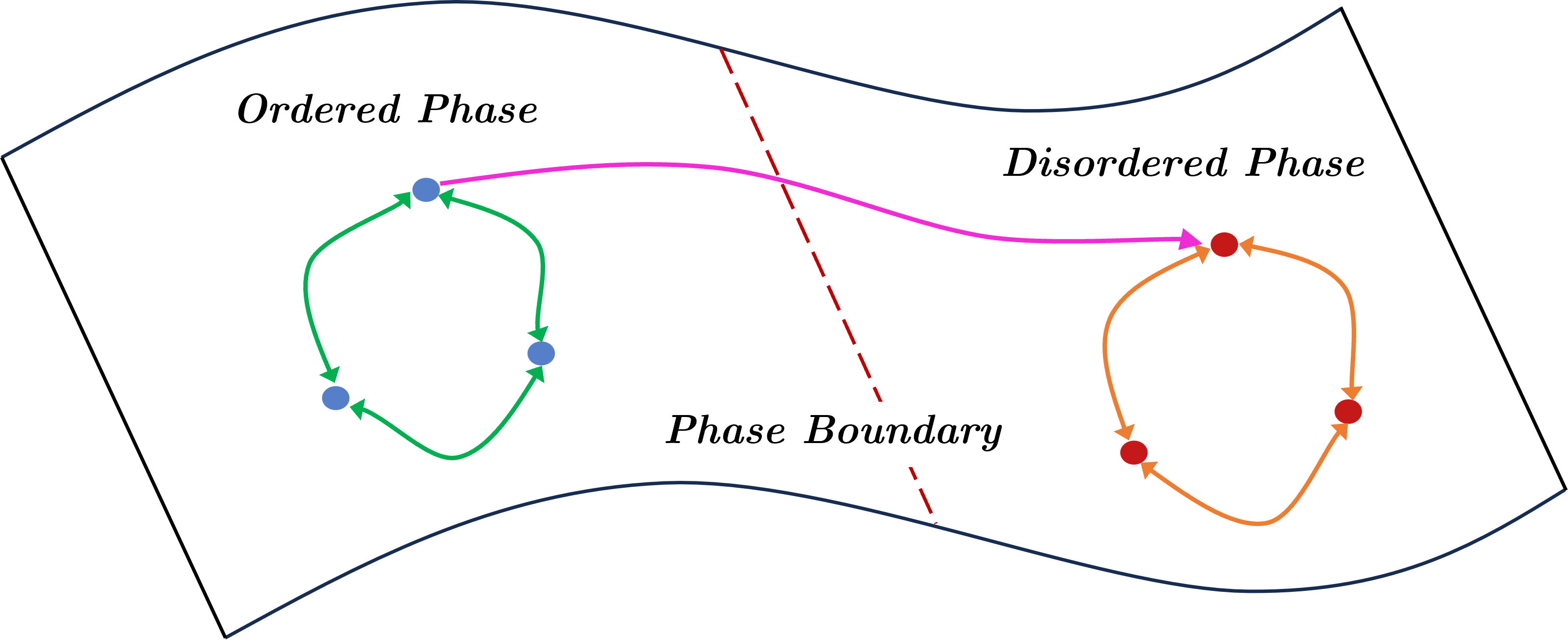}
    \caption{A schematic diagram of states one either side of a phase boundary. 
    Crosses represent states in the respective phases and arrows represent rapid evolutions generated by fixed Lindbladians between states. 
    There does not exist a rapid evolution from the disordered state to the ordered state as we expect that crossing the phase boundary from disorder to order requires the generation of long-ranged correlations.}
    \label{Fig:Phase_Diagram}
\end{figure}

\subsection{Transitivity of Lindbladian Phases}
Although not immediately obvious, the Lindbladian definition of phase divides the set of states into equivalence classes and provides a partial order among these classes (see \cref{Fig:Phase_Diagram} as an example).
The definition is clearly reflexive and symmetric, but it can also be shown to be transitive: if $\rho_0 \rightarrow \rho_1$ and $\rho_1 \rightarrow \rho_2$, then there exists a local, time-independent Lindbladian which takes $\rho_0 \rightarrow \rho_2$ (see \cite[Sec. 3.2]{Coser_Perez-Garcia_2019} for an explicit proof of this). 
Therefore the Lindbladian definition of phase establishes equivalence classes of states --- each phase is such an equivalence class.
Although not proven, intuitively one should think of the ordering of these equivalence classes as physically corresponding to more highly ordered phases --- more ``ordered'' phases should be harder to create. 
For example, with an Ising model Hamiltonian, we should be able to rapidly go from a highly ordered phase (e.g. ferromagnetic phase) to a disordered phase (e.g. high temperature phase), but should not expect the reverse to be true.
Generating the ferromagnetic phase requires spins to be aligned everywhere, and hence requires long-ranged correlations.

\subsection{Parameterising Phases of Matter}

Given a phase of matter that we wish to learn, we need an efficient way of referring to the states in the phase.
With this in mind we arbitrarily choose a state $\refrho$ in the phase to be our reference state, and we will define all other states with respect to this state in terms of the Lindbladian and the time to reach them under that Lindbladian.
From the transitive property of the definition of phase, the choice of reference state does not matter.
As such we can parameterise the states in the phase as:
\begin{align} \label{Eq:Phase_State_Description}
    \rho(\L,\omega, t) \coloneqq e^{t\L}(\refrho \otimes \omega ),
\end{align}
where $t\geq 0$.
Thus the question of parameterising the phase becomes a question of parameterising the Lindbladian.
We allow for each local term in the Lindbladian to be parameterised in both a continuous and discrete manner, such that $ \L^{q}(x) \coloneqq \sum_{j=1}^m \L^{q_j}_j(x_j)$, $x_j\in \mathbb{R}^{\ell}$, $q_j\in [Q]^\ell$, and where $Q$ is discrete labelling of local terms. 
The subset of $\mathbb{R}^\ell$ that $x_j$ is restricted to is determined by the parameters of the phase.
More formally:
\begin{definition}[Efficiently Parameterisable Phase] \label{Def:Efficiently_Parameterised}
    A set of states which form a phase of matter, in the sense of \cref{Def:Phase_of_Matter}, is \emph{efficiently parameterisable} if the set of all Lindbladians describing the phase, in the sense of \cref{Eq:Phase_State_Description}, can be efficiently parameterised. 
    We will assume the set of such Lindbladians can be written as $\{\L^{q}(x)\}_{x,q}$, $x\in \Phi \subseteq \mathbb{R}^m$, $q\in [Q]^m$ with 
    \begin{align*}
        \L^{q}(x) \coloneqq \sum_{j=1}^m \L^{q_j}_j(x_j) ,
    \end{align*}
    where each $\L^{q_j}(x_j)$ is a local term, and $x_j\in \mathbb{R}^{\ell}$, $q_j\in [Q]^\ell$ for $\ell=O(1)$.
    Here $x\in [-1,1]^m$, is a continuous parameterisation of the Lindbladian, and $q\in [Q]^m$ is a discrete labelling of the Lindbladian terms.
    We further assume that the auxiliary state $\omega$ is in a finite set of states $\omega\in \{\omega_i \}_{i=1}^W$.
    Here $m=O(n)$ and $Q, W = O(1)$.
\end{definition}
We note that for a phase of matter to be efficiently learnable, it must at least be efficiently parameterisable, otherwise there is no way of efficiently describing the state that we're interested in learning about. 
Thus restricting to the set of efficiently parameterisable phases is necessary.
There may be other ways of efficiently parameterising Lindbladians compared to \cref{Def:Efficiently_Parameterised}, in which case we expect our results to still hold.
\Cref{Def:Efficiently_Parameterised} should thought of similarly to how a gapped ground state phase can be efficiently described by the parameterising the corresponding set of Hamiltonians.

\section{Main Results}
\label{Sec:Main_Results}

The set up is as follows: given a phase of matter as per \cref{Def:Phase_of_Matter} and \cref{Condition:Compatability_Condition}, we wish to predict expectation values of local observables everywhere in the phase with high probability using information from a limited number of samples drawn from points in the phase.
An instance where this may be interesting is if these samples are sets of states that are expensive to prepare in a laboratory, and hence we want to minimise the number that we have to prepare. 
Alternatively, they maybe be the output of a similarly expensive quantum (or classical) computation.
Being able to solve \cref{Problem:Learning_Phases} (formulated below) would allow us to learn the observable $O$ at points in the phase without doing the actual work of preparing the state.
In general, there may be regions of the parameter space that are harder to reach in terms of state preparation or computational effort.

\begin{problem}[Learning Phases of Matter]
\label{Problem:Learning_Phases}
Assume we can parameterise the Lindbladians defining phase of matter as $\L(x), \ x\in \Phi \subseteq \mathbb{R}^m$, where $\Phi$ defines the set of parameters within the phase.
Given $N$ samples $\{\rho(x_i, \omega, \tau_i)\}_{i=1}^N$ of a quantum state, drawn from different points of the parameter-space of a phase of matter from some probability distribution, and a local observable $O=\sum_i O_i$, predict the function
    \begin{align*}
        f_O(\L, \omega ,t) \coloneqq \tr[O \rho(\L, \omega, t) ],
    \end{align*}
everywhere in the phase of matter.
\end{problem}
\noindent Although the problem, as phrased here, only applies to linear functions of the state, it is possible to extend this to consider non-linear functions by taking tensor products.
The phase $\Phi \subseteq \mathbb{R}^m$ is simply a set of parameters describing Lindbladians where the Lindbladian definition of phase holds, e.g. it may be something such as $\Phi = [-1,1]^m$.

\medskip

 We give a learning algorithm, which, if given samples taken from a sufficiently anti-concentrated distribution across the phase, allows us to predict any local observable $O$ with high probability.

\begin{theorem}[Learning Algorithm for Phases of Matter (Informal)]\label{Theorem:Main_Result}
Let $n$ be the system size.
With the conditions of the previous paragraph, given a set of $N$ samples $\{(x_i, \tau_i),\tilde{\rho}(\L( x_i), \tau_i)\}_{i=1}^N$, where $\tilde{\rho}(\L(x_i),\tau_i)$ can be stored efficiently classically, and $N = \mathcal{O} \big(\log\big(\frac{M}{\delta}\big)\,\log\big(\frac{n}{\delta}\big) e^{\operatorname{polylog}(\epsilon^{-1}) } \big)$,
 there exists an algorithm that, on input $x\in\Phi \subseteq \mathbb{R}^m, t \geq 0$ and a local observable $O=
\sum_{i=1}^M O_i$, produces an estimator $\hat{f}_O$
such that, with probability $(1-\delta)$,
\begin{align*}
  \sup_{x\in[-1,1]^m}\, |f_O(\L(x), \omega,t)-\hat{f}_{O}(\L(x), \omega,t)|\le \epsilon\,\sum_{i=1}^M\|O_i\|_\infty\,.
\end{align*}
Moreover, the samples $\tilde{\rho}(\L( x_i), \tau_i)$ are efficiently generated from measurements of the states in the phase $\{\tilde{\rho}(\L( x_i), \tau_i)\}_{i=1}^N$ followed by classical post-processing which takes time $O(nN)$.
\end{theorem}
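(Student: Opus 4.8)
The plan is to reduce the global estimation problem to a family of independent local polynomial-regression problems, one per term of the observable, and to control each by combining a locality argument with classical-shadow estimation and a concentration bound. By linearity $f_O=\sum_{j=1}^M f_{O_j}$ with $f_{O_j}(\L(x),\omega,t)=\tr[O_j\,\rho(\L(x),\omega,t)]$, so it suffices to build an estimator $\hat f_{O_j}$ with $\sup_x|f_{O_j}-\hat f_{O_j}|\le\epsilon\|O_j\|_\infty$ and set $\hat f_O=\sum_j\hat f_{O_j}$; the triangle inequality then yields the stated bound and a union bound over the $M$ terms supplies the $\log(M/\delta)$ factor.

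The key structural step is to show that each $f_{O_j}$ depends, up to error $\tfrac{\epsilon}{3}\|O_j\|_\infty$, on only $k=\polylog(1/\epsilon)$ of the parameters together with a \emph{truncated} time. I would split into two regimes. For $t\ge t^\ast=O(\log(1/\epsilon))$, \cref{Condition:LOcal_Rapid_Mixing} (shown in \cref{Sec:Learning_Lindbladian_Phases} to follow from \cref{Def:Phase_of_Matter} and \cref{Condition:Compatability_Condition}) lets me replace $f_{O_j}(\L(x),\omega,t)$ by the steady-state value $\tr[O_j\,\rho_\infty(\L(x))]$ up to this error. For $t<t^\ast$ a Lieb-Robinson bound on the Heisenberg evolution $e^{t\L^\dagger}(O_j)$ confines the evolved observable, up to exponentially small tails, to a ball of radius $O(\log(1/\epsilon))$ around $\supp(O_j)$. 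In either regime the value is exponentially insensitive to the Lindbladian terms outside that ball --- for the steady state this insensitivity is exactly what \cref{Condition:Compatability_Condition} provides --- so only the $k=\polylog(1/\epsilon)$ parameters inside the ball, and $\min(t,t^\ast)$, are relevant.

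On this bounded domain of $k+1$ variables the map is analytic with derivatives bounded by the local norms of the Lindbladian terms, hence uniformly approximable to error $\tfrac{\epsilon}{3}\|O_j\|_\infty$ by a polynomial of degree $D=\polylog(1/\epsilon)$; the number of monomial features is $F=\binom{k+D}{D}=2^{\polylog(1/\epsilon)}$. The stored samples $\tilde\rho(\L(x_i),\tau_i)$ are classical shadows, from which unbiased estimates of $f_{O_j}$ at the sampled points are read off in time $O(n)$ per sample (giving the $O(nN)$ post-processing), with the $O(1)$ local shadow norm of each $O_j$ bounding the estimator variance. I would then fit the $F$ coefficients by empirical risk minimisation (a LASSO over the monomial features) on the $N$ samples: anti-concentration of the sampling distribution lower-bounds the smallest eigenvalue of the feature covariance, converting the in-sample error into the uniform $\sup_x$ bound, and a Bernstein/matrix concentration argument shows that $N=F\cdot\log(n/\delta)\log(M/\delta)=e^{\polylog(1/\epsilon)}\log(n/\delta)\log(M/\delta)$ samples suffice, the two logarithmic factors coming from union bounds over the $M$ terms and over the $O(n)$ lattice regions together with the attendant shadow concentration.

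The main obstacle is the locality step: one must patch the finite-time Lieb-Robinson regime to the large-time steady-state regime \emph{uniformly in $t\ge0$ and $x$}, and in particular prove that $\tr[O_j\,\rho_\infty(\L(x))]$ is exponentially insensitive to distant parameters, which is precisely the content of \cref{Condition:Compatability_Condition} and its consequence \cref{Condition:LOcal_Rapid_Mixing}. Closing the error budget with only $\polylog(1/\epsilon)$ --- rather than $\polylog(n)$ --- effective parameters is the crux. A secondary difficulty is upgrading the average-case regression guarantee to the $\sup_x$ bound, for which the anti-concentration hypothesis on the sampling distribution is indispensable.
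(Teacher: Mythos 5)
Your localisation step coincides with the paper's: both arguments combine dissipative Lieb--Robinson bounds with the rapid-mixing/compatibility conditions to show that $f_{O_j}$ depends, up to $\epsilon$, only on the $\polylog(1/\epsilon)$ parameters near $\supp(O_j)$ and on a time truncated at $t_\epsilon=O(\log(1/\epsilon))$ (this is \cref{Corollary:Localised_Lindbladian_All_Times} together with the $t\ge t_\epsilon$ reduction in \cref{Prop:Approximating_Observables_General}; the paper handles all $t$ at once by splitting the Duhamel integral at a distance-dependent time $t_0(d)$ rather than by your two-regime split, but the content is the same). Where you genuinely diverge is the estimation step. The paper does no regression at all: it partitions the local parameter space $[-1,1]^{m_r}\times[0,t_\epsilon]$ into cubes of side $\gamma$, uses a coupon-collector argument to ensure every cube receives $q$ samples, and outputs a nearest-neighbour/empirical average of the classical shadows in the target's cube. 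This needs only the Lipschitz-type stability bound of \cref{Lemma:Time_Evolution_Difference}, which is uniform in $t$, and it is what makes the estimator "just an average". Your LASSO over a monomial feature basis is closer to the Lewis--Huang--Preskill strategy for gapped ground states; it buys a smoother, globally defined predictor at the cost of extra machinery (conditioning of the degree-$D$ monomial basis on $[-1,1]^k$, and the passage from an $L_2$ in-sample guarantee to the $\sup_x$ bound, which costs a factor of the sup-norm of the feature vector and really does require the anti-concentration hypothesis you flag). Both routes land on $N=2^{\polylog(1/\epsilon)}\log(n/\delta)\log(M/\delta)$.

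The one claim you should repair is the assertion that the localised map is analytic with derivatives bounded by the local Lindbladian norms, uniformly in $t$ (including the steady-state limit). The paper establishes only first-order (Lipschitz) stability, $|f_O(\L,t)-f_O(\L',t)|\le c(|A|)\norm{O_A}\norm{E^*}_{\infty\to\infty,cb}$, uniformly in $t$; nothing in the hypotheses gives control of higher derivatives of $x\mapsto\tr[O_j\,\rho_\infty(\L(x))]$, and steady states need not depend analytically on Lindbladian parameters absent a uniform spectral gap. This does not sink your argument: with only a Lipschitz constant $L$, Jackson-type approximation forces the degree up to $D=\poly(1/\epsilon)$ rather than $\polylog(1/\epsilon)$, but the feature count $\binom{k+D}{k}\le 2^{O(k\log D)}$ with $k=\polylog(1/\epsilon)$ remains $2^{\polylog(1/\epsilon)}$, so the advertised sample complexity survives. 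You should either prove the derivative bounds you invoke or rewrite that step to rely only on Lipschitz continuity.
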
 

\noindent The formal version is given in \cref{Theorem:Learning_General_Phase}. This demonstrates we only need quasi-polynomially many samples in $\frac{1}{\epsilon} $ and logarithmically many in $n$ to learn the entire phase, and the classical post-processing can be done efficiently.
We note that although our results as stated assume some parameterisation of the phase, we do not need to know this parameterisation for the learning algorithm --- we need only the promise that the parameterisation exists and is sufficiently well-behaved.
This allows us to deal with systems where we have imperfect knowledge of the system and state.
For example, if there is noise present, but we do not know its exact form, but may know it is proportional to some parameter (e.g. temperature). 

Finally, we remark that because our method ultimately uses classically stored data, and thus it also works for data which is generated and stored classically. 
For example, data from using Monte Carlo methods for quantum systems (or similarly any other classical model method such as tensor networks).

\subsection{Learning Steady State Phases}

Often the idea of a phase of matter in the sense of \cref{Def:Phase_of_Matter} is too general, and what we are actually interested in is a subset of the general case, for example, a quantum phase of matter (i.e. the ground states of a family of Hamiltonians in the same phase).
Since ground states of Hamiltonians can always be written as a steady state of a local Lindbladian evolution (see \cite{verstraete2009quantum, cubitt2023dissipative}), we can study these phases by first restricting to the subset of states which are the steady states of the Lindbladians\footnote{Thermal states can be written as fixed points of local Lindbladians in certain case, e.g. when they can be prepared as a local quantum circuit \cite{brandao2019finite}. 
Other Lindbladians which prepare Gibbs states e.g. \cite{kastoryano2013quantum,chen2023thermal} are not generally local operators.}.
We call these \textit{Steady State Phases} (see \cref{Fig:Fixed_Point_Phase} for an illustration).
We denote the steady state of a Lindbladian $\L$ as $\rhofixed(\L, \omega)$, or just $\rhofixed(\L)$ where the auxiliary state is not relevant. The following corollary then follows from \cref{Theorem:Main_Result} (for a full proof and statement, see \cref{Sec:Fixed-Point_Phases}).
\begin{figure}[h!]
    \centering
    \includegraphics[width=0.6\textwidth]{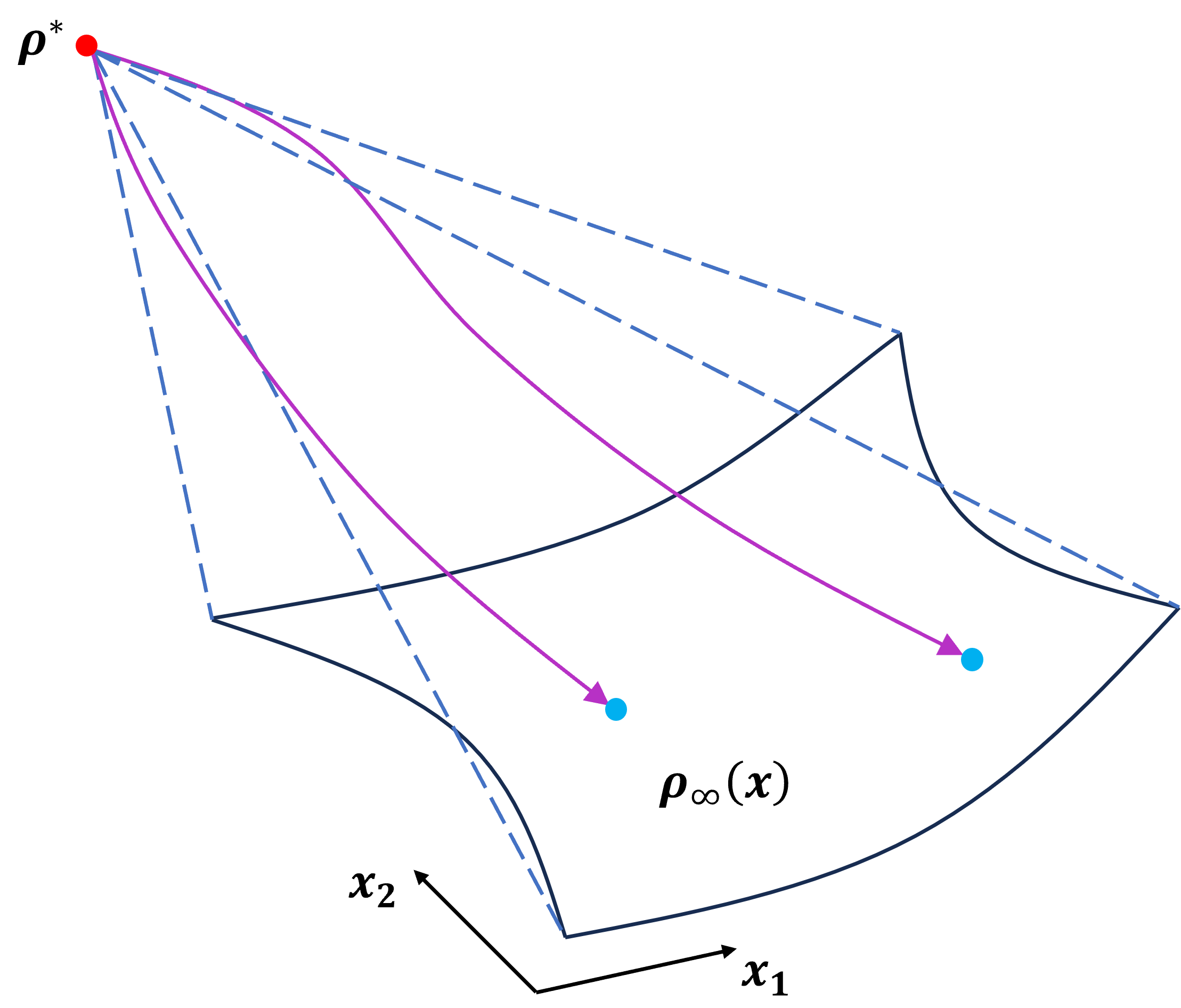}
    \caption{A schematic diagram of a Steady State Phase defined relative to a state $\rho^*$ (red circle). 
    States in the phase are represented by blue circles and the phase is represented by the surface with black outlines (parameterised by $x_1, x_2$).
    The phase is the set of states $\{\rhofixed(x)\}_x$ which are steady states of a set of Lindbladian evolutions from $\rho^*$ (the Lindbladian evolution is denoted by the purple paths).
    }
    \label{Fig:Fixed_Point_Phase}
\end{figure}

\begin{corollary}[Learning Algorithm for Steady State Phases (Informal)]\label{Theorem:Main_Result_Fixed_Points}
With the conditions of the previous paragraph and \cref{Theorem:Main_Result}, given a set of $N$ samples 
$N = \mathcal{O} \big(\log\big(\frac{M}{\delta}\big)\,\log\big(\frac{n}{\delta}\big) e^{\operatorname{polylog}(\epsilon^{-1})} \big)$,
 there exists an algorithm that, on input $x\in\Phi\subseteq \mathbb{R}^m$ and a local observable $O=
\sum_{i=1}^M O_i$, produces an estimator $\hat{f}_O$
such that, with probability $(1-\delta)$,
\begin{align*}
  \sup_{x\in[-1,1]^m}\, |f_O(\L(x))-\hat{f}_{O}(\L(x))|\le \epsilon\,\sum_{i=1}^M\|O_i\|_\infty\,.
\end{align*}
\end{corollary}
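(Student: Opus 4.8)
The plan is to deduce the steady-state statement from \cref{Theorem:Main_Result} by realising a steady state as the large-time limit of the parameterised evolution and then replacing that limit by a finite, $x$-independent truncation time. Since $\rhofixed(\L(x),\omega) = \lim_{t\to\infty} e^{t\L(x)}(\refrho\otimes\omega)$, I would first invoke local rapid mixing (\cref{Condition:LOcal_Rapid_Mixing}), which \cref{Def:Phase_of_Matter} together with \cref{Condition:Compatability_Condition} guarantees, to bound, for each local term $O_i$ of the observable,
\begin{align*}
  |\tr[O_i\,(e^{t\L(x)}(\refrho\otimes\omega) - \rhofixed(\L(x),\omega))]| \le \poly(|A_i|)\,\|O_i\|_\infty\,e^{-\gamma t}.
\end{align*}
Because each $O_i$ is supported on a region of size $\ell = O(1)$, the prefactor $\poly(|A_i|)$ is a constant, so a single truncation time $t^\star = \Theta(\gamma^{-1}\log(1/\epsilon))$ makes the right-hand side at most $\tfrac{\epsilon}{2}\|O_i\|_\infty$ simultaneously for every term. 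Summing over $i$ gives, uniformly in $x\in[-1,1]^m$,
\begin{align*}
  |f_O(\L(x),\omega,t^\star) - f_O(\L(x))| \le \tfrac{\epsilon}{2}\sum_{i=1}^M\|O_i\|_\infty,
\end{align*}
where $f_O(\L(x))\coloneqq \tr[O\,\rhofixed(\L(x),\omega)]$ is the steady-state target of \cref{Theorem:Main_Result_Fixed_Points}.

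Next I would apply \cref{Theorem:Main_Result} at the fixed time $t = t^\star$. Since $t^\star = O(\polylog(1/\epsilon))$ is independent of the system size, this is a legitimate instance of the general learning problem, and it yields an estimator satisfying $\sup_x |f_O(\L(x),\omega,t^\star) - \hat f_O(\L(x),\omega,t^\star)| \le \tfrac{\epsilon}{2}\sum_i\|O_i\|_\infty$ with probability $1-\delta$, using $N = O(\log(M/\delta)\log(n/\delta)e^{\polylog(1/\epsilon)})$ samples. Setting $\hat f_O(\L(x)) \coloneqq \hat f_O(\L(x),\omega,t^\star)$ and combining the two displays by the triangle inequality produces the claimed bound $\epsilon\sum_i\|O_i\|_\infty$.

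The one point that needs genuine care --- and which I expect to be the main obstacle --- is that a steady state of a gapless or algebraically-correlated phase carries no Lieb--Robinson light cone of its own, so the finite-support structure that the algorithm of \cref{Theorem:Main_Result} exploits is not directly available for $\rhofixed(\L(x))$. The truncation to $t^\star$ is exactly what repairs this: the time-$t^\star$ state $e^{t^\star\L(x)}(\refrho\otimes\omega)$ does possess a light cone of radius $O(t^\star) = O(\polylog(1/\epsilon))$, and local rapid mixing certifies that its $O(1)$-region marginals reproduce those of the steady state to within $\epsilon/2$. Concretely, for any region $A$ of size $O(1)$ the marginals $\tr_{A^c}[\rhofixed(\L(x),\omega)]$ and $\tr_{A^c}[e^{t^\star\L(x)}(\refrho\otimes\omega)]$ agree up to $\poly(|A|)e^{-\gamma t^\star}\le \epsilon/2$ on all local observables, so the classical shadows of the steady-state samples may be substituted for the time-$t^\star$ shadows demanded by the algorithm at the cost of only the allotted error budget. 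It then remains to verify that the mixing rate $\gamma$ can be taken uniform over $\Phi$, so that a single $t^\star$ serves every $x$; this follows from the uniformity of the local Lindbladian family assumed in \cref{Def:Efficiently_Parameterised}. All remaining ingredients --- the covering net over $\Phi$, the classical-shadow concentration bounds, and the union bound over the $M$ local terms --- are inherited verbatim from the proof of \cref{Theorem:Main_Result}.
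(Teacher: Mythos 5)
Your proposal is correct, but it runs in the opposite direction to the paper. The paper does \emph{not} derive the steady-state result from \cref{Theorem:Main_Result} by truncating time; in \cref{Sec:Fixed-Point_Phases} it proves the steady-state case first and directly, via \cref{Prop:Approximating_Observables_Fixed}: the locality of the steady-state expectation values is obtained by taking the $t\to\infty$ limit of the Duhamel-type perturbation bound of \cref{Lemma:Time_Evolution_Difference} (which is uniform in $t$ because the integral is split at a Lieb--Robinson time $t_0(d)$ and the tail is controlled by local rapid mixing, \cref{Corollary:Rapid_Mixing_II}), giving \cref{Corollary:Localised_Lindbladian_All_Times}; a coupon-collector argument over a $\gamma$-net of the local parameters and the robust shadows of \cref{Prop:Robust_Shadows} then finish the proof. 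The general theorem \cref{Theorem:Learning_General_Phase} is subsequently built \emph{on top of} the steady-state machinery by adjoining the time coordinate, truncated at $t_\epsilon=O(\log(1/\epsilon))$, to the parameter space. Your route --- approximate $\rhofixed(\L(x))$ and each steady-state sample by its time-$t^\star$ counterpart to local error $\epsilon/2$ using local rapid mixing, then invoke the general theorem at the single time $t^\star$ --- is sound and yields the same asymptotic sample complexity, since collapsing the time axis to a point only removes the $(\gamma/t_\epsilon)^{-1}$ factor, which is absorbed into $e^{\polylog(1/\epsilon)}$ anyway. The one caveat is that the invocation of \cref{Theorem:Main_Result} is not quite black-box: its samples carry finite time stamps $\tau_i$ and its covering net lives on $[-1,1]^{m_r}\times[0,t_\epsilon]$, whereas your samples are steady states, so you must (as you do) reopen the proof far enough to substitute steady-state shadows for time-$t^\star$ shadows using the closeness of the $O(1)$-region marginals. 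What your approach buys is modularity and a shorter argument given the general theorem; what the paper's direct approach buys is that the steady-state locality statement (\cref{Corollary:Localised_Lindbladian_All_Times}) holds uniformly in $t$ without any truncation error, and serves as the foundation on which the time-dependent case is then built.
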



\subsection{Learning Classes of States Without Rapid Mixing}

We can of course consider learning families of states which do not necessarily satisfy the Lindbladian definition of phase or rapid mixing.
That is, we may be able to relate states by a Lindbladian evolution, but they may not satisfy \cref{Def:Phase_of_Matter} or \cref{Condition:Compatability_Condition}.
Instead, we suppose they satisfy a condition similar to \cref{Def:Phase_of_Matter}, but with a more general \textit{slow mixing condition}: 
\begin{align}\label{Eq:Slow_Mixing}
    \norm{ e^{t\mathcal{L}}(\rho_0\otimes \omega_0) - \rho_1\otimes \omega_1    }_1 \leq f(n)e^{-\gamma t}.
\end{align}
The longer mixing time allowed by this condition means that, as we move between states in the family we are considering, new correlations can potentially be introduced over long distances.
Thus we should expect the behaviour between such states to behave less smoothly.
We show that it is still possible to learn such families of states, but that we require additional resources which increase as the mixing time increases.
\begin{theorem}[Learning under Slow Mixing] \label{Theorem:Learning_Under_Local_Rapid_Mixing}
    Consider a set of states satisfying the slow-mixing assumption \cref{Eq:Slow_Local_Mixing}, then the number of samples required to learn an estimator of local expectation values (in the sense of \cref{Theorem:Main_Result}) scales as:
    \begin{align}
        N = O \left(\log\bigg(\frac{M}{\delta}\bigg)\,\log\bigg(\frac{n}{\delta}\bigg) e^{\operatorname{polylog}(f(n) / \epsilon)} \right).
    \end{align}
\end{theorem}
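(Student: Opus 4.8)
The plan is to reuse the proof of the formal main result (\cref{Theorem:Main_Result}, in its formal incarnation \cref{Theorem:Learning_General_Phase}) essentially verbatim, and to track the single quantity in that argument into which the mixing prefactor feeds. In the rapid-mixing setting the efficiency came from local rapid mixing (\cref{Condition:LOcal_Rapid_Mixing}), whose prefactor $\poly(|A|)$ is \emph{independent of the full system size} $n$; this is precisely what let us truncate the Lindbladian evolution at a time $t_\epsilon=O(\gamma^{-1}\log(1/\epsilon))$ while controlling the local error. Under the slow-mixing hypothesis we no longer have the compatibility condition and hence no $n$-independent local mixing bound, so the relevant prefactor is the global $f(n)$ appearing in \cref{Eq:Slow_Mixing}. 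The only substantive change is therefore that the truncation time grows to $t_\epsilon=\gamma^{-1}\log\!\big(f(n)/\epsilon\big)$, and every downstream quantity inherits the replacement $\polylog(1/\epsilon)\to\polylog(f(n)/\epsilon)$.

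Concretely, I would proceed in three steps. \emph{(i) Time truncation.} For a local observable $O_A$, Hölder's inequality applied to \cref{Eq:Slow_Mixing} with the operator $O_A\otimes\un$ (the ancilla being paired against the identity) gives $|\tr[(O_A\otimes\un)(e^{t\L}(\rho_0\otimes\omega_0)-\rho_1\otimes\omega_1)]|\le f(n)\,e^{-\gamma t}\,\norm{O_A}_\infty$. Choosing $t_\epsilon=\gamma^{-1}\log(f(n)\norm{O_A}_\infty/\epsilon)$ thus replaces the target value $\tr[O_A\,\rho_1]$ by the time-evolved surrogate $\tr[(O_A\otimes\un)\,e^{t_\epsilon\L}(\rho_0\otimes\omega_0)]$ up to error $\epsilon\norm{O_A}_\infty$. \emph{(ii) Spatial localisation.} Applying the Lieb--Robinson bound to the truncated evolution shows that the Heisenberg-evolved observable $(e^{t_\epsilon\L})^\dagger(O_A)$ is, up to a further error $\epsilon$, supported on a ball of radius $r_\epsilon=O(v\,t_\epsilon)=O(\log(f(n)/\epsilon))$ about $A$; hence the surrogate depends only on the $k=O(r_\epsilon^{\,d})=\polylog(f(n)/\epsilon)$ Lindbladian parameters inside this light cone, where $d=O(1)$ is the spatial dimension. \emph{(iii) Learning.} The task is now to learn a bounded, Lipschitz function of $k$ parameters on $[-1,1]^k$, for which the covering-net plus concentration argument of \cref{Theorem:Main_Result} applies unchanged: classical shadows supply the per-sample estimates of each local surrogate, and a Hoeffding/McDiarmid bound together with a union bound over the $M$ terms of $O$ gives $N=\tilde O\big(e^{O(k)}\log(M/\delta)\log(n/\delta)\big)$ samples. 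Substituting $k=\polylog(f(n)/\epsilon)$ yields the stated complexity $N=O\big(\log(M/\delta)\log(n/\delta)\,e^{\polylog(f(n)/\epsilon)}\big)$.

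The main obstacle is step (ii): the Lieb--Robinson bound must remain effective over the full truncation time $t_\epsilon$, which now grows with $f(n)$. This forces an implicit sub-exponential restriction. If $f(n)=e^{o(n)}$ then $t_\epsilon$ is sub-polynomial and the light cone of radius $r_\epsilon$ stays sublinear in the lattice diameter, so the parameter count $k$ remains sublinear and the generalisation bound is meaningful; whereas an $f(n)$ exponential in $n$ would push $t_\epsilon\sim n$, fill the lattice ($r_\epsilon\gtrsim n^{1/d}$, $k\sim n$), and render the bound vacuous — consistent with the expectation that states connected only by slowly mixing evolutions may genuinely require long-range correlations to relate. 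Beyond controlling this interplay between $t_\epsilon$ and the lattice size, the argument is a faithful copy of the rapid-mixing proof with $\polylog(1/\epsilon)$ replaced throughout by $\polylog(f(n)/\epsilon)$, so no new estimation or concentration tools are needed.
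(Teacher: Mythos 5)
Your proposal is correct and follows essentially the same route as the paper's own proof (\cref{Theorem:Slow_Mixing:Learning_General_Phase} via \cref{Corollary:Slow_Mixing_Localised_Lindbladian_All_Times}): the prefactor $f(n)$ enters only through the truncation time $t_\epsilon$ and the localisation radius $r$, both $O(\log(f(n)/\epsilon))$, so the covering/coupon-collector/shadow argument goes through with $\polylog(1/\epsilon)$ replaced everywhere by $\polylog(f(n)/\epsilon)$. The only cosmetic difference is that you obtain spatial localisation from a direct Lieb--Robinson light-cone bound applied after truncating time, whereas the paper carries $f(n)$ through a time-uniform perturbative localisation lemma with an $O(f(n))$ prefactor; both give the same radius and hence the same sample complexity.
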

\noindent We note that we do not recover \cref{Theorem:Learning_General_Phase} in the case $f(n) = \poly(n)$ as we have not also imposed the compatibility condition between systems of different sizes from \cref{Condition:Compatability_Condition}.
That is, this slow mixing learning does not necessarily require compatibility between different system sizes, and can characterise systems where there is unusual or unstable behaviour as the system grows.

\section{The Learning Procedure and Proof Outline}
\label{Sec:Learning_Algorithm}

\subsection{The Learning Procedure}

The method we use to construct an estimator is remarkably simple.
We consider learning steady state phases, but the idea generalises to the general case straightforwardly.
We consider the $N$ samples we are given from various parameters $x_i$, and for each of these samples we perform a randomised 1-local Clifford measurement.
If we record the $x_i$ and the measurement outcome, we can construct a single-measurement classical shadow for point $\widetilde{\rhofixed}(x_j)$.
These single-measurement shadows will be written as:
\begin{align*}
    \widetilde{\rhofixed}(x_j) &= \bigotimes_{i=1}^n(\ket{z_i}\bra{z_i})  \\
    \ket{z}&\in \{ \ket{0}, \ket{1}, \ket{+}, \ket{-}, \ket{+i}, \ket{-i} \}
\end{align*}
where $\ket{\pm i}$ are the eigenstates of the Pauli $Y$ operator.
The eigenstate assigned to each qubit depends on the randomised Pauli measured and the resultant measurement outcome. As explained in~\cite{Huang2020}, it is then possible to use this data to construct very efficient estimators for local properties of quantum states given i.i.d. copies. 
Later, \cite{Onorati_Rouze_Stilch_Franca_Watson_2023} extended this to the case where we are given copies of states that are close to each other. We will use that version here.

To construct our estimator for the observable $O$ on a state $\rhofixed(y)$, for some point $y$ in the phase ($y\in [-1,1]^m$), we consider the parameters $x_i$ of our samples.
We then choose a set of points $\Gamma = \{x_j\}_j$, where a sample is added to $\Gamma$ if $||x_{j|A(r)} - y_{|A(r)}||_{\ell_\infty}\leq \gamma$ for some appropriately small parameter $\gamma$ and some sufficiently large region $A(r)$.
Here $A(r)$ is a ball of radius $r=\Theta(\log(\epsilon^{-1}))$ around the support of the observable $O$, where $\epsilon$ is the precision we wish our estimator to be correct to.
We then construct an estimator:
\begin{align*}
    f_O(y) = \frac{1}{|\Gamma|}\sum_{j\in \Gamma} \tr[O\widetilde{\rhofixed}(x_j)],
\end{align*}
A flowchart of the learning and predictions state is given in \cref{Fig:Algorithm_FLowchart}.

\begin{figure*}[h!]
\includegraphics[width=\linewidth]{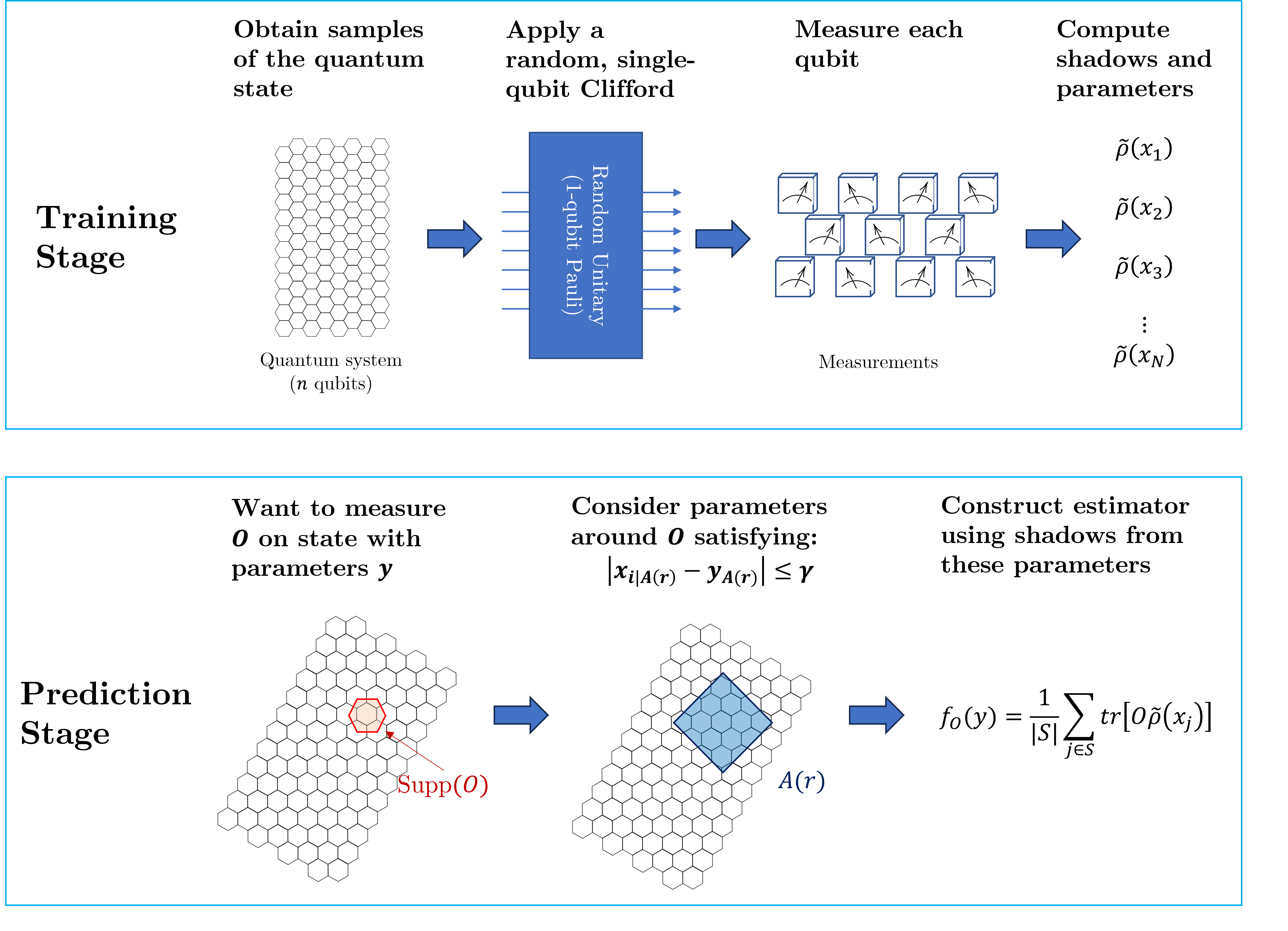}
\caption{Schematic representation of our algorithm to learn phases of matter. The training stage just consists of collecting shadows corresponding to various parameters. In the prediction stage, given an observable $O$ and corresponding parameter $y$ supported on a region $\text{Supp}(O)$, we search for parameters $x^i$ we sampled that have parameters close to $y$  on an enlarged region around $A$ and compute the expectation value of $O$ on the corresponding shadows. The prediction is then a median of means estimate on the values. Note that no machine-learning techniques are required for the estimate.
}
\label{Fig:Algorithm_FLowchart}
\end{figure*}

\subsection{Proof Outline}
Here we give a brief outline of the proof of learning of steady state phases (\cref{Theorem:Main_Result_Fixed_Points}).
The proof for general Lindbladian phases follows in a similar manner.

\ \newline
\textbf{Localising the Physics of the System:} \ 
Consider two steady states $\rhofixed(x), \rhofixed(x')$ of Lindbladians $\L(x),\L(x')$.
If these two points are in the same phase, then there is a rapidly mixing path from a reference state to both of them under $\L(x),\L(x')$ respectively.
The fact $\L(x),\L(x')$ are local Lindbladians means that they have associated Lieb-Robinson bounds --- a finite speed of propagation of information in the system.
In simple terms, Lieb-Robinson bounds say that if $\L_{A(r)}$ is the restriction of $\L$ to the region $A(r)$, then the evolution of an operator at the centre of $A(r)$ follows:
\begin{align*}
    \norm{e^{t\L^*}(O) - e^{t\L_{A(r)}^*}(O) }\leq \norm{O}|\supp(O)|J \frac{(e^{vt}-1-vt)e^{-\beta r}}{v} 
\end{align*}
for appropriate constants $\beta, v$, and where $J$ characterises the strength of the Lindbladian.
The full statement is in \cref{Lemma:Localised_Evolution}.

This, plus the rapid mixing condition, implies that the long-range physics $\rhofixed(x), \rhofixed(x')$ must be similar for both states as the Lindbladian evolution does not have time to change the long-range correlations.
Thus, given an expectation value of some local observable $A$ measured on $\rhofixed(x), \rhofixed(x')$ respectively, we should only expect the difference between $\tr[O\rhofixed(x)], \tr[O\rhofixed(x')]$ to depend on the parameters of $x, x'$ describing the system geometrically close to the support of $O$.

The proof uses techniques from \cite{Cubitt_Lucia_Michalakis_Perez-Garcia_2015} to show that steady states of Lindbladians which are close in parameter space also share similar properties, and look the same on a local section of the lattice.

\ \newline
\textbf{Local Physics Implies More Efficient Local Parameterisation:} \ 
We now want to characterise how large the region around the support of $O$ we need to consider in order to parameterise $\tr[\rhofixed(x)O]$.
We see that the rate of rapid mixing gives the answer to this, and that the effect of parameters away from the support of $O$ decays rapidly, with a spatial decay rate depending on the mixing time.
This tell us that even if $x,x'$ are very different overall, as long as they are close in the region around $O$, then $\tr[\rhofixed(x)O], \ \tr[\rhofixed(x')O] $ will have similar values.

\ \newline
\textbf{Sampling from the Phase and Constructing the Estimator:} \ 
We now assume we are given $N$ quantum states sampled from points $Y_1,\dots, Y_N\sim \mathcal{D}$ from the phase, where $\mathcal{D}$ is a sufficiently anti-concentrated distribution (i.e. that attributes sufficient mass to all points).
For a given observable $O_i$, we then estimate the reduced steady state states over large enough enlargements $S_i\partial $ of the observable supports 
$\mathcal{S}_i:=\{x_j|\,\operatorname{supp}(L_j(x_j))\cap S_i\partial\ne \emptyset\}\cap [x-\epsilon,x+\epsilon]^m$.


Assuming an anti-concentration property of the distribution $\mathcal{D}$ which our samples are selected from, the probability that a small region $\mathcal{S}_i\partial$ in parameter space contains $t$ variables $Y_{i_1},\dots, Y_{i_t}$ becomes large for large enough $N$. 
We then run the classical shadow tomography protocol on those states separately in order to construct efficiently classically describable and computable product matrices $\widetilde{\rhofixed}( Y_1),\dots,\widetilde{\rhofixed}(Y_N)$ \cite{Huang2020}. 
Then for any region of the lattice $S_i$, we select the shadows $\widetilde{\rhofixed}(Y_{i_1}),\dots \widetilde{\rhofixed}(Y_{i_t})$ whose local parameters are close to that of the target state and construct the empirical average $\widetilde{\rhofixed}_{S_i }(x):=\frac{1}{t}\sum_{j=1}^t\,\tr_{S_i^c}\big[\widetilde{\rhofixed}(Y_{i_j})\big]$.

Since we are promised that the states are in the same phase, as per \cref{Def:Phase_of_Matter}, then there must be a rapidly mixing Lindbladian between them, and hence as discussed in the previous paragraph, they must look similar locally.
The estimator $\hat{f}_O$ is then naturally chosen as an empirical average $\hat{f}_O(x):=\sum_{i=1}^M\,\tr[O_i\,\widetilde{\rhofixed}_{S_i }(x)]$.

\section{Relation to Learning from Other Definitions of Phase}
\label{Sec:Other_Phase_Definitions}

\subsection{Gapped Hamiltonian \& Circuit Definitions of ``Phase of Matter''}
\label{Sec:Circuit_Definition}

An important subset of cases we are interested in is the case of ground states of Hamiltonians.
Notably, for gapped ground states, a ``phase of matter''  can be described either as a the set of ground states that can be smoothly moved between in constant time using the adiabatic theorem, or by a constant depth circuit where the unitaries have polylog locality. 
These two definitions can be shown to be equivalent.

Importantly for our purposes, the Lindbladian definition of phase encompasses the circuit definition of phase.
That is, if two states are related by a regular path of Hamiltonians $H(s)$, such that for all $s$ the Hamiltonian is gapped, then these two states also satisfy the Lindbladian definition of phase \cite[Section 3]{Coser_Perez-Garcia_2019}.
Thus we get as a corollary:
\begin{corollary}[Efficient Learning of Gapped Ground State Phases] \label{Corollary:Ground_State_Learning}
    Consider the ground states of a Hamiltonian $H(x)$, where $\Phi \subseteq \mathbb{R}^{m}$, where for $x\in \Phi$, the spectral gap of $H(x)$ is lower bounded by an $O(1)$ constant (i.e. $\Phi$ characterises a gapped phase of matter). 
    Then the family of ground states are in the same phase as per the Lindbladian definition, \cref{Def:Phase_of_Matter}, and therefore can be learned with a number of samples scaling as $N = O( \log(n/\delta)\log(M/\delta)2^{\polylog(1/\epsilon)})$.

    Furthermore, the same scaling holds for phases of matter defined by states which are related by an $O(1)$ depth circuit with unitaries with $O(\log(n))$ locality.
\end{corollary}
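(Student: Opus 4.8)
The plan is to prove the corollary as a reduction rather than from scratch: I would show that a gapped ground-state phase (and likewise a phase defined by a shallow circuit) is a special case of a Lindbladian phase in the sense of \cref{Def:Phase_of_Matter} that additionally obeys the compatibility condition \cref{Condition:Compatability_Condition} and is efficiently parameterisable in the sense of \cref{Def:Efficiently_Parameterised}. Once these three hypotheses are checked, the formal main theorem \cref{Theorem:Learning_General_Phase} applies as a black box and the stated sample complexity $N = O(\log(n/\delta)\log(M/\delta)2^{\polylog(1/\epsilon)})$ is inherited verbatim. All of the work is therefore in verifying the hypotheses.

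First I would establish membership in the Lindbladian class. Along a gapped path $H(x)$ the spectral-flow (quasi-adiabatic) construction produces a quasi-local unitary $U(x)$ carrying a fixed reference ground state $\refrho$ to $\rho(x)$, with tails whose decay is controlled by the $O(1)$ gap; this is exactly the finite-depth/quasi-local-circuit characterisation of the phase, and its equivalence to the gapped-path picture is the statement I would import from \cite{chen2010local, bachmann2012automorphic} and \cite[Sec.~3.1]{Coser_Perez-Garcia_2019}. To turn this unitary into a time-independent local Lindbladian, I would conjugate a trivial ``reset'' generator $\L_0$ — one that independently damps each qudit towards a fixed product state $\sigma$, so that $\sigma$ is its unique, $O(1)$-rapidly-mixing steady state — by $U(x)$, setting $\L(x) := U(x)\,\L_0\,U(x)^\dagger$. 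Since $U(x)$ has $O(1)$ depth (resp.\ $O(1)$ depth with $O(\log n)$-local gates), each jump operator is supported on a lightcone of size $O(1)$ (resp.\ $\polylog(n)$), so after truncating the exponential tails at $\poly(n)$ cost $\L(x)$ is geometrically (quasi-)local; conjugation leaves the spectrum unchanged, so its steady state is $\rho(x)$ and it mixes with rate $\gamma = \Omega(1)$ up to a $\poly(n)$ prefactor. Invertibility of $U(x)$ supplies the reverse evolution, giving $\refrho \longleftrightarrow \rho(x)$, and transitivity (via \cite[Sec.~3.2]{Coser_Perez-Garcia_2019}, as recalled in \cref{Sec:Circuit_Definition}) upgrades this to $\rho(x) \longleftrightarrow \rho(x')$ for any pair in the phase. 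Because the flow generator depends smoothly on $x$ and decomposes as a sum of $O(1)$-local (resp.\ $\polylog$-local) terms, the conjugated family $\{\L(x)\}_x$ fits the form $\sum_j \L_j(x_j)$ of \cref{Def:Efficiently_Parameterised} with $m = O(n)$, the discrete labels recording the truncated lightcone pattern of each term.

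The hard part, which I would treat as the genuine obstacle rather than a formality, is verifying \cref{Condition:Compatability_Condition}: that the reduced steady state on a region $A$ is insensitive — up to $\poly(|R|)e^{-\gamma t}$ — to the Hamiltonian and boundary data far outside $A$, uniformly in system size. For gapped ground states this is precisely the local topological quantum order / local indistinguishability property that underlies the stability of gapped phases, and it follows from combining the $O(1)$ gap with Lieb-Robinson bounds. Concretely I would adapt the steady-state stability machinery of \cite{Cubitt_Lucia_Michalakis_Perez-Garcia_2015} (the same input used for general Lindbladian phases in \cref{Sec:Learning_Lindbladian_Phases}) to the conjugated Lindbladians $\L(x)$, showing that their reduced fixed points on $A$ agree across window sizes up to exponential corrections. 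The delicate point here is controlling the interplay between the exponential tails of $U(x)$ and the restriction of the dynamics to a finite window $R$, so that truncation does not spoil the $\gamma = \Omega(1)$ decay; this is also what certifies the local rapid mixing of \cref{Condition:LOcal_Rapid_Mixing} that the estimator ultimately relies on.

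With \cref{Def:Phase_of_Matter}, \cref{Condition:Compatability_Condition} and \cref{Def:Efficiently_Parameterised} all in place, \cref{Theorem:Learning_General_Phase} yields the claimed bound directly. The $O(1)$-depth/$O(1)$-local and $O(1)$-depth/$O(\log n)$-local circuit cases are handled by the same argument; the only difference is the $O(1)$ versus $\polylog(n)$ locality of the constructed $\L(x)$, and since the learning theorem already tolerates $\polylog(n)$-local Lindbladians the sample complexity is identical in both cases.
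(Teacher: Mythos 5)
Your overall strategy coincides with the paper's: the paper proves this corollary purely by reduction, citing \cite[Sec.~3]{Coser_Perez-Garcia_2019} for the fact that gapped ground-state phases (equivalently, shallow-circuit phases) are Lindbladian phases in the sense of \cref{Def:Phase_of_Matter}, and then invoking the main learning theorem. Your identification of the three hypotheses to check, and your observation that \cref{Condition:Compatability_Condition} is the genuinely non-trivial one (the paper does not verify it explicitly for this case), is if anything more careful than the paper's treatment.

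However, the explicit construction you offer for the embedding fails as stated for non-trivial phases. If $\L_0$ damps every qudit to a fixed product state $\sigma$, then the unique steady state of $U(x)\,\L_0\,U(x)^\dagger$ is $U(x)\sigma U(x)^\dagger$, the image of a product state under a shallow circuit. The spectral-flow unitary satisfies $\rho(x)=U(x)\refrho U(x)^\dagger$ with $\refrho$ the reference \emph{ground state}, which for a non-trivial (e.g.\ topologically ordered) phase is not of the form $U\sigma U^\dagger$ for any shallow $U$ and product $\sigma$; so your $\L(x)$ prepares the wrong state unless the phase is trivial, and the conjugation trick only proves the corollary for phases connected to a product state. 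The construction in \cite{Coser_Perez-Garcia_2019} avoids this by exploiting the ancillary system permitted in \cref{Def:Phase_of_Matter}: the quasi-adiabatic circuit is implemented dissipatively, layer by layer, by a local time-independent Lindbladian acting jointly on the system and local ancilla registers, rather than by conjugating a reset generator. If you replace your conjugation step by that ancilla-assisted implementation (or simply import the statement wholesale, as the paper does), the rest of your argument --- truncation of quasi-local tails, transitivity for the two-directional requirement, efficient parameterisation with $m=O(n)$, and the appeal to \cref{Theorem:Learning_General_Phase} --- goes through and matches the paper.
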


We also note that the circuit definition of ``phase of matter'' goes beyond just the set of ground states of gapped Hamiltonians, and can more generally be used to characterise equivalence classes of states \cite{piroli2021quantum}.
Since the above \cref{Corollary:Ground_State_Learning} holds for $\polylog(n)$ depth circuits with $O(1$) local gates, we see that our learning result applies to equivalence classes of states defined by these local unitaries.
Indeed, it allows learning of phases which correspond to gapless phases of matter, provided the states satisfy the above circuit definition of phase of matter.

\subsection{Examples for Rapidly Mixing Systems}

There are some cases where we can relate the Lindbladian definition of phase to physical examples.
It is known that there is a finite depth circuit relating the low-temperature 4D Toric code to its ground state \cite{hastings2011topological}. 
We can map this construction a time-independent Lindbladian to show that the low-temperature Gibbs state and ground state of the 4D Toric code are in the same phase.

More generally, bounding mixing times for Lindbladians is a notoriously difficult task, which makes it hard to put the Lindbladian definition of phase in contact with physical systems (see the introduction of \cite{capel2020modified} for a discussion of this for thermal phases of matter). 

We point out that for commuting Hamiltonians in 1D, all states mix rapidly in $\log(n)$ time under an evolution generated by the Davies generator, showing that the Lindbladian definition agrees with the standard definition of phase which says that there are no thermal phase transitions in 1D systems \cite{bardet2023rapid}.
At high-temperatures (i.e. above a phase transition) it is known that one can rapidly mix between Gibbs states of the classical 2D Ising model under Glauber dynamics \cite[Chapter 15]{levin2017markov},
and similarly if a uniform external magnetic field is applied \cite{martinelli19942}, which also holds true for systems satisfying strong spatial mixing.
An $O(\log(n/\epsilon))$ mixing time also holds provided certain conditions on the Gibbs state are satisfied \cite{fischer2018simple}.

However, these authors know of very little literature showing mixing times between states which are in the same, non-trivial phase (e.g. low-temperature thermal states).
\cite{martinelli2003ising} demonstrates that if the boundary conditions are fixed, the Ising model on a tree mixes in time $O(n\log(n/\epsilon))$ under Glauber dynamics (which morally could be considered $O(\log(n/\epsilon))$ under parallelised Glauber dynamics).
Since fixed boundary conditions can be considered long-ranged correlations, this arguably constitutes the states being in the same phase.

Chen et al. \cite{chen2023thermal} demonstrate a Lindbladian which maps to Gibbs states, for which they expect to the spectral gap to be constant for high temperature states, or states in the same phase, hence we expect might expect $\log(n/\epsilon)$ mixing times\footnote{Technically the spectral gap of the Lindbladian does not always characterise the mixing time, but suggests a fast mixing time.}.
However, the Lindbladian here is not local, and our results do not direct apply here.
\cite{brandao2019finite} demonstrate that certain classes of thermal states satisfying that correlations are exponentially decaying, and that erased patches can be recovered locally, can be prepared using log-depth local quantum circuits suggesting these are in the high-temperature phase by the Lindbladian definition.

More generally, we can always start from a reference state and consider the family of states which is generated by evolving that state for a finite time by a family of parametrised local Lindbladians. 
This includes families of states related by finite-depth quantum circuits.
Thus, in principle, the learning results in this work can be applied to gapless phases of matter or a reference state with algebraically decaying correlations where this condition holds.

%

\section{Conclusions and Discussion}
\label{Sec:Conclusions}

We have shown that it is possible to sample-efficiently learn observables in phases of matter defined under the Lindbladian definition of phase. 
This definition of phase not only covers previously investigated phases of matter such as gapped ground states phases, but allows us to generalise to other phases of matter including gapless phases as matter, phase defined by mixed states, such as those with noise or some contact with an external system, and at least some thermal phases.
The Lindbladian definition of phase is a very natural definition which incorporates intuitive aspects of phase, including \textit{(i)} stability under perturbations \textit{(ii)} similarity of correlations \textit{(iii)} a partial ordering of states.
Despite this generality, our methods allow for sample efficient learning which is similar to \cite{Onorati_Rouze_Stilch_Franca_Watson_2023} and \cite{Lewis_Huang_Preskill_2023}, and better than \cite{huang2021provably}.
Furthermore, the estimators constructed here are simple empirical averages, and are both conceptually and computationally easy to implement. Thus, given both its generality, power to encompass a variety of physical systems and technical simplicity, we believe that the framework laid out in this work has the potential to find applications in a variety of settings at the intersection of machine learning and quantum many-body systems.

\subsection{Future Work and Related Questions}

\noindent \textbf{Lindbladian Phases in Physical Systems.}
Currently, there are still some gaps in our understanding of which families of states constitute a phase under the Lindbladian definition of phase. 
Although we can construct Lindbladians for which the gapped ground state phases satisfy the rapid mixing condition, in other cases (e.g. thermal phases) this is not so clear.
Greater investigation into this definition is warranted, although proving rigorous bounds on mixing times of Lindbladians is notoriously difficult.
We expect that Gibbs states and thermal phases can generally be described by the Lindbladian definition of matter, but it remains an open question to establish this rigorously.

\ \newline
\noindent \textbf{Beyond Lindbladian Phases.}
Beyond the definition of Lindbladian phases, there exist other definitions of \textit{phase of matter} which may be physically relevant and be worth investigating \cite{altland2021symmetry, de2022symmetry, molignini2023topological, ruiz2022matrix, piroli2021quantum, liu2023dissipative}.
    Currently it is unclear how exactly these different definition relate to each other.
    We speculate that any definition of phase characterised by rapid mixing between states under a geometrically local evolution should satisfy results similar to the ones given here.
    Furthermore, is would be interesting to see if the Lindbladian definition of phase can be proven to be the same/different from the steady state definition of phase introduced in \cite{rakovszky2023defining}.
    Indeed, we expect that at the very least, one can learn local patches of phases defined in \cite{rakovszky2023defining}.

        \ \newline
    \noindent \textbf{Learning Phase Boundaries.}
    The results in this work are concerned with learning a known phase.
However, in general, determining where the boundaries of a phase diagram are is not a trivial task.
For the Hamiltonians of \cite{Cubitt_Lucia_Michalakis_Perez-Garcia_2015, Bausch_Cubitt_Lucia_Perez-Garcia_Wolf_2017, Bausch_Cubitt_Watson_2021, watson2021complexity}, the families of Hamiltonians are parameterised by a single parameter either $\varphi\in [0,1]$ or $\varphi\in [0,1] \cap \mathbb{Q}$, but determining the position of the boundary is at least as hard as $\textsf{QMA}_{\textsf{EXP}}$ (in the case of a finite-sized system) or is undecidable (in the thermodynamic limit).
Moreover, common techniques for approximating phase boundaries must fail due to these complexity/computability results \cite{watson2022uncomputably}.

Huang et al. \cite{huang2021provably} show that for a finite-sized system there is a provable sample-efficient advantage. 
They demonstrate one can use Support Vector Machine techniques and samples from the separate phases to efficiently learn the phase boundary with polynomially many samples, \textit{provided there is a local order parameter}.
The case where there no samples are provided is $\textsf{P}^{\textsf{QMA}_{\textsf{EXP}}}$-complete in general \cite{watson2021complexity}.
For the general case of no local order parameter, the sample complexity remains an open question, and for many topological phases is believed to be exponential.

    \ \newline
    \noindent \textbf{Further Learning Techniques.} The learning algorithm in this work generates an estimator for the local observable of interest by an empirical average.
    We conjecture that if one uses the techniques used here to prove smoothness of local observables and concentration of measure in combination with more advanced techniques from machine learning, then one may well be able to greatly improve the sample complexity required in practice.



    \section*{Acknowledgements}
	
    The authors gratefully recognise useful discussions with
    \begingroup
    \hypersetup{urlcolor=navyblue}
    \href{https://orcid.org/0000-0003-2990-791X}{David P\'erez-Garc\'ia} on the topic of Lindbladian phases of matter, and for suggesting \cref{Condition:Compatability_Condition} as an added requirement for a reasonable definition of ``phase of matter''.
    We thank \href{https://orcid.org/0000-0002-0335-9508}{Victor Albert} for his useful comments, \href{https://orcid.org/0000-0001-8121-2977}{Andrew Guo} for helpful discussions concerning mixing of Lindbladians, and \href{https://orcid.org/0000-0001-7898-0211}{Cheng-Ju (Jacob) Lin} for discussions about the definition of mixed state phase.
    The authors also appreciate discussions about the definition of phase for open systems with \href{https://orcid.org/0009-0008-7293-5952}{Tibor Rakovszky},  \href{https://orcid.org/0000-0002-7493-7600}{Sarang Gopalakrishnan}, and \href{https://orcid.org/0000-0001-6235-6430}{Curt von Keyserlingk}.

	\endgroup
	
    EO is supported by the European Research Council under grant agreement no. 101001976 (project EQUIPTNT) and by the German Research Foundation DFG via the SFB/Transregio~352.
	CR acknowledges financial support from a Junior Researcher START Fellowship from the DFG cluster of excellence 2111 (Munich Center for Quantum Science and Technology), from the ANR project QTraj (ANR-20-CE40-0024-01) of the French National Research Agency (ANR), as well as from the Humboldt Foundation.
	DSF is supported by France 2030 under the French National Research Agency award number “ANR-22-PNCQ-0002”.
	JDW acknowledges support from the United States Department of Energy, Office of Science, Office of Advanced Scientific Computing Research, Accelerated Research in Quantum Computing program, and also NSF QLCI grant OMA-2120757.

\section*{Author Contribution Statement}
JDW was the primary contributor to this work.
All other authors contributed equally.

\bibliography{References}

\newcommand{\etalchar}[1]{$^{#1}$}
\begin{thebibliography}{RdAGRMPG22}

\bibitem[ABFJ16]{albert2016geometry}
Victor~V Albert, Barry Bradlyn, Martin Fraas, and Liang Jiang.
\newblock Geometry and response of lindbladians.
\newblock {\em Physical Review X}, 6(4):041031, 2016.

\bibitem[AFD21]{altland2021symmetry}
Alexander Altland, Michael Fleischhauer, and Sebastian Diehl.
\newblock Symmetry classes of open fermionic quantum matter.
\newblock {\em Physical Review X}, 11(2):021037, 2021.

\bibitem[Amb14]{ambainis2014physical}
Andris Ambainis.
\newblock On physical problems that are slightly more difficult than qma.
\newblock In {\em 2014 IEEE 29th Conference on Computational Complexity (CCC)},
  pages 32--43. IEEE, 2014.

\bibitem[BCG{\etalchar{+}}23]{bardet2023rapid}
Ivan Bardet, {\'A}ngela Capel, Li~Gao, Angelo Lucia, David
  P{\'e}rez-Garc{\'\i}a, and Cambyse Rouz{\'e}.
\newblock Rapid thermalization of spin chain commuting hamiltonians.
\newblock {\em Physical Review Letters}, 130(6):060401, 2023.

\bibitem[BCGW22]{bravyi2022quantum}
Sergey Bravyi, Anirban Chowdhury, David Gosset, and Pawel Wocjan.
\newblock Quantum hamiltonian complexity in thermal equilibrium.
\newblock {\em Nature Physics}, 18(11):1367--1370, 2022.

\bibitem[BCL{\etalchar{+}}17]{Bausch_Cubitt_Lucia_Perez-Garcia_Wolf_2017}
Johannes Bausch, Toby~S. Cubitt, Angelo Lucia, David Perez-Garcia, and
  Michael~M. Wolf.
\newblock Size-driven quantum phase transitions.
\newblock {\em Proceedings of the National Academy of Sciences},
  115(1):19–23, 2017.

\bibitem[BCW21]{Bausch_Cubitt_Watson_2021}
Johannes Bausch, Toby~S. Cubitt, and James~D. Watson.
\newblock Uncomputability of phase diagrams.
\newblock {\em Nature Communications}, 12(1), 2021.

\bibitem[BGL20]{barr2020quantum}
Ariel Barr, Willem Gispen, and Austen Lamacraft.
\newblock Quantum ground states from reinforcement learning.
\newblock In {\em Mathematical and Scientific Machine Learning}, pages
  635--653. PMLR, 2020.

\bibitem[BK19]{brandao2019finite}
Fernando~GSL Brand{\~a}o and Michael~J Kastoryano.
\newblock Finite correlation length implies efficient preparation of quantum
  thermal states.
\newblock {\em Communications in Mathematical Physics}, 365:1--16, 2019.

\bibitem[BMNS12]{bachmann2012automorphic}
Sven Bachmann, Spyridon Michalakis, Bruno Nachtergaele, and Robert Sims.
\newblock Automorphic equivalence within gapped phases of quantum lattice
  systems.
\newblock {\em Communications in Mathematical Physics}, 309(3):835--871, 2012.

\bibitem[BWP{\etalchar{+}}17]{Biamonte2017}
Jacob Biamonte, Peter Wittek, Nicola Pancotti, Patrick Rebentrost, Nathan
  Wiebe, and Seth Lloyd.
\newblock Quantum machine learning.
\newblock {\em Nature}, 549(7671):195--202, September 2017.

\bibitem[CB23]{coopmans2023sample}
Luuk Coopmans and Marcello Benedetti.
\newblock On the sample complexity of quantum boltzmann machine learning.
\newblock {\em arXiv preprint arXiv:2306.14969}, 2023.

\bibitem[CCK{\etalchar{+}}23]{chen2023thermal}
{Chi-Fang}, {Chen}, Michael~J. {Kastoryano}, Fernando G.~S.~L. {Brand{\~a}o},
  and Andr{\'a}s {Gily{\'e}n}.
\newblock {Quantum Thermal State Preparation}.
\newblock {\em arXiv e-prints}, page arXiv:2303.18224, March 2023.

\bibitem[CGN23]{che2023exponentially}
Yanming Che, Clemens Gneiting, and Franco Nori.
\newblock Exponentially improved efficient machine learning for quantum
  many-body states with provable guarantees.
\newblock {\em arXiv preprint arXiv:2304.04353}, 2023.

\bibitem[CGW10]{chen2010local}
Xie Chen, Zheng-Cheng Gu, and Xiao-Gang Wen.
\newblock Local unitary transformation, long-range quantum entanglement, wave
  function renormalization, and topological order.
\newblock {\em Physical review b}, 82(15):155138, 2010.

\bibitem[CLMPG15]{Cubitt_Lucia_Michalakis_Perez-Garcia_2015}
Toby~S. Cubitt, Angelo Lucia, Spyridon Michalakis, and David Perez-Garcia.
\newblock Stability of local quantum dissipative systems.
\newblock {\em Communications in Mathematical Physics}, 337(3):1275–1315,
  2015.

\bibitem[CM17]{Carrasquilla2017}
Juan Carrasquilla and Roger~G. Melko.
\newblock Machine learning phases of matter.
\newblock {\em Nature Physics}, 13(5):431--434, February 2017.

\bibitem[CPG19]{Coser_Perez-Garcia_2019}
Andrea Coser and David Pérez-García.
\newblock Classification of phases for mixed states via fast dissipative
  evolution.
\newblock {\em Quantum}, 3:174, 2019.

\bibitem[CPGSV21]{cirac2021matrix}
J~Ignacio Cirac, David Perez-Garcia, Norbert Schuch, and Frank Verstraete.
\newblock Matrix product states and projected entangled pair states: Concepts,
  symmetries, theorems.
\newblock {\em Reviews of Modern Physics}, 93(4):045003, 2021.

\bibitem[CRF20]{capel2020modified}
Angela Capel, Cambyse Rouz{\'e}, and Daniel~Stilck Fran{\c{c}}a.
\newblock The modified logarithmic sobolev inequality for quantum spin systems:
  classical and commuting nearest neighbour interactions.
\newblock {\em arXiv preprint arXiv:2009.11817}, 2020.

\bibitem[Cub23]{cubitt2023dissipative}
Toby~S Cubitt.
\newblock Dissipative ground state preparation and the dissipative quantum
  eigensolver.
\newblock {\em arXiv preprint arXiv:2303.11962}, 2023.

\bibitem[dGTS22]{de2022symmetry}
Caroline de~Groot, Alex Turzillo, and Norbert Schuch.
\newblock Symmetry protected topological order in open quantum systems.
\newblock {\em Quantum}, 6:856, 2022.

\bibitem[DPZ19]{Dong_Pollmann_Zhang_2019}
Xiao-Yu Dong, Frank Pollmann, and Xue-Feng Zhang.
\newblock Machine learning of quantum phase transitions.
\newblock {\em Phys. Rev. B}, 99:121104, Mar 2019.

\bibitem[FG18]{fischer2018simple}
Manuela Fischer and Mohsen Ghaffari.
\newblock A simple parallel and distributed sampling technique: Local glauber
  dynamics.
\newblock {\em arXiv preprint arXiv:1802.06676}, 2018.

\bibitem[GD17]{gao2017efficient}
Xun Gao and Lu-Ming Duan.
\newblock Efficient representation of quantum many-body states with deep neural
  networks.
\newblock {\em Nature communications}, 8(1):662, 2017.

\bibitem[GKW16]{gubernatis2016quantum}
James Gubernatis, Naoki Kawashima, and Philipp Werner.
\newblock {\em Quantum Monte Carlo Methods}.
\newblock Cambridge University Press, 2016.

\bibitem[GLTG21]{guo2021clustering}
Andrew~Y Guo, Simon Lieu, Minh~C Tran, and Alexey~V Gorshkov.
\newblock Clustering of steady-state correlations in open systems with
  long-range interactions.
\newblock {\em arXiv preprint arXiv:2110.15368}, 2021.

\bibitem[GPY19]{gharibian2019oracle}
Sevag Gharibian, Stephen Piddock, and Justin Yirka.
\newblock Oracle complexity classes and local measurements on physical
  hamiltonians.
\newblock {\em arXiv preprint arXiv:1909.05981}, 2019.

\bibitem[Has11]{hastings2011topological}
Matthew~B Hastings.
\newblock Topological order at nonzero temperature.
\newblock {\em Physical review letters}, 107(21):210501, 2011.

\bibitem[HKP20]{Huang2020}
H.-Y. Huang, R.~Kueng, and J.~Preskill.
\newblock Predicting many properties of a quantum system from very few
  measurements.
\newblock {\em Nat. Phys.}, 16(10):1050--1057, 2020.

\bibitem[HKT{\etalchar{+}}22]{huang2021provably}
Hsin-Yuan Huang, Richard Kueng, Giacomo Torlai, Victor~V. Albert, and John
  Preskill.
\newblock Provably efficient machine learning for quantum many-body problems.
\newblock {\em Science}, 377(6613), 2022.

\bibitem[KSV02]{kitaev2002classical}
Alexei~Yu Kitaev, Alexander Shen, and Mikhail~N Vyalyi.
\newblock {\em Classical and quantum computation}.
\newblock Number~47. American Mathematical Soc., 2002.

\bibitem[KT13]{kastoryano2013quantum}
Michael~J Kastoryano and Kristan Temme.
\newblock Quantum logarithmic {S}obolev inequalities and rapid mixing.
\newblock {\em Journal of Mathematical Physics}, 54(5):052202, 2013.

\bibitem[LHT{\etalchar{+}}23]{Lewis_Huang_Preskill_2023}
Laura {Lewis}, Hsin-Yuan {Huang}, Viet~T. {Tran}, Sebastian {Lehner}, Richard
  {Kueng}, and John {Preskill}.
\newblock {Improved machine learning algorithm for predicting ground state
  properties}.
\newblock {\em arXiv e-prints}, page arXiv:2301.13169, January 2023.

\bibitem[LL23]{liu2023dissipative}
Yu-Jie Liu and Simon Lieu.
\newblock Dissipative phase transitions and passive error correction.
\newblock {\em arXiv preprint arXiv:2307.09512}, 2023.

\bibitem[LP17]{levin2017markov}
David~A Levin and Yuval Peres.
\newblock {\em Markov chains and mixing times}, volume 107.
\newblock American Mathematical Soc., 2017.

\bibitem[MC23]{molignini2023topological}
Paolo Molignini and Nigel~R Cooper.
\newblock Topological phase transitions at finite temperature.
\newblock {\em Physical Review Research}, 5(2):023004, 2023.

\bibitem[MOS94]{martinelli19942}
Fabio Martinelli, Enzo Olivieri, and Roberto~H Schonmann.
\newblock For 2-d lattice spin systems weak mixing implies strong mixing.
\newblock {\em Communications in Mathematical Physics}, 165(1):33--47, 1994.

\bibitem[MSW03]{martinelli2003ising}
Fabio Martinelli, Alistair Sinclair, and Dror Weitz.
\newblock The ising model on trees: Boundary conditions and mixing time.
\newblock In {\em 44th Annual IEEE Symposium on Foundations of Computer
  Science, 2003. Proceedings.}, pages 628--639. IEEE, 2003.

\bibitem[NVZ11]{nachtergaele2011lieb}
Bruno Nachtergaele, Anna Vershynina, and Valentin~A Zagrebnov.
\newblock Lieb-robinson bounds and existence of the thermodynamic limit for a
  class of irreversible quantum dynamics.
\newblock {\em AMS Contemporary Mathematics}, 552:161--175, 2011.

\bibitem[NYN21]{nomura2021purifying}
Yusuke Nomura, Nobuyuki Yoshioka, and Franco Nori.
\newblock Purifying deep boltzmann machines for thermal quantum states.
\newblock {\em Physical review letters}, 127(6):060601, 2021.

\bibitem[ORSW23]{Onorati_Rouze_Stilch_Franca_Watson_2023}
Emilio {Onorati}, Cambyse {Rouz{\'e}}, Daniel {Stilck Fran{\c{c}}a}, and
  James~D. {Watson}.
\newblock {Efficient learning of ground \& thermal states within phases of
  matter}.
\newblock {\em arXiv e-prints}, page arXiv:2301.12946, January 2023.

\bibitem[PK20]{park2020geometry}
Chae-Yeun Park and Michael~J Kastoryano.
\newblock Geometry of learning neural quantum states.
\newblock {\em Physical Review Research}, 2(2):023232, 2020.

\bibitem[PMS{\etalchar{+}}14]{peruzzo2014variational}
Alberto Peruzzo, Jarrod McClean, Peter Shadbolt, Man-Hong Yung, Xiao-Qi Zhou,
  Peter~J Love, Al{\'a}n Aspuru-Guzik, and Jeremy~L O’brien.
\newblock A variational eigenvalue solver on a photonic quantum processor.
\newblock {\em Nature communications}, 5(1):4213, 2014.

\bibitem[Pou10]{poulin2010lieb}
David Poulin.
\newblock Lieb-{R}obinson bound and locality for general {M}arkovian quantum
  dynamics.
\newblock {\em Physical review letters}, 104(19):190401, 2010.

\bibitem[PSC21]{piroli2021quantum}
Lorenzo Piroli, Georgios Styliaris, and J~Ignacio Cirac.
\newblock Quantum circuits assisted by local operations and classical
  communication: Transformations and phases of matter.
\newblock {\em Physical Review Letters}, 127(22):220503, 2021.

\bibitem[RdAGRMPG22]{ruiz2022matrix}
Alberto Ruiz-de Alarc{\'o}n, Jos{\'e} Garre-Rubio, Andr{\'a}s Moln{\'a}r, and
  David P{\'e}rez-Garc{\'\i}a.
\newblock Matrix product operator algebras ii: phases of matter for 1d mixed
  states.
\newblock {\em arXiv preprint arXiv:2204.06295}, 2022.

\bibitem[RGvK23]{rakovszky2023defining}
Tibor Rakovszky, Sarang Gopalakrishnan, and Curt von Keyserlingk.
\newblock Defining stable phases of open quantum systems.
\newblock {\em arXiv preprint arXiv:2308.15495}, 2023.

\bibitem[RKT{\etalchar{+}}19]{rem2019identifying}
Benno~S Rem, Niklas K{\"a}ming, Matthias Tarnowski, Luca Asteria, Nick
  Fl{\"a}schner, Christoph Becker, Klaus Sengstock, and Christof Weitenberg.
\newblock Identifying quantum phase transitions using artificial neural
  networks on experimental data.
\newblock {\em Nature Physics}, 15(9):917--920, 2019.

\bibitem[RNS19]{rodriguez2019identifying}
Joaquin~F Rodriguez-Nieva and Mathias~S Scheurer.
\newblock Identifying topological order through unsupervised machine learning.
\newblock {\em Nature Physics}, 15(8):790--795, 2019.

\bibitem[Vid08]{vidal2008class}
Guifr{\'e} Vidal.
\newblock Class of quantum many-body states that can be efficiently simulated.
\newblock {\em Physical review letters}, 101(11):110501, 2008.

\bibitem[VWIC09]{verstraete2009quantum}
Frank Verstraete, Michael~M Wolf, and J~Ignacio~Cirac.
\newblock Quantum computation and quantum-state engineering driven by
  dissipation.
\newblock {\em Nature physics}, 5(9):633--636, 2009.

\bibitem[WB21]{watson2021complexity}
James~D Watson and Johannes Bausch.
\newblock The complexity of approximating critical points of quantum phase
  transitions.
\newblock {\em arXiv preprint arXiv:2105.13350}, 2021.

\bibitem[Whi92]{white1992density}
Steven~R White.
\newblock Density matrix formulation for quantum renormalization groups.
\newblock {\em Physical review letters}, 69(19):2863, 1992.

\bibitem[WOC22]{watson2022uncomputably}
James~D Watson, Emilio Onorati, and Toby~S Cubitt.
\newblock Uncomputably complex renormalisation group flows.
\newblock {\em Nature Communications}, 13(1):7618, 2022.

\end{thebibliography}
\bibliographystyle{alpha}
\clearpage

\appendix

\section*{Supplementary Material}

\section{Notation and definition of Lindbladian superoperators}

Given a finite dimensional Hilbert space $\mathcal{H}$,
we denote by $\mathcal{B}(\mathcal{H})$ 
{the algebra of bounded operators on $\mathcal{H}$}, whereas $\mathcal{B}_{\operatorname{sa}}(\cH)$ denotes the subspace of self-adjoint operators.
We denote by $\mathcal{D}(\mathcal{H})$ the set of positive operators on $\mathcal{H}$ of unit trace, and by $\mathcal{D}_{+}(\mathcal{H})$ the subset of positive, full-rank operators on $\mathcal{H}$. 
Schatten norms are denoted by $\|.\|_p$ for $p\ge 1$. The identity matrix in $\mathcal{B}(\mathcal{H})$ is denoted by $I$. 
In this work, systems on a $D$-dimensional lattice $\Lambda=[-L,L]^D$ where at each site $x$ of the lattice we will associate a $d$-dimension Hilbert space $\mathcal{H}_x$.
We denote the subset of lattice points within a ball of radius $r$, centred on $u$ as $b_u(r)$.
The algebra of observables supported on $\Lambda$ is defined as $\mathcal{A}_\Lambda = \bigotimes_{x\in \Lambda}\mathcal{B}(\mathcal{H}_x)$.
A linear map $\mathcal{T}: \mathcal{A} \rightarrow \mathcal{A}$ is called a superoperator.

\bigskip

The dynamics of a Markovian, dissipative system is generated by a superoperator known as a \emph{Lindbladian}.
In this work, we consider a family of local, time-independent Lindbladians acting over the $D$-dimensional lattice $\Lambda=[-L,L]^D$.
Lindbladians have the form:
\begin{align}
    \mathcal{L}(\rho) = i[\rho,H] + \sum L_j \rho L_j^* - \frac{1}{2}\sum_j \{L_j^*L_j,\rho \}
\end{align}
where $H$ is a Hermitian matrix, $\{L_j\}_j$ a set of matrices called the Lindblad operators.
$[A, B]$ and $\{A, B\}$ denote 
the commutator and the anticommutator respectively of operators $A,B$.
The time evolution operator is then given by $e^{t\L}$.
The fact that the Lindbladian is local means it can be written as a sum of local terms:
\begin{align*}
   \L(x)= \sum_{j}  \L_{j}(x_j),
\end{align*}
where we have allow the terms to depend on some parameterisation, $x\in \mathbb{R}^m$, and where each $x_j\in \mathbb{R}^\ell$, and $\ell =O(1)$. 
For the rest of this work we will assume that the Lindbladian is linearly parameterised, $\L_j(x_j) = x_jE_j$, for an operator $E_j$.
However, the result apply equally for Lindbladians with bounded derivatives.
We will often characterise a phase as a subset of parameter space $\Phi\subseteq \mathbb{R}^m$.

Each $\L_{j}(x_j)$ supported on a ball $A_j$ around site $j\in\Lambda $ of radius $r_0$.
Thus the interactions have finite range.
We will also use the notation $\L_{u,r}$ to denote a term with support entirely within $b_u(r)$.
Given a vector $x \in \mathbb{R}^m$, a subsection of the lattice $S\subseteq \Lambda $ and a local Lindbladian $\L(x)$, then $x_{|S}$ refers to the restriction of $x$ to only those parameters which describe local terms in $\L_j(x_j)$ which have support on $S$.

Consider two families of local Lindbladians, $\L, \L' $ with local terms $\L_j, \L_j'$ respectively.
For some $A\subseteq \Lambda$, we will also use the notation $\L^A(x) = \sum_{j:\supp(\L_j)\subseteq A } \L_j(x_j) + \sum_{j:\supp(\L_j)\not\subseteq A } \L'_j(x_j)$.
That is, the Lindbladian has terms from $\L$ within $A$, and terms from $\L'$ outside of $A$.

\begin{definition}[Lindbladian Steady State]
    We say that $\L$ has a unique steady state $\rho_\infty$ if, for all density matrices $\rho$, $\lim_{t\rightarrow \infty}e^{t\L}(\rho)=\rho_\infty $.
    For parameterised sets of Lindbladians $\{\L(x)\}_{x}$, each with a unique steady state, we denote the corresponding steady state by $\rho_\infty(x)$.
\end{definition}

Any given Lindbladian forms the generator of a one parameter semigroup which represents the evolution of the system through time.
We denote this $T_t = e^{t\L}$, and its dual is denoted $T_t^*= e^{t\L^*}$, such that for a state $\rho$ and an observable $A$, then
$\tr[AT_t(\rho)] = \tr[\rho T^*_t(A)]$.

We can then define the relevant completely bounded norms for these superoperators as
\begin{align*}
    \norm{T}_{1\rightarrow 1,cb} = \sup_n\norm{T\otimes \mathds{1}_n}_{1\rightarrow 1} = \sup_n \sup_{X\in A_\Lambda\otimes M_n, X\neq 0} \frac{\norm{T\otimes \mathds{1}_n(X)}_{1}}{\norm{X}_1}.
\end{align*}
and similarly
\begin{align*}
     \norm{T}_{\infty \rightarrow \infty,cb} = \sup_n\norm{T\otimes \mathds{1}_n}_{\infty\rightarrow \infty} = \sup_n \sup_{X\in A_\Lambda\otimes M_n, X\neq 0} \frac{\norm{T\otimes \mathds{1}_n(X)}_{\infty}}{\norm{X}_\infty}.
\end{align*}
We note that $\norm{T}_{1\rightarrow 1,cb} = \norm{T^*}_{\infty\rightarrow \infty,cb}$.
 We define the strength of the Lindbladian as:
\begin{align*}
    J = \sup_{u,r}\norm{\L_{u,r}}_{1\rightarrow 1, cb}, \quad f(r) = \sup_u \frac{\norm{\L_{u,r}}_{1\rightarrow 1, cb}}{J}.
\end{align*}
For the remainder of this work, we will assume that we are working with finite range interactions, and so $f(r)$ is compactly supported. 
However, these results can be adapted to exponentially decaying and sufficiently fast decaying polynomial decays. 
When talking about a specific observable $O$, we will assume it is supported on a ball of diameter at most $k_0$ around site $i$ in the lattice.

Finally, we need to restrict the family of Lindbladians we are restricting to, realising that we need to be able to talk about Lindbladians at different systems sizes:
\begin{definition}[Uniform Families of Lindbladians] \label{Def:Uniform_Families}
A uniform family of Lindbladians $\L$ with strength $(J,f) $ is given by the following:
\begin{itemize}
    \item infinite Lindbladian: a Lindbladian M on all of $\mathbb{Z}^D$ with strength $(J,f)$;
    \item boundary conditions: we assume either open or periodic boundary conditions.
\end{itemize}
From here onward, we impose \cref{Def:Uniform_Families} on all Lindbladians which act on the primary system, but not necessarily on the auxiliary system.
A Lindbladian acting on a finite lattice is then defined by the restriction of the infinite Lindbladian to that subsection of the lattice, plus the relevant boundary conditions.
    
\end{definition}
\noindent An important subset of uniform families of Lindbladians are those with translationally invariant terms, which naturally define a Lindbladian on an infinite lattice.

\subsection{Toolbox}

A key tool we will utilise for our efficient learning results are Lieb-Robinson bounds, which put a ``speed limit'' on the propagation of information throughout a system having geometrically local interactions.
We will import many of the techniques and lemmas from \cite{Cubitt_Lucia_Michalakis_Perez-Garcia_2015} to prove our result.

\begin{assumption}[Lieb-Robinson Condition] \label{Assumption:Lieb-Robinson_Condition}
Let $\L=\sum_{u,r} \L_{u,r}$ be a local Lindbladian, where $\L_{u,r}$ is a Lindbladian term with support entirely contained within a ball of radius $r$ centred at $u$, denoted $b_u(r)$.
We say $\mathcal{L}$ satisfies the LR condition if and only if there exist positive constants $\mu, v$ such that:
\begin{align*}
    \sup_{x\in \Lambda} \sum_{u\in \Lambda}\sum_{r\geq dist(u,x)} \norm{\L_{u,r}}_{1\rightarrow 1,cb}|b_u(r)|\nu_{\mu}(r) \leq \frac{v}{2}<\infty.
 \end{align*}
 Furthermore, 
 \begin{align*}
     \nu_\mu (r) = e^{\mu r},
 \end{align*}
 and $\nu_\mu^{-1}(r) =1/\nu_\mu(r) $.
\end{assumption}

\noindent This is satisfied if $\L$ has finite ranged or exponentially decaying interactions \cite[Remark 5.2]{Cubitt_Lucia_Michalakis_Perez-Garcia_2015}.
This assumption allows us to put limits on the propagation of correlations and other effects as the system evolves.
Importantly, using the Lieb-Robinson bounds for dissipative systems from \cite{poulin2010lieb, nachtergaele2011lieb}, the following lemma can be derived:
\begin{lemma}[Localising the Evolution, Lemma 5.5 of \cite{Cubitt_Lucia_Michalakis_Perez-Garcia_2015}] \label{Lemma:Localised_Evolution}
Let $O_A$ be a local observable supported on $A\subset \Lambda$, and let $A(r)$ denote a ball of radius $r$ around $A$. 
Denote $O_A(t) = T^{*}_t(O_A)$ its evolution under a local Lindbladian $\L$ with strength $(J,f)$.
Given $r>0$, denoted by $O_A(r;t)$ its evolution under the Lindbladian $\L^{A(r)}$ with strength $(J,f)$, which is identical to $\L$ within $A(r)$, but may differ outside.
Then, using the notation from \cref{Assumption:Lieb-Robinson_Condition}, the following bound holds:
\begin{align*}
    \norm{O_A(t) - O_A(t;r) }\leq \norm{O_A}|A|J \frac{e^{vt}-1-vt}{v} \nu_{\beta}^{-1}(r),
\end{align*}
for a constant $\beta$.
\end{lemma}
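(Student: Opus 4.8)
The plan is to compare the two evolutions through a Duhamel (interpolation) identity and then control the resulting integrand with a dissipative Lieb--Robinson estimate. Write $\mathcal K:=\L^{A(r)}$ for the truncated generator, so that $\mathcal K^*$ and $\L^*$ agree on every term whose support lies inside $A(r)$ and differ only on terms $\L_{u,r'}$ with $b_u(r')\not\subseteq A(r)$. Differentiating $e^{(t-s)\mathcal K^*}e^{s\L^*}$ in $s$ and integrating from $0$ to $t$ gives
\begin{align*}
O_A(t)-O_A(t;r)=\int_0^t e^{(t-s)\mathcal K^*}\,(\L^*-\mathcal K^*)\,O_A(s)\,ds ,
\end{align*}
with $O_A(s):=e^{s\L^*}(O_A)$. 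Since $\mathcal K$ is a genuine Lindbladian, $e^{(t-s)\mathcal K}$ is trace preserving and completely positive, so its dual $e^{(t-s)\mathcal K^*}$ is unital and completely positive, hence an operator-norm contraction. Taking norms therefore strips off the outer semigroup and leaves
\begin{align*}
\norm{O_A(t)-O_A(t;r)}\le \int_0^t \norm{(\L^*-\mathcal K^*)\,O_A(s)}\,ds .
\end{align*}

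Next I would expand $\L^*-\mathcal K^*$ as a sum of local generators supported on balls $b_u(r')$ reaching outside $A(r)$, each of completely bounded strength at most $J$, and bound them one at a time. The key observation is that $\L_{u,r'}^*$ annihilates any operator acting trivially on $b_u(r')$; writing $\mathcal E_{b_u(r')}$ for the normalised partial trace onto the complement of $b_u(r')$, we have $\L_{u,r'}^*(X)=\L_{u,r'}^*\big(X-\mathcal E_{b_u(r')}X\big)$, so that, using $\norm{\L_{u,r'}^*}_{\infty\to\infty,cb}=\norm{\L_{u,r'}}_{1\to1,cb}$,
\begin{align*}
\norm{\L_{u,r'}^*\,O_A(s)}\le \norm{\L_{u,r'}}_{1\to1,cb}\,\norm{O_A(s)-\mathcal E_{b_u(r')}O_A(s)} .
\end{align*}
The final factor quantifies how much the evolved observable has leaked onto the distant region $b_u(r')$, and this is precisely what the dissipative Lieb--Robinson bound of \cite{poulin2010lieb,nachtergaele2011lieb}, valid under \cref{Assumption:Lieb-Robinson_Condition}, controls: it yields an estimate of the form $\norm{O_A}\,|A|\,(e^{vs}-1)\,\nu_\mu^{-1}(\operatorname{dist}(A,b_u(r')))$, exhibiting the finite propagation speed $v$ together with the exponential spatial decay $\nu_\mu^{-1}(d)=e^{-\mu d}$.

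Finally I would carry out the spatial sum and the time integral. Summing the per-term estimates against the weights $\norm{\L_{u,r'}}_{1\to1,cb}$ and invoking the convergence built into \cref{Assumption:Lieb-Robinson_Condition} collapses the lattice sum over all terms reaching outside $A(r)$ into a single factor $J\,\nu_\beta^{-1}(r)$ for some $\beta<\mu$, the loss from $\mu$ to $\beta$ absorbing the geometric sum of weights and the volumes $|b_u(r')|$, with the overall prefactor $\norm{O_A}\,|A|$. The residual time integral is $\int_0^t(e^{vs}-1)\,ds=\tfrac{e^{vt}-1-vt}{v}$, reproducing exactly the stated growth factor and yielding
\begin{align*}
\norm{O_A(t)-O_A(t;r)}\le \norm{O_A}\,|A|\,J\,\frac{e^{vt}-1-vt}{v}\,\nu_\beta^{-1}(r) .
\end{align*}
I expect the main obstacle to be this last step together with the leakage estimate that feeds it: establishing the precise $(e^{vs}-1)\,e^{-\mu\operatorname{dist}}$ form of the dissipative Lieb--Robinson bound and then performing the lattice sum so that the convergence guaranteed by \cref{Assumption:Lieb-Robinson_Condition} simultaneously converts the distance into the clean factor $\nu_\beta^{-1}(r)$ and keeps the multiplicative constant exactly $J$. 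By comparison, the Duhamel identity and the contractivity reduction are routine.
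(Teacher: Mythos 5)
The paper does not actually prove this lemma: it is imported verbatim as Lemma~5.5 of \cite{Cubitt_Lucia_Michalakis_Perez-Garcia_2015}, so there is no in-paper argument to compare against. Your reconstruction -- Duhamel interpolation between $e^{t\L^*}$ and $e^{t(\L^{A(r)})^*}$, contractivity of the unital completely positive dual semigroup, the annihilation trick $\L_{u,r'}^*(X)=\L_{u,r'}^*(X-\mathcal E_{b_u(r')}X)$ feeding into the dissipative Lieb--Robinson bound of \cite{poulin2010lieb,nachtergaele2011lieb}, and the final lattice sum under \cref{Assumption:Lieb-Robinson_Condition} producing $J\,\nu_\beta^{-1}(r)$ together with $\int_0^t(e^{vs}-1)\,ds=\frac{e^{vt}-1-vt}{v}$ -- is precisely the argument of the cited source, and it is correct; you also rightly identify the lattice summation and the exact form of the leakage estimate as the only nontrivial steps.
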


\section{Learning Lindbladian Phases of Matter}
\label{Sec:Learning_Lindbladian_Phases}

In this section we show that phases satisfying the Lindbladian definition of phase, in addition to the compatibility condition (\cref{Condition:Compatability_Condition}), can be learned efficiently.
One of the key components to this is to show that by perturbing the Lindbladian, the properties of the states along the trajectory of the Lindbladian do no vary too much.
There are multiple results showing properties similar to this under various assumptions, e.g. \cite{albert2016geometry, guo2021clustering, liu2023dissipative}.
However, for our purposes, we follow a similar path to the proof of \cite{Cubitt_Lucia_Michalakis_Perez-Garcia_2015}.

The first step of the proof is to prove that the local rapid mixing condition holds:
\begin{condition}[Local Rapid Mixing]
\label{Assumption:Local_Rapid_Mixing}
    \ Let $\L$ be a Lindbladian with steady state $\rho_\infty$, and let $\rho_0$ be some point in the same phase as $\rho_\infty$ (e.g. the reference point of the phase, $\refrho$), such that $\rho_0$ is the steady state of the Lindbladian $\L'$.
 Let $O_A$ be a geometrically local observable supported on $A\subset \Lambda$.
Then:
    \begin{align*}
       | \tr[O_A(T_t(\rho_0) - \rho_\infty)]| \leq \poly(|A|) e^{-\gamma t}
    \end{align*}
\end{condition}

We first prove a LTQO-type condition, which states that local observables on $A$ are in some sense indistinguishable between steady states of different sizes.
\begin{lemma}[LTQO-Type Condition] \label{Lemma:LTQO_Condition}
    Let $\L$ be a uniform family of Lindbladians which for any $A\subset \Lambda $, the Lindbladian has local terms of $\L$ within $A$ and terms of some other Lindbladian outside of $A$.
    Let $O_A$ be an observable supported on $A\subset \Lambda$ and let $\rho^{A(s)}_\infty$ be the steady state of $\L^{A(s)}$ and let $\rho_\infty$ be the unique steady state of $\L$.
    Assume $\rho^{A(s)}_\infty$ and $\rhofixed$ are in the same phase.
    Then:
    \begin{align*}
        \sup_{O_A} |\tr[O_A(\rho_\infty - \rho^{A(s)}_\infty)]| \leq \norm{O_A}\bigg( \frac{J}{v}|A| + c|A|^\kappa \bigg)\Delta_0(s)
    \end{align*}
    where 
     \begin{align*}
        \Delta_0(s) = \bigg( \frac{|A(s)|}{|A|} \bigg)^{\kappa v/(v+\gamma)}\nu_{\beta'}^{-1}(s)
    \end{align*}
    and $\beta' = \beta\gamma/(v+\gamma)$.
\end{lemma}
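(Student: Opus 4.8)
The plan is to interpolate between $\rhofixed$ and $\rhofixed^{A(s)}$ by running the \emph{modified} dynamics $e^{t\L^{A(s)}}$ on $\rhofixed$ for a time $t$, and to trade off two competing exponential effects: the Lieb--Robinson localisation of \cref{Lemma:Localised_Evolution}, which controls the error at short times but grows like $e^{vt}$, against the convergence of $e^{t\L^{A(s)}}(\rhofixed)$ towards $\rhofixed^{A(s)}$ guaranteed by \cref{Condition:Compatability_Condition}, which improves like $e^{-\gamma t}$. Concretely, for every $t\ge 0$ I would insert the intermediate state and apply the triangle inequality,
\begin{align*}
 |\tr[O_A(\rhofixed - \rhofixed^{A(s)})]|
 \le |\tr[O_A(\rhofixed - e^{t\L^{A(s)}}(\rhofixed))]|
 + |\tr[O_A(e^{t\L^{A(s)}}(\rhofixed) - \rhofixed^{A(s)})]| ,
\end{align*}
bound the two terms separately, and optimise over $t$ only at the end.

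For the first term I would use that $\rhofixed$ is stationary under $\L$, so $\rhofixed = e^{t\L}(\rhofixed)$, and then pass to the Heisenberg picture by duality:
\begin{align*}
 \tr[O_A(\rhofixed - e^{t\L^{A(s)}}(\rhofixed))]
 = \tr\!\big[\big(e^{t\L^*}(O_A) - e^{t(\L^{A(s)})^*}(O_A)\big)\,\rhofixed\big].
\end{align*}
Because $\L^{A(s)}$ coincides with $\L$ throughout $A(s)$ and may differ only outside, \cref{Lemma:Localised_Evolution} applies with $r=s$ and bounds the operator norm of the bracket; using $|\tr[X\rhofixed]|\le\norm{X}$ this contributes at order $\norm{O_A}\,|A|\,\tfrac{J}{v}\,(e^{vt}-1-vt)\,\nu_{\beta}^{-1}(s)$.

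For the second term I would invoke \cref{Condition:Compatability_Condition} with the identifications $R=A(s)$ and $W=\Lambda$, so that $\rhofixed^W=\rhofixed$, $\rhofixed^R=\rhofixed^{A(s)}$ and $\L_R=\L^{A(s)}$; since the ball $A(s)$ is taken large enough that the support $A$ lies well inside it (away from $\partial A(s)$), the partial-trace estimate of the compatibility condition yields a bound of order $\norm{O_A}\,\poly(|A(s)|)\,e^{-\gamma t}$ after tracing out the ancillas. Writing $\poly(|A(s)|)=c|A(s)|^\kappa$ and combining the two estimates leaves a single expression of the schematic form $\norm{O_A}\big(\tfrac{J}{v}|A|\,e^{vt}\,\nu_\beta^{-1}(s) + c|A(s)|^\kappa e^{-\gamma t}\big)$. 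Minimising over $t$ balances the growing factor $e^{vt}$ against the decaying factor $e^{-\gamma t}$; the optimum sits at $t=\Theta(s)$, more precisely $t\propto \beta s/(v+\gamma)$, at which the spatial decay collapses to $e^{-\beta\gamma s/(v+\gamma)}=\nu_{\beta'}^{-1}(s)$ with $\beta'=\beta\gamma/(v+\gamma)$, while the polynomial prefactors recombine (after relaxing the resulting fractional powers of $|A|$ and $|A(s)|$ to a clean common factor) into $\big(\tfrac{J}{v}|A|+c|A|^\kappa\big)\big(|A(s)|/|A|\big)^{\kappa v/(v+\gamma)}$, which is exactly the claimed factor.

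I expect the main obstacle to be this final optimisation together with the bookkeeping of polynomial prefactors: it is the balancing $e^{vt}\leftrightarrow e^{-\gamma t}$ that forces $t\sim s$ and thereby converts the bare Lieb--Robinson rate $\beta$ into the reduced rate $\beta'$ and produces the fractional exponent $\kappa v/(v+\gamma)$ on the ratio $|A(s)|/|A|$; one must carry the powers of $|A|$ and $|A(s)|$ through the minimisation and then bound them to reach the stated form. A second, conceptual pitfall is circularity: the rapid-mixing input must come from \cref{Condition:Compatability_Condition} (or the rapid mixing built into \cref{Def:Phase_of_Matter}) and \emph{not} from \cref{Assumption:Local_Rapid_Mixing}, since the present lemma is itself a stepping stone towards establishing local rapid mixing. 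Finally, I would verify that the hypotheses genuinely apply --- that $s$ is large enough for $A$ to avoid $\partial A(s)$, so that both \cref{Lemma:Localised_Evolution} and the partial-trace form of \cref{Condition:Compatability_Condition} are valid, and that the ancillary degrees of freedom can be traced out without disturbing the local estimate on $A$.
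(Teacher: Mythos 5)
Your proposal follows essentially the same route as the paper's proof: the same triangle-inequality decomposition through the intermediate state $e^{t\L^{A(s)}}(\rhofixed)$, the same use of stationarity plus duality to invoke \cref{Lemma:Localised_Evolution} for the first term and \cref{Condition:Compatability_Condition} for the second, and the same balancing of $e^{vt}\nu_\beta^{-1}(s)$ against $|A(s)|^\kappa e^{-\gamma t}$ at $t\sim \beta s/(v+\gamma)$ to produce $\beta'=\beta\gamma/(v+\gamma)$ and the exponent $\kappa v/(v+\gamma)$. Your remarks on avoiding circularity with the local rapid mixing condition and on the identification $R=A(s)$, $W=\Lambda$ are consistent with how the paper uses the lemma, so the argument is correct as proposed.
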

\begin{proof}
    Applying the triangle inequality gives:
    \begin{align*}
        |\tr[O_A(\rho_\infty - \rho^{A(s)}_\infty)]| 
        &\leq |\tr[O_A(\rho_\infty - T_t^{A(s)}(\rho_\infty))]|  + |\tr[O_A(T_t^{A(s)}(\rho_\infty) - \rho^{A(s)}_\infty)]| \\
        &= |\tr[O_A(T_t(\rho_\infty) - T_t^{A(s)}(\rho_\infty))]|  + |\tr[O_A(T_t^{A(s)}(\rho_\infty) - \rho^{A(s)}_\infty)]|
    \end{align*}
    The first term can be bounded by applying the Lieb-Robinson bounds summed up in \cref{Lemma:Localised_Evolution} giving:
    \begin{align*}
         |\tr[O_A(T_t(\rho_\infty) - T_t^{A(s)}(\rho_\infty))]| &=\tr[\rho_\infty(T_t^*(O_A) - T_t^{*A(s)}(O_A))] \\
         &\leq \norm{\rho_\infty}_1 \norm{ T_t^*(O_A) - T_t^{*A(s)}(O_A)) } \\
         &\leq \norm{O_A}|A|\frac{J}{v}e^{vt}\nu_\beta^{-1}(s)
    \end{align*}

  \noindent   The second term is bounded by \cref{Condition:Compatability_Condition}:
    \begin{align*}
        |\tr[O_A(T_t^{A(s)}(\rho_\infty) - \rho^{A(s)}_\infty)]| \leq \norm{O_A} |A(s)|^\kappa e^{-\gamma t}.
    \end{align*}
    Putting these together:
    \begin{align*}
         |\tr[O_A(\rho_\infty - \rho^{A(s)}_\infty)]| 
        &\leq \norm{O_A}|A|\frac{J}{v}e^{vt}\nu_\beta^{-1}(s) + \norm{O_A} |A(s)|^\kappa e^{-\gamma t} \\
        &= \norm{O_A}|A|\frac{J}{v}e^{vt}\nu_\beta^{-1}(s) + \norm{O_A} |A|^\kappa p(s)e^{-\gamma t},
    \end{align*}
    where in the last step we have simply taken the $s$ dependence out of $A(s)$ to make it more explicit, $p(s) = (|A(s)|/A)^\kappa$.

    We then choose $t=t(s)$ such that $e^{vt}\nu_\beta^{-1}(s)=p(s)e^{-\gamma t}$,
    \begin{align*}
        t(s) = \log( \nu_\beta(s)p(s) )^{1/(v+\gamma)}.
    \end{align*}
    This gives
        $e^{-\gamma t}  =  \nu_{\beta'}(s)p(s)^{\gamma/(v+\gamma)}$
    where $\beta' = \beta\gamma/(v+\gamma)$, which gives
    \begin{align*}
        \Delta_0(s) = \bigg( \frac{|A(s)|}{|A|} \bigg)^{\kappa v/(v+\gamma)}\nu_{\beta'}^{-1}(s)
    \end{align*}

\end{proof}

By slightly modifying the proof of Proposition 6.6 in \cite{Cubitt_Lucia_Michalakis_Perez-Garcia_2015}, we show the above lemma implies that when moving between states within a phase, we expect even faster mixing of the local observables compared to the state.

\begin{lemma}[Rapid Mixing of Observables within a Phase] \label{Lemma:Observable_Mixing}
 Let $\L$ be a Lindbladian with steady state $\rho_\infty$, and let $\rho_0$ be some point in the same phase as $\rho_\infty$ (e.g. the reference point of the phase, $\refrho$), such that $\rho_0$ is the steady state of the Lindbladian $\L'$.
 Let $\L^{A(s)}$ be a Lindbladian with local terms corresponding to $\L$ within $A(s)$, and $L'$ outside of $A(s)$.
 Let $O_A$ be a geometrically local observable supported on $A\subset \Lambda$.
Then:
    \begin{align*}
       | \tr[O_A(T_t(\rho_0) - \rho_\infty)]| \leq 2 \norm{O_A}\left(  \frac{J}{v}|A| + c|A|^\kappa  \right)p(t(v+\gamma)/\beta) e^{-\gamma t},
    \end{align*}
    where $p$ is a polynomial.

\end{lemma}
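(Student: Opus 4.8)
The plan is to mirror the two-step structure used in the proof of \cref{Lemma:LTQO_Condition}, but now applied to a genuine dynamical quantity rather than to a comparison of steady states. Write $T_t=e^{t\L}$ and recall that $\rho_\infty$ is the $\L$-steady state, so $\rho_\infty=T_t(\rho_\infty)$ for all $t$, which is what lets us insert localised steady states freely. Fix a length scale $s=s(t)$ to be optimised at the end, let $\L^{A(s)}$ be the Lindbladian agreeing with $\L$ on $A(s)$ and with $\L'$ outside (so $\rho^{A(s)}_\infty$ is its steady state), and telescope:
\begin{align*}
    \tr[O_A(T_t(\rho_0)-\rho_\infty)]
    = \tr[O_A(T_t(\rho_0)-T^{A(s)}_t(\rho_0))]
    + \tr[O_A(T^{A(s)}_t(\rho_0)-\rho^{A(s)}_\infty)]
    + \tr[O_A(\rho^{A(s)}_\infty-\rho_\infty)].
\end{align*}
I would bound the three terms separately. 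The first term replaces the true evolution of $O_A$ by its $A(s)$-localised evolution; passing to the Heisenberg picture it equals $\tr[(T_t^*(O_A)-T_t^{*A(s)}(O_A))\rho_0]$, which \cref{Lemma:Localised_Evolution} (using $\frac{e^{vt}-1-vt}{v}\le \frac{e^{vt}}{v}$) bounds by $\norm{O_A}\,|A|\,\tfrac{J}{v}\,e^{vt}\,\nu_\beta^{-1}(s)$, exactly as in the first step of \cref{Lemma:LTQO_Condition}. The third term compares the global steady state with the localised one and is controlled directly by \cref{Lemma:LTQO_Condition}, giving $\norm{O_A}\big(\tfrac{J}{v}|A|+c|A|^\kappa\big)\Delta_0(s)$.

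The middle term is the crux, and I expect it to be the main obstacle: it asks the localised dynamics $T^{A(s)}_t$ to relax $\rho_0$ to its own steady state $\rho^{A(s)}_\infty$, as seen by $O_A$, with a size-independent prefactor. I would split it by inserting $\rho_\infty$ once more,
\begin{align*}
    T^{A(s)}_t(\rho_0)-\rho^{A(s)}_\infty
    = T^{A(s)}_t(\rho_0-\rho_\infty) + \big(T^{A(s)}_t(\rho_\infty)-\rho^{A(s)}_\infty\big),
\end{align*}
where the second piece is precisely the content of the compatibility condition \cref{Condition:Compatability_Condition} with $R=A(s)$ and $W=\Lambda$, yielding $\norm{O_A}|A(s)|^\kappa e^{-\gamma t}$. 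The residual piece $\tr[(T_t^{*A(s)}(O_A))(\rho_0-\rho_\infty)]$ tests the two phase-equivalent steady states against an observable which, by the Lieb--Robinson bound for $\L^{A(s)}$, is supported up to exponentially small tails on a ball of radius $O(s+vt)$; since $\rho_0$ and $\rho_\infty$ lie in the same phase they are locally indistinguishable on such a ball at the LTQO rate, so this contributes a further $\poly(|A(s)|)e^{-\gamma t}$. The delicate point, and where I would concentrate the work, is justifying this local indistinguishability of $\rho_0$ and $\rho_\infty$ uniformly in $s$ using the global rapid-mixing promise of \cref{Def:Phase_of_Matter} together with \cref{Lemma:LTQO_Condition}, without reintroducing the system-size prefactor.

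Finally I would optimise $s$. Choosing $s=(v+\gamma)t/\beta$ makes the Lieb--Robinson factor $e^{vt}\nu_\beta^{-1}(s)=e^{vt-\beta s}=e^{-\gamma t}$, while in $\Delta_0(s)$ one has $\nu_{\beta'}^{-1}(s)=e^{-\beta' s}=e^{-\gamma t}$ because $\beta's=\tfrac{\beta\gamma}{v+\gamma}\cdot\tfrac{(v+\gamma)t}{\beta}=\gamma t$; the remaining dependence $(|A(s)|/|A|)^{\kappa v/(v+\gamma)}$ is then a fixed polynomial $p$ evaluated at $s=t(v+\gamma)/\beta$. Collecting the three contributions, the Lieb--Robinson term and the LTQO term each produce $\norm{O_A}\big(\tfrac{J}{v}|A|+c|A|^\kappa\big)p(t(v+\gamma)/\beta)e^{-\gamma t}$, with the compatibility contribution absorbed into $p$, and summing the two leading terms gives the stated bound together with its factor of $2$.
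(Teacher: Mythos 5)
Your top-level decomposition is exactly the paper's: the same triangle inequality into the Lieb--Robinson term, the local-relaxation term, and the steady-state comparison term, the same bounds on the first and third pieces via \cref{Lemma:Localised_Evolution} and \cref{Lemma:LTQO_Condition}, and the same optimisation $s(t)=t(v+\gamma)/\beta$. Where you diverge is the middle term, and that is where your argument breaks. The paper bounds $|\tr[O_A(T_t^{A(s)}(\rho_0)-\rho^{A(s)}_\infty)]|$ in one step by reading \cref{Condition:Compatability_Condition} as a promise that the steady state of one member of the compatible family, evolved under the locally modified generator $\L^{A(s)}$, relaxes locally at rate $e^{-\gamma t}$ with a $\poly(|A(s)|)$ prefactor. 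You instead insert $\rho_\infty$ and are left with the residual $\tr[T_t^{*A(s)}(O_A)(\rho_0-\rho_\infty)]$, which you propose to kill by claiming that $\rho_0$ and $\rho_\infty$ are ``locally indistinguishable'' because they are in the same phase. That premise is false: membership in the same Lindbladian phase means one state can be driven to the other rapidly, not that their local expectation values agree. Two product states $\ket{0}^{\otimes n}$ and $\ket{1}^{\otimes n}$ are in the same (trivial) phase, yet $\tr[Z_1(\rho_0-\rho_\infty)]=2$. Indeed, the entire point of the paper is that local expectation values \emph{vary} nontrivially across a phase and must be learned. The LTQO-type statement of \cref{Lemma:LTQO_Condition} compares $\rho_\infty$ with $\rho^{A(s)}_\infty$, whose generators \emph{agree} on $A(s)$ --- a much stronger hypothesis than mere phase equivalence --- and even then it yields decay in $s$, not in $t$, so it could not produce the $e^{-\gamma t}$ factor you assert for this term.

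To repair your route you would need a mixing estimate for $\L^{A(s)}$ from the \emph{arbitrary} initial condition $\rho_0-\rho_\infty$, uniformly in $s$, which is not supplied by \cref{Condition:Compatability_Condition} (that condition only covers the specific initial state $\rhofixed^W$). The cleaner fix is simply not to insert $\rho_\infty$: apply the compatibility condition directly to $T_t^{A(s)}(\rho_0)-T_\infty^{A(s)}(\rho_0)$, obtaining $c\norm{O_A}|A|^\kappa p(s)e^{-\gamma t}$, and then your first and third terms together with $s=t(v+\gamma)/\beta$ give the stated bound exactly as in the paper.
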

\begin{proof}
Denote $s_0$ to be the minimum $s\geq 0$ such that $A(s)=\Lambda$.

    Applying the triangle inequality gives:
    \begin{align*}
        | \tr[O_A(T_t(\rho_0) - \rho_\infty)]| &\leq | \tr[O_A(T_t(\rho_0) - T_t^{A(s)}(\rho_0))]| + | \tr[O_A(T_t^{A(s)}(\rho_0) - T^{A(s)}_\infty(\rho_0))]| \\ 
        &+ | \tr[O_A(T^{A(s)}_\infty(\rho_0) - \rho_\infty)]|.
    \end{align*}

   \noindent  The first term on the left-hand side is bounded by \cref{Lemma:Localised_Evolution}.
    \begin{align*}
        \tr[O_A(T_t(\rho_0) - T_t^{A(s)}(\rho_0))]| &\leq \norm{\rho_0}_1\norm{ (T_t^* - T_t^{*A(s)})(O_A)   } \\
        &\leq \norm{O_A}|A|\frac{J}{v}e^{-\beta s}.
    \end{align*}
  Then use the rapid mixing property in the compatibility condition, \cref{Condition:Compatability_Condition}, we have: 
    \begin{align*}
        \tr[O_A(T_t^{A(s)}(\rho_0) - T^{A(s)}_\infty(\rho_0))]| \leq c\norm{O_A}|A|^\kappa p(s) e^{-\gamma t}.
    \end{align*}
    
  \noindent   For the third term we apply \cref{Lemma:LTQO_Condition}, which gives:
    \begin{align*}
        | \tr[O_A(T^{A(s)}_\infty(\rho_0) - \rho_\infty)]| &= | \tr[O_A((\rho_\infty^{A(s)}) - \rho_\infty)]| \\
        &\leq \norm{O_A}\bigg(\frac{J}{v}|A|+c|A|^\kappa \bigg)\Delta_0(s).
    \end{align*}

   \noindent  Putting these all together gives:
    \begin{align*}
         | \tr[O_A(T_t(\rho_0) - \rho_\infty)]| \leq \norm{O_A}|A|\frac{J}{v}e^{-\beta s} +  c\norm{O_A}|A|^\kappa p(s) e^{-\gamma t} +\norm{O_A}\bigg(\frac{J}{v}|A|+c|A|^\kappa \bigg)\Delta_0(s)
    \end{align*}
  \noindent   We now choose $s=s(t) \leq s_0$ to such that this term decays exponentially in $t$.
   In particular, we choose $s(t)=t(v+\gamma)/\beta$, which gives:
    \begin{align*}
        \Delta_0(s(t))=p(t(v+\gamma)/\beta)e^{-\gamma t}.
    \end{align*}
  \noindent   Using this, we have: 
    \begin{align*}
         | \tr[O_A(T_t(\rho_0) - \rho_\infty)]| &\leq \norm{O_A}|A|\frac{J}{v}e^{-\beta s} +  c\norm{O_A}|A|^\kappa p(s) e^{-\gamma t} +\norm{O_A}\bigg(\frac{J}{v}|A|+c|A|^\kappa \bigg)\Delta_0(s) \\
         &\leq 2\norm{O_A} \left(  \frac{J}{v}|A| + c|A|^\kappa  \right)p(t(v+\gamma)/\beta) e^{-\gamma t},
    \end{align*}
    which proves the lemma statement.
    
  \noindent As a remark, we note that when $t\geq \beta/(v+\gamma)s_0$, we can simply bound the quantity of interest as:
    \begin{align*}
        | \tr[O_A(T_t(\rho_0) - \rho_\infty)]| \leq c \norm{O_A} |A|^\kappa p(s_0)e^{-\gamma t}.
    \end{align*}

\end{proof}

\noindent To make the exponential decay of with $t$ more explicit, we write the following corollary:
\begin{corollary}[Rapid Mixing of Local Observables II] \label{Corollary:Rapid_Mixing_II}
     \begin{align*}
       | \tr[O_A(T_t(\rho_0) - \rho_\infty)]| \leq 2 \norm{O_A}W\left(  \frac{J}{v}|A| + c|A|^\kappa  \right) e^{-\gamma' t}
    \end{align*}
    for some $\gamma'>0$ and $W=O(1)$.
\end{corollary}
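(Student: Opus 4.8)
The plan is to derive this directly from \cref{Lemma:Observable_Mixing}, which already establishes
\begin{align*}
   | \tr[O_A(T_t(\rho_0) - \rho_\infty)]| \leq 2 \norm{O_A}\bigg(  \frac{J}{v}|A| + c|A|^\kappa  \bigg)p(t(v+\gamma)/\beta)\, e^{-\gamma t}.
\end{align*}
The only remaining task is to absorb the polynomial prefactor $p(t(v+\gamma)/\beta)$ into a clean exponential with a slightly reduced rate. First I would fix any $\gamma'$ strictly smaller than $\gamma$, for concreteness $\gamma' = \gamma/2$, and split the exponential as $e^{-\gamma t} = e^{-(\gamma-\gamma')t}\,e^{-\gamma' t}$.

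The key elementary fact is that a polynomial times a genuinely decaying exponential is bounded on $[0,\infty)$: for any polynomial $q$ and any $a>0$ one has $\sup_{t\ge 0} |q(t)|\,e^{-at} < \infty$, since $|q(t)|e^{-at}\to 0$ as $t\to\infty$ and the function is continuous. Applying this with $q(t) = p(t(v+\gamma)/\beta)$ and $a = \gamma-\gamma' = \gamma/2 > 0$ produces a finite constant
\begin{align*}
   W := \sup_{t\ge 0}\, p(t(v+\gamma)/\beta)\, e^{-(\gamma-\gamma')t} < \infty,
\end{align*}
depending only on the degree and coefficients of $p$ together with the constants $\gamma, v, \beta$, hence $W = O(1)$ independently of $t$, $n$, and $|A|$. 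Substituting the resulting bound $p(t(v+\gamma)/\beta)\,e^{-\gamma t} \le W\, e^{-\gamma' t}$ back into \cref{Lemma:Observable_Mixing} yields exactly the claimed inequality.

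I do not anticipate a genuine obstacle: this is the routine observation that polynomial growth is dominated by a marginally slower exponential. The one point requiring minor care is to confirm that the polynomial $p$ arising in \cref{Lemma:Observable_Mixing} has degree and coefficients independent of the system size---which holds because $p$ originates from the factor $(|A(s)|/|A|)^\kappa$ in \cref{Condition:Compatability_Condition}, whose exponent $\kappa$ is a fixed structural constant---so that the constant $W$ is truly $O(1)$ rather than concealing a dimension-dependent blow-up in $n$ or $|A|$.
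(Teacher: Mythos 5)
Your proposal is correct and follows essentially the same route as the paper, whose proof simply states that one can choose $W,\gamma'$ with $p(t(v+\gamma)/\beta)e^{-\gamma t}\le We^{-\gamma' t}$; you merely make this explicit by taking $\gamma'=\gamma/2$ and $W=\sup_{t\ge 0}p(t(v+\gamma)/\beta)e^{-(\gamma-\gamma')t}$. The additional care you take to note that $p$ has system-size-independent degree and coefficients is a worthwhile clarification but does not change the argument.
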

\begin{proof}
    From \cref{Lemma:Observable_Mixing}, we see that $p(t(v+\gamma)/\beta)$ is a polynomial in $t$.
    We can simply choose a $W,\gamma'$ such that
    \begin{align*}
        p(t(v+\gamma)/\beta)e^{-\gamma t} \leq We^{-\gamma't}.
    \end{align*}
\end{proof}
\noindent Thus we have seen that we have obtained the rapid mixing condition.

\subsection{Learnability with Local Rapid Mixing}


\noindent We are interested in learning the quantity $f_O(\L,t) = \tr[O\rho(\L,t)]$.
We will ignore the effect of the auxiliary state as it is discretely parameterised, and so learning with respect to a different auxiliary state can be formulated separately for each auxiliary state relative to the reference state of interest.

We wish to compare how this quantity is related between ``nearby'' Lindbladians.
The following lemma shows that small perturbations in the Lindbladian only change the expectation value along the evolution by a small amount which is independent of the evolution time.
\begin{lemma}[Stability of Observables within a Phase, Variant of Theorem 20 of \cite{Cubitt_Lucia_Michalakis_Perez-Garcia_2015}] \label{Lemma:Time_Evolution_Difference}
    Let $\L$ be a Lindbladian, and $\L'=\L+E$ another local Lindbladian, where
    \begin{align*}
        E = \sum_{u,r}E_{u,r}
    \end{align*}
    and where the steady states of both $\L,\L'$ are in the same phase.
    Then:
    \begin{align*}
        |f_O(\L,t) - f_O(\L',t)| \leq c(|A|)\norm{O_A}\norm{E^*}_{\infty\rightarrow \infty, cb},
    \end{align*}
    where $c(|A|) = O(\poly(|A|)$.
\end{lemma}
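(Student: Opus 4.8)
The plan is to write the difference as a single Duhamel (interpolation) integral between the two semigroups and then extract time-independence from two structural facts: that the dual of any Lindbladian is unital, and that observables relax rapidly to equilibrium within a phase (\cref{Corollary:Rapid_Mixing_II}). Since both evolutions start from the common reference state $\rho=\refrho\otimes\omega$, we have $f_O(\L,t)-f_O(\L',t)=\tr[O(e^{t\L}-e^{t\L'})(\rho)]$, and interpolating between the generators while moving $O$ and the $\L'$-evolution onto the observable side through $\tr[A\,\L'(\sigma)]=\tr[(\L')^*(A)\,\sigma]$ yields
\begin{align*}
 f_O(\L,t)-f_O(\L',t)=-\int_0^t \tr\!\big[\,E^*\big(e^{(t-s)(\L')^*}(O)\big)\, e^{s\L}(\rho)\,\big]\,ds .
\end{align*}

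First I would use that trace preservation of both semigroups is equivalent to $\L^*(\un)=(\L')^*(\un)=0$, so that $E^*(\un)=0$. Writing $c_\infty:=\tr[\rhofixed(\L')\,O]$ for the steady-state value of $O$ under $\L'$, this lets me subtract a constant from the evolved observable without altering the integrand,
\begin{align*}
 \tr\!\big[E^*\big(e^{(t-s)(\L')^*}(O)\big)\,e^{s\L}(\rho)\big]=\tr\!\big[E^*\big(e^{(t-s)(\L')^*}(O)-c_\infty\un\big)\,e^{s\L}(\rho)\big].
\end{align*}
Because $e^{s\L}(\rho)$ is a normalised state ($\norm{e^{s\L}(\rho)}_1=1$), each integrand is bounded by $\norm{E^*}_{\infty\to\infty,cb}\,\big\|e^{(t-s)(\L')^*}(O)-c_\infty\un\big\|_\infty$, i.e.\ by the cb-strength of the perturbation times the \emph{deviation from equilibrium} of the Heisenberg-evolved observable.

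The decisive ingredient is then that this deviation decays exponentially. This is exactly the rapid mixing of observables established in \cref{Lemma:Observable_Mixing} and \cref{Corollary:Rapid_Mixing_II}, which supplies a bound of the form $\poly(|A|)\,\norm{O}\,e^{-\gamma' u}$ after evolution time $u=t-s$, with $A=\supp(O)$. Substituting this into the integral gives
\begin{align*}
 |f_O(\L,t)-f_O(\L',t)|\le \norm{E^*}_{\infty\to\infty,cb}\int_0^t \poly(|A|)\,\norm{O}\,e^{-\gamma'(t-s)}\,ds\le \frac{\poly(|A|)}{\gamma'}\,\norm{O}\,\norm{E^*}_{\infty\to\infty,cb},
\end{align*}
where the integral converges to a value independent of $t$. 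This is precisely what makes the estimate uniform in the evolution time, and it identifies $c(|A|)=O(\poly(|A|))$ with the polynomial prefactor inherited from \cref{Corollary:Rapid_Mixing_II}.

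The hard part is bridging the gap between the mixing statement we have and the operator-norm decay the Duhamel bound demands: \cref{Corollary:Rapid_Mixing_II} controls the \emph{scalar} quantity $\tr[O_A(T_t(\rho_0)-\rhofixed)]$ for a state $\rho_0$ lying \emph{in the phase}, whereas $\norm{e^{u(\L')^*}(O)-c_\infty\un}_\infty=\sup_{\sigma}|\tr[O(e^{u\L'}(\sigma)-\rhofixed(\L'))]|$ requires relaxation from arbitrary initial states. I would close this by exploiting the geometric locality of $E=\sum_{u,r}E_{u,r}$: by the Lieb-Robinson localisation of \cref{Lemma:Localised_Evolution}, $e^{u(\L')^*}(O)$ is, up to exponentially small error, supported on a light-cone neighbourhood of $\supp(O)$, so only perturbation terms meeting that neighbourhood contribute, and each $E_{u,r}^*$ individually annihilates the identity, allowing the in-phase mixing estimate to be applied term by term on a bounded region. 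Reconciling the light-cone growth rate with the mixing rate $\gamma'$ so that the integral still converges, and tracking how the count of contributing local terms feeds into the $\poly(|A|)$ prefactor, is the technical heart of the argument.
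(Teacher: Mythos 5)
Your overall strategy matches the paper's: a Duhamel interpolation between the two semigroups, the observation that $E^*(\un)=0$ lets you recentre the evolved observable around its equilibrium value, exponential relaxation to make the time integral converge uniformly in $t$, and locality of $E$ to control the sum over perturbation terms. However, the difficulty you flag at the end is a genuine gap in the argument as written, and your proposed patch does not close it. The bound $\bigl\|e^{u(\L')^*}(O)-c_\infty\un\bigr\|_\infty\le \poly(|A|)\,\norm{O}\,e^{-\gamma' u}$ that your integrand requires is a relaxation estimate \emph{uniform over all initial states}; \cref{Lemma:Observable_Mixing} and \cref{Corollary:Rapid_Mixing_II} only control the scalar $\tr[O_A(T_t(\rho_0)-\rhofixed)]$ when $\rho_0$ is itself a steady state in the phase, and nothing in \cref{Def:Phase_of_Matter} forces arbitrary states to relax quickly. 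Restricting the support of $e^{u(\L')^*}(O)$ via \cref{Lemma:Localised_Evolution} shrinks the region on which the supremum over states is taken, but it remains a supremum over \emph{all} states of that region, so the in-phase mixing estimate cannot be ``applied term by term'' in the way you suggest: localisation does not convert an operator-norm requirement into a trace against a state you control.

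The paper avoids ever taking that operator norm: it keeps the reference state $\refrho$ paired with the evolved observable throughout, decomposes $E^*=\sum_{u,r}E^*_{u,r}$ from the outset, and splits the time integral for each term at a distance-dependent time $t_0(d)$ with $d=\operatorname{dist}(A,\supp E_{u,r})$. For $s\le t_0(d)$ it uses \cref{Lemma:Localised_Evolution} together with the fact that $E^*_{u,r}$ annihilates anything supported away from $b_u(r)$, so the contribution is exponentially small in $d$; for $s\ge t_0(d)$ it subtracts $\tr[O_A\rhofixed]\un$ and invokes the \emph{scalar} rapid-mixing bound for the specific state $\refrho$, which is exactly what \cref{Assumption:Local_Rapid_Mixing} supplies, yielding the per-term quantity $h(d)=e^{-\mu d/2}+\gamma^{-1}e^{-\gamma t_0(d)}$ which is summable over $u,r$. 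To salvage your cleaner one-line estimate you would need either a global, state-independent relaxation bound for local observables (a strictly stronger hypothesis than the phase definition provides), or to restructure the Duhamel formula so that the state, not an operator norm, carries the mixing --- which is the route the paper takes.
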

\begin{proof}
Let $T_t$ and $S_t$ be the evolutions generated by $\L$ and $\L'$ respectively. 
    \begin{align*}
        |f_O(\L,t) - f_O(\L',t)| &= |\tr[O_AT_t(\refrho)] - \tr[O_AS_t(\refrho)]| \\
        &=|\tr\refrho[T_t^*(O_A) - S_t^*(O_A)]|.
    \end{align*}
    Using the fact \cite[Eq. (21)]{Cubitt_Lucia_Michalakis_Perez-Garcia_2015}:
    \begin{align*}
        T_t^*(O_A) - S_t^*(O_A) &= \int_0^t S_{t-s}^* E^*T_s^*(O_A) ds \\
        &= \sum_{u,r} \int_0^t S_{t-s}^* E_{u,r}^*T_s^*(O_A) ds
    \end{align*}
    Thus 
    \begin{align*}
        |f_O(\L,t) - f_O(\L',t)| = \sum_{u,r} \tr\left[ \refrho \int_0^t S_{t-s}^* E_{u,r}^*T_s^*(O_A) ds     \right].
    \end{align*}

    Now choose $d = dist(A,\supp E_{u,r})$. 
    We split the integral up into two separate parts: one finishing at time $t_0$ and the second beginning at $t_0$, where $t_0$ is a parameter we are free to choose.
    We first consider:
    \begin{align*}
        \tr\left[ \refrho \int_0^{t_0} S_{t-s}^* E_{u,r}^*T_s^*(O_A) ds     \right] 
        &\leq \norm{\refrho }_1 \int_0^{t_0} \norm{ S_{t-s}^* E_{u,r}^*T_s^*(O_A)} ds \\
        &\leq \int_0^{t_0} \norm{  E_{u,r}^*(O_A)} ds \\
        &\leq \norm{E}_{1\rightarrow 1,cb} \norm{O_A}|A|\frac{2e^{vt_0}-v^2t^2_0-2vt_0}{v^2 \nu_\mu(d)}.
    \end{align*}
    We now set $t_0=t_0(d)$ such that:  
    \begin{align*}
        \nu_\mu^{-1}(d)\frac{e^{vt_0}-v^2t^2_0-vt_0}{v^2} \leq \nu_{\mu/2}^{-1}(d). 
    \end{align*}
    Our choice for $t_0(d)$, which satisfies this, is $t_0(d)=\frac{\mu}{2}\frac{\log(v^2/2)}{v}d$.
    If $t\leq t_0(d)$, then this integral is bounded and we get the result in the lemma statement.

    For $t\geq t_0(d)$ consider:
    \begin{align}
         \tr\left[ \refrho \int_{t_0(d)}^t S_{t-s}^* E_{u,r}^*T_s^*(O_A) ds     \right] &= \tr\left[ \refrho \int_{t_0(d)}^t S_{t-s}^* E_{u,r}^*(T_s^*(O_A) - \tr[O_A\rho_{\infty}]\mathds{1} ) ds     \right] \label{Eq:Second_Int_Line_1} \\
         &\leq \norm{E^*_{u,r}}_{\infty\rightarrow\infty, cb} \int_{t_0(d)}^t \left(\tr[T_s^*(O_A)\refrho] -\tr[O_A\rho_{\infty}]\mathds{1}\right) ds \nonumber \\
         &\leq  c'(|A|) \norm{E^*_{u,r}}_{\infty\rightarrow\infty, cb}\norm{O_A} \int_{t_0(d)}^\infty e^{-\gamma s} ds \label{Eq:Second_Int_Line_2} \\
         &\leq \norm{E^*_{u,r}}_{\infty\rightarrow\infty, cb} \norm{O_A} c'(|A|) \frac{1}{\gamma}e^{-t_0(d)}. \nonumber
    \end{align}
    
    \noindent where in \cref{Eq:Second_Int_Line_1} we have used that $T^*_\infty(O_A) = \tr[O_A\rhofixed]\mathds{1}$, and that $E^*_{u,r}(\mathds{1}) = 0$. 
    To get to \cref{Eq:Second_Int_Line_2} we have used \cref{Assumption:Local_Rapid_Mixing} and where $c'(|A|)=O(\poly(|A|))$.

    We denote $h(d) = e^{-\mu d/2}+ \frac{1}{\gamma}e^{-\gamma t_0(d)}$ which is exponentially decay in $d$.
    Combining the $t\leq t_0$ and $t\geq t_0$ parts gives:
    \begin{align} \label{Eq:Difference_Lindbladian_Single_Term}
         \tr\left[ \refrho \int_0^{t} S_{t-s}^* E_{u,r}^*T_s^*(O_A) ds     \right] \leq \norm{E_{u,r}}_{1\rightarrow 1,cb} \norm{O_A}c'(|A|)h(d),
    \end{align}
     Thus 
    \begin{align} \label{Eq:Sum_of_Differences}
        |f_O(\L,t) - f_O(\L',t)| = \norm{O_A}c'(|A|)\norm{E}_{1\rightarrow 1,cb}\sum_{u,r}   h(d_{u,r}).
    \end{align}
    $h(d)$ decreases exponentially with distance, but there are only a polynomial number terms in the sum, hence the entire sum can be bounded by a constant. 
    Hence we see that:
    \begin{align}\label{Eq:Bound_Different_Lindbladians}
        |f_O(\L,t) - f_O(\L',t)| \leq  c(|A|)\norm{E}_{1\rightarrow 1,cb} \norm{O_A},
    \end{align}
    for some $c(|A|)=\poly(|A|)$ (we have absorbed the sum $\sum_{u,r}   h(d_{u,r})$ and $c'(|A|)$ into the definition of $c(|A|)$).
    
\end{proof}

\noindent Changing the proof slightly, we get the corollary:
\begin{corollary}[Localising Expectation Values] \label{Corollary:Localised_Lindbladian_All_Times}
    Let $\L(x) = \L + \sum \L_j(x_j)$ be a continuously parameterised family of Lindbladians, where each $\L_j(x_j)$ is a local term supported on $O(1)$ qudits and $x_j\in [-1,1]^\ell$ for a constant $\ell$.
    Let $(x,t), (x|_{S(r)},t)$ correspond to two states in the same phase, 
    being equal on all parameters in the region $S(r)$, but potentially different outside of $S(r)$.
    Then:
    \begin{align*}
        |f_O(\L(x),t) - f_O(\L(x|_{S(r)}),t)| \leq C_1 e^{-r/2\xi}\norm{O_A},
    \end{align*}
    for an $O(1)$ constant $C_1$.
\end{corollary}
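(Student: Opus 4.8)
The plan is to run the proof of \cref{Lemma:Time_Evolution_Difference} essentially verbatim, observing that the perturbation now lives entirely \emph{far} from the observable, so the lattice sum that was previously bounded by an $O(1)$ constant instead produces an exponentially small tail in $r$. First I would set $E \coloneqq \L(x) - \L(x|_{S(r)}) = \sum_{u,r'} E_{u,r'}$. By hypothesis $x$ and $x|_{S(r)}$ agree on every parameter describing a local term supported inside $S(r)$, so each nonzero $E_{u,r'}$ is supported outside $S(r)$. Since $O = O_A$ has support $A \subseteq S$, every such term satisfies $d_{u,r'} \coloneqq \operatorname{dist}(A,\supp E_{u,r'}) \geq r$. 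Moreover, because each local term is linearly parameterised with $x_j \in [-1,1]^\ell$ and bounded operator norm, the per-term strengths $\norm{E_{u,r'}}_{1\to 1,cb}$ are uniformly $O(J)$.

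Next I would invoke the per-term estimate already established inside the proof of \cref{Lemma:Time_Evolution_Difference}, namely \cref{Eq:Difference_Lindbladian_Single_Term},
\begin{align*}
    \left|\tr\!\left[ \refrho \int_0^{t} S_{t-s}^* E_{u,r'}^* T_s^*(O_A)\, ds \right]\right| \leq \norm{E_{u,r'}}_{1\to 1,cb}\,\norm{O_A}\, c'(|A|)\, h(d_{u,r'}),
\end{align*}
with $h(d) = e^{-\mu d/2} + \tfrac{1}{\gamma} e^{-\gamma t_0(d)}$ exponentially decaying in $d$ and $t_0(d)$ linear in $d$; crucially this bound is \emph{uniform in $t$}, since splitting the Duhamel integral at $t_0(d)$ together with \cref{Assumption:Local_Rapid_Mixing} removes all $t$-dependence. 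Summing over the terms of $E$ as in \cref{Eq:Sum_of_Differences}, but restricting to $d_{u,r'} \geq r$, yields
\begin{align*}
    |f_O(\L(x),t) - f_O(\L(x|_{S(r)}),t)| \leq c'(|A|)\,\norm{O_A}\!\!\sum_{u,r':\, d_{u,r'} \geq r}\!\! \norm{E_{u,r'}}_{1\to 1,cb}\, h(d_{u,r'}).
\end{align*}

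Finally I would bound this tail. Grouping terms by distance $d \geq r$, the number of lattice terms at distance $d$ grows only polynomially in $d$ while $h(d)$ decays exponentially at a rate $\xi^{-1}$ fixed by the slower of the two exponentials in $h(d)$; hence $\sum_{d \geq r} \poly(d)\, e^{-d/\xi}$ converges and is dominated by its leading term, scaling as $\poly(r)\, e^{-r/\xi}$. Bounding each $\norm{E_{u,r'}}_{1\to 1,cb}$ by $O(J)$ and absorbing the polynomial prefactor together with $c'(|A|)$ (all $O(1)$ for the fixed local $O$) by weakening the exponent from $1/\xi$ to $1/(2\xi)$ gives $C_1 e^{-r/2\xi}\norm{O_A}$, as claimed.

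The main obstacle is bookkeeping rather than conceptual. The key point is that one must keep the strengths $\norm{E_{u,r'}}$ \emph{per term} inside the sum rather than factoring out a global $\norm{E}_{1\to 1,cb}$, since $E$ acts on an extensive region outside $S(r)$ and its global norm need not be $O(1)$; it is precisely the per-term decomposition of \cref{Eq:Difference_Lindbladian_Single_Term} that lets the lower cutoff $d \geq r$ convert the previously $O(1)$ sum into an exponentially small tail. One then has to confirm that the correlation length $\xi$ extracted from $h$ is genuinely $O(1)$ and independent of both $n$ and $r$, and that the restricted lattice sum converges uniformly in the system size; both follow from the Lieb--Robinson summability of \cref{Assumption:Lieb-Robinson_Condition}, which guarantees that the geometric growth in the number of terms at distance $d$ is dominated by the exponential weight $\nu_\mu(d)$.
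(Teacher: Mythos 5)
Your proposal is correct and follows essentially the same route as the paper: both run the Duhamel/perturbation argument of \cref{Lemma:Time_Evolution_Difference} with the perturbation $E=\L(x)-\L(x|_{S(r)})$ supported outside $S(r)$, apply the $t$-uniform per-term bound of \cref{Eq:Difference_Lindbladian_Single_Term}, and then sum the exponentially decaying $h(d)$ over the polynomially growing shells at distance $d\geq r$, absorbing the polynomial prefactor by halving the exponent to obtain $C_1 e^{-r/2\xi}\norm{O_A}$. Your remark about keeping the strengths per term rather than factoring out a global $\norm{E}_{1\to 1,cb}$ is a fair (and slightly more careful) reading of the same computation the paper performs via $\max_j\norm{\L_j}_{1\to 1,cb}$.
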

\begin{proof}
    We follow the proof up to \cref{Eq:Sum_of_Differences}, and then write:
      \begin{align*}
        |f_O(\L,t) - f_O(\L',t)| &\leq \norm{O_A}|A|\max_j\norm{\L_j}_{1\rightarrow 1,cb}\sum_{j\in S^c(r)} h(r_j) 
    \end{align*}
    We shift the centre of the lattice to be at the centre of the support of $O$ such that:
     \begin{align*}
        |f_O(\L,t) - f_O(\L',t)| &\leq \norm{O_A}|A|\max_j\norm{\L_j}_{1\rightarrow 1,cb}\sum_{|\ell|>r} h(|\ell|) 
    \end{align*}
    We bound $h(r)\leq c'e^{-r/\xi}$ where $1/\xi=\min\{\gamma, \mu/2 \}$. 
    Summing over these contributions:
     \begin{align*}
        |f_O(\L,t) - f_O(\L',t)| &\leq c'\norm{O_A}|A|J\sum_{|\ell|>r} e^{-|\ell|/\xi} \\
        &= c'\norm{O_A}|A| J \sum_{a>k_0+r/2}  \binom{a+D-1}{D-1} e^{-a/\xi} \\
         &\leq c'\norm{O_A}|A| J D^{D-1}\sum_{a>k_0+r/2} a^{D-1} e^{-a/\xi} \\
         &\leq  c'\norm{O_A}|A| J (D-1)! (2\xi)^{D-1} D^{D-1}\sum_{a>k_0+r/2} a^{D-1} e^{-a/\xi} \\
         &\leq c'\norm{O_A}|A| J (D-1)! (2\xi)^{D-1} D^{D-1} \frac{e^{-(r+1)/\xi}}{1-e^{-1/2\xi}} \\
         &= C_1 e^{-r/2\xi}\norm{O_A},
    \end{align*}
    where the above equation defines $C_1$.
\end{proof}

\subsection{Restricting to Phases Defined by Steady States} \label{Sec:Fixed-Point_Phases}

A natural restriction to consider is the set of states formed by steady states of Lindbladians which we can mixing rapidly between.
Examples of such phases include ground states of local Hamiltonians (which can be written as steady states of local Lindbladians) and also some Gibbs states.
Here we show that we can efficiently learn observables in these phases of matter.
We define them as the following: 
\begin{definition}[Steady State Phase] \label{Def:Fixed_Point_Phase}
    Consider a family of states $\{\rho(\L, \omega, t )\}_{\L,t}$ satisfying \cref{Def:Phase_of_Matter}. 
    We define the steady state phase to be the set of points belonging to this phase which are also steady states of the Lindbladians describing this phase.

    We will further assume that the set of Lindbladians can be continuously parameterised by\footnote{In general we can consider any subset of $\mathbb{R}^m$ such that each parameter varies over an $O(1)$ region.}
    $x\in [-1,1]^m$, $\L(x)$, and all states can be reached by evolution under $e^{t\L}(\rho^*\otimes  \omega^*)$ for some fixed reference states $\rho^*,\omega^*$, hence the states in the phase can be conveniently denoted $\rhofixed(x)  := \lim_{t\rightarrow \infty} e^{t\L(x)}(\rho^*\otimes  \omega^*)$.
\end{definition}
We note an important subtlety here: for two points $\rho(x), \rho(x')$ in the same steady state phase, we do not require $\rhofixed(x)$ to mix rapidly to $\rhofixed(x')$ under $\L(x')$ (although this may also be the case).
We only require that there exists a relevant reference state $\refrho$ which mixes rapidly (in the sense of \cref{Def:Phase_of_Matter}) to $\rhofixed(x)$ under $\L(x)$ and similarly for $\rhofixed(x')$ under $\L(x')$, and that it is possible to rapidly mix back under a different Lindbladian.

We conjecture that all thermal of matter should be describable as steady state phases in the sense above.
However, proving this seems like a difficult task (proving mixing bounds on Lindbladians is generally a non-trivial problem), particularly for quantum Hamiltonians.

Here we show that observables in a steady state phase of matter are learnable using concentration of measure arguments.
Before starting this proof, we reiterate that each $\L_{j}(x_j)$ is supported on a ball $A_j$ around site $j\in\Lambda $ of radius $r_0$.
Each $O_i$ is supported on an area of radius no more than $k_0$.

During a training stage, we pick $N$ points $Y_1,\dots , Y_N\sim U$ independently distributed uniformly at random in the phase, which we will assume is defined by points $x\in [-1,1]^m$, and are given access to the steady state states $\rhofixed(Y_j)$. 
Next, fix $r\in\mathbb{N}$. 
Given an observable $O=\sum_{i=1}^M O_i$, we define $S_i=\operatorname{supp}(O_i)$ and 
for each $S_i$ there is a ball of diameter at most $k_0$  containing $S_i$, and further define $S_i(r):=\{j\in\Lambda|\operatorname{dist}(j,S_i)\le r\}$. 
We can choose $r_0 = O(1)$ to be the minimum integer such that the support of all $\L_j$ terms would fit inside a ball of radius $r_0$.
We construct for any $x\in [-1,1]^m$ the estimator 
\begin{align}\label{equ:defin_estimator}
		\hat{f}_O(x)=\sum_{i=1}^M\tr \big[O_i\, \rhofixed( \hat{Y}_i(x))\big]\,,\quad \text{ with }\quad \hat{Y}_i(x)=\operatorname{argmin}_{{Y}_k}\|x|_{\mathcal{S}_i(r)}-{Y}_k|_{\mathcal{S}_i(r)}\|_{\ell^\infty}\,,
	\end{align}
where we recall that we denote by $x_{\mathcal{S}_{i}(r)}$ the concatenation of vectors $x_j$ corresponding to interactions $h_j$ supported on regions intersecting $S_i(r)$.
In words, we approximate the expectation value of $O_i$ by that of the state whose parameters in a region around $S_i$ are the closest to the state of interest. 
We also denote $\mathcal{S}_i\equiv \mathcal{S}_i(0)$.
Using this we can prove the following:
\begin{proposition} \label{Prop:Approximating_Observables_Fixed}
We use the notation of the previous paragraph.
Let $\{\L(x)\}_x$ be a family of Lindbladians, each of which has a unique steady state within the same phase, defining a set $\{\rho_\infty(x)\}_x$.
Consider a set of local observables such that each acts on a ball of lattice points of diameter no more than $k_0$.
Then the estimator for $O=\sum_i O_i$, given by $\hat{f}_O(x) = \sum_i \tr[O_i \rho_\infty(\hat{Y}_i(x))]$ satisfies the bound:
\begin{align*}
    \sup_{x\in [-1,1]^m} |f_O(x) - \hat{f}_O(x)|\leq \epsilon \sum_{i_1}^M \norm{O_i}_\infty,
\end{align*}
with probability at least $(1-\delta)$, whenever
\begin{align*}
    N &= \left( \frac{\gamma}{2}\right)^{-[2(r+r_0+k_0)]^D\ell} \log\left(\frac{M}{\delta} \right) + \left( \frac{\gamma}{2} \right)^{-[2(r+r_0+k_0)]^D\ell}\log\left( \frac{\gamma}{2}\right)^{[2(r+r_0+k_0)]^D\ell}
\end{align*}
with
\begin{align*}
    r &= \left\lceil 2\xi \log\left( \frac{4c'|A| J (D-1)! (2\xi)^{D-1} D^{D-1}}{\epsilon e^{1/2\xi }(1-e^{-1/2\xi})}  \right) \right\rceil \\
    \gamma &= \frac{\epsilon}{2[2(r+k_0)]^DJ\ell}.
\end{align*}
\end{proposition}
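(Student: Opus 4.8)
The plan is to reduce the claimed uniform (over all $x$) bound to a per-observable, per-region estimate and then supply a covering-plus-concentration argument guaranteeing that the estimator's chosen sample $\hat{Y}_i(x)$ is parametrically close to $x$ on the enlarged support $\mathcal{S}_i(r)$. First, by the triangle inequality $|f_O(x)-\hat{f}_O(x)|\le \sum_{i=1}^M|\tr[O_i(\rho_\infty(x)-\rho_\infty(\hat{Y}_i(x)))]|$, so it suffices to make each summand at most $\epsilon\|O_i\|_\infty$. Since $f_O(x)=\lim_{t\to\infty}f_O(\L(x),t)$ and both \cref{Lemma:Time_Evolution_Difference} and \cref{Corollary:Localised_Lindbladian_All_Times} hold uniformly in $t$, their bounds pass to the steady-state limit.

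To bound a single summand I would split the discrepancy between $x$ and $\hat{Y}:=\hat{Y}_i(x)$ into a contribution from parameters outside $\mathcal{S}_i(r)$ and one from parameters inside, via an intermediate parameter vector agreeing with $x$ on $\mathcal{S}_i(r)$ and with $\hat{Y}$ elsewhere. \Cref{Corollary:Localised_Lindbladian_All_Times} controls the ``outside'' part by $C_1 e^{-r/2\xi}\|O_i\|_\infty$, which the stated choice of $r$ makes at most $\tfrac{\epsilon}{2}\|O_i\|_\infty$. The ``inside'' part is a perturbation $E$ supported on $\mathcal{S}_i(r)$ whose parameters differ by at most $\gamma$ in $\ell^\infty$; since there are at most $[2(r+r_0+k_0)]^D$ local terms carrying $\ell$ parameters each of strength $\le J$, one gets $\norm{E}_{1\rightarrow 1,cb}\le \gamma\,\ell\, J\,[2(r+r_0+k_0)]^D$, and \cref{Lemma:Time_Evolution_Difference} with the stated $\gamma$ makes this at most $\tfrac{\epsilon}{2}\|O_i\|_\infty$. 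Summing gives $\epsilon\|O_i\|_\infty$ per observable, hence $\epsilon\sum_i\|O_i\|_\infty$ overall, \emph{provided} $\|x|_{\mathcal{S}_i(r)}-\hat{Y}|_{\mathcal{S}_i(r)}\|_{\ell^\infty}\le\gamma$ for every $i$.

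The crucial remaining task is to establish this closeness uniformly over the continuum $x\in[-1,1]^m$ with probability $1-\delta$. The structural fact that makes this feasible is that $\tr[O_i\rho_\infty(\cdot)]$ depends, up to the already-accounted exponentially small localization error, only on the $d_i\le[2(r+r_0+k_0)]^D\ell=:d_{\max}$ parameters meeting $\mathcal{S}_i(r)$, a number independent of $n$. I would therefore fix $i$, project the uniform law on $[-1,1]^m$ onto these $d_i$ coordinates (which stays uniform on $[-1,1]^{d_i}$), and partition $[-1,1]^{d_i}$ into at most $(2/\gamma)^{d_{\max}}$ cells of side $\gamma$; once every cell contains a projected sample, the argmin $\hat{Y}_i(x)$ lies in $x$'s cell and is within $\gamma$. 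A single sample hits a fixed cell with probability $(\gamma/2)^{d_i}$, so a cell is missed by all $N$ samples with probability at most $e^{-N(\gamma/2)^{d_{\max}}}$, and a union bound over the at most $M(2/\gamma)^{d_{\max}}$ (observable, cell) pairs bounds the failure probability by $M(2/\gamma)^{d_{\max}}e^{-N(\gamma/2)^{d_{\max}}}$. Demanding this be at most $\delta$ yields $N\gtrsim(\gamma/2)^{-d_{\max}}\big(\log(M/\delta)+d_{\max}\log(2/\gamma)\big)$, matching the stated complexity.

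The main obstacle is the $\sup_x$: naively covering the full $m=O(n)$–dimensional cube would give sample complexity exponential in $n$, which is useless. The whole argument hinges on replacing this global covering by independent low-dimensional coverings of each $\mathcal{S}_i(r)$, which is legitimate precisely because \cref{Corollary:Localised_Lindbladian_All_Times} shows each local expectation value is determined by only $O(1)$ parameters. The delicate points are checking that the intermediate hybrid parameter vectors still correspond to states in the same phase, so that \cref{Lemma:Time_Evolution_Difference} and \cref{Corollary:Localised_Lindbladian_All_Times} genuinely apply, and confirming the projected distribution retains the anti-concentration needed for the $(\gamma/2)^{d_{\max}}$ hitting probability (automatic here since projections of the uniform law are uniform); the rest is routine bookkeeping of $r$, $\gamma$, and $d_{\max}$.
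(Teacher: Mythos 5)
Your proposal is correct and follows essentially the same route as the paper's proof: a per-observable triangle-inequality decomposition into a localization error controlled by \cref{Corollary:Localised_Lindbladian_All_Times} (giving the choice of $r$) and a local perturbation error controlled by \cref{Lemma:Time_Evolution_Difference} (giving the choice of $\gamma$), combined with a coupon-collector covering of the low-dimensional local parameter space $[-1,1]^{m_r}$ and a union bound over the $M$ observables. The only cosmetic difference is that you interpolate through a single hybrid vector agreeing with $x$ inside $\mathcal{S}_i(r)$ and with $\hat{Y}_i(x)$ outside, whereas the paper uses two restricted vectors $x|_{\mathcal{S}_i(r)}$ and $\hat{Y}_i(x)|_{\mathcal{S}_i(r)}$ and hence pays the localization error twice; both yield the stated parameters.
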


Before we prove this result it is useful to study its asymptotic scaling. 
We note, $r=\Theta(\log(\epsilon^{-1}))$, $\gamma=\Theta\Big(\, \frac{\epsilon}{\log(\epsilon^{-1})^D}\Big)$, so that the number of samples needed is asymptotically
\begin{align*}
    N=\Theta \left(\log\Big(\frac{M}{\delta}\Big)\, e^{\operatorname{polylog}(\epsilon^{-1})} \right)\,.
\end{align*}

\begin{proof}[Proof of \cref{Prop:Approximating_Observables_Fixed}.]

We fix $O_i$ and $r>0$, and restrict ourselves to the subset of parameters $x|_{\mathcal{S}_i(r)}$. The number of parameters in that subset is bounded by the volume $V(r+r_0+k_0)$ of the ball $S_i(r+r_0)$ times the number $\ell$ of parameters per interaction. 
We denote this total number of parameters by $m_r:=V(r+r_0+k_0)\ell$. \
Next, we partition the parameter space $[-1,1]^{m_r}$ into cubes of side-size $\gamma\in (0,1)$. 
By the coupon collector's problem, we have that the probability that none of the sub-vectors $Y_j|_{\mathcal{S}_i(r)}$ is within one of those cubes is upper bounded by $e^{-N(\gamma/2)^{m_r}+m_r\log(2/\gamma)}$.
By a union bound, the probability that for any $i\in[M]$, any cube is visited by at least one sub-vector $Y_j|_{\mathcal{S}_i(r)}$ is lower bounded by $1-\delta$, $\delta:=Me^{-N(\gamma/2)^{m_r}+m_r\log(2/\gamma)}$. 
In other words, with probability $1-\delta$ there is a $\hat{Y}_i(x)|_{\mathcal{S}_i(r)}$ in the $N$ samples satisfying 
\begin{align}\label{xminusYSir}
\|x|_{\mathcal{S}_i(r)}-\hat{Y}_i(x)|_{\mathcal{S}_i(r)}\|_{\infty}\le \gamma
\end{align}
 for all $i\in[M]$.

Denoting $\hat{f}_{O_i}(x) = \tr[\rho(\hat{Y_i}(x)|_{S(r)})O_A]$, we now want to control the following quantity:
\begin{align}
    |f_{O_i}(x|_{S(r)}) -\hat{f}_{O_i}(x) | &\leq \lim_{t\rightarrow \infty}\tr\left[ O_i \int_0^t e^{-(t-s)\L(x|_{S(r)})}(\L(x|_{S(r)})- \L(\hat{Y_i}(x)|_{S(r)}))e^{-s\L(\hat{Y_i}(x)|_{S(r)})} (\refrho) ds\right] \nonumber \\
    &\leq \norm{\refrho}_1 \lim_{t\rightarrow \infty}  \int_0^t \norm{\L^*(x|_{S(r)})- \L^*(\hat{Y_i}(x)|_{S(r)})(O_i)} ds \\
    &\leq  \norm{O_i} \sum_{j\in S(r)} J\norm{x|_{S(r)} - \hat{Y_i}(x)|_{S(r)}}_\infty \\
    &\leq \norm{O_i}V(r)J \gamma, \label{Eq:f_local_f_hat_Bound_4}
\end{align}
where we have used \cref{Lemma:Time_Evolution_Difference} in the first line, and used \cref{xminusYSir} to go the third line to the fourth line.
From \cref{Corollary:Localised_Lindbladian_All_Times}, we have.
\begin{align*}
    |\tr[O_i(\rhofixed(x)- \rhofixed(x|_{S(r)}))]|, \ \ |\tr[O_i(\rhofixed(\hat{Y_i}(x))- \rhofixed(\hat{Y_i}(x)|_{S(r)}))]| \leq C_1e^{-r/2\xi}\norm{O_A}.
\end{align*}
Using this, the definitions $f_{O_i}(x)$, $ \hat{f}_{O_i}(x)$, and \cref{Eq:f_local_f_hat_Bound_4}:
\begin{align}
    |f_{O_i}(x) - \hat{f}_{O_i}(x)|&\leq |\tr[O_i(\rhofixed(x)- \rhofixed(x|_{S(r)}))]| + |\tr[O_i(\rhofixed(\hat{Y_i}(x))- \rhofixed(\hat{Y_i}(x)|_{S_i(r)}))]| \nonumber  \\ 
    &+ |\tr[O_i(\rhofixed(x|_{S(r)})- \rhofixed(\hat{Y_i}(x)|_{S_i(r)}))]| \nonumber \\
    &\leq (2C_1e^{-r/2\xi }+ C_2(r)\gamma)\norm{O_A}, \label{Eq:Error_Actual_Learned}
\end{align}
where $C_2(r) = V(r)J\ell$.
Now, the volume $V(s)$ of a ball of radius $s$ in $\Lambda$ is equal to 
\begin{align*}
    V(s)=\sum_{a\le s} \binom{a+D-1}{D-1}\le (2s)^D\,.
\end{align*}
Suppose we want $ |f_O(x) - \hat{f}_O(x)|\leq \epsilon$ for some $\epsilon>0$, we choose to portion the error so that $2C_1e^{-r/2\xi }\leq \epsilon/2$ and $C_2(r)\gamma\leq \epsilon/2$.
Thus we also need to choose $\gamma$ such that $C_2(r)\gamma = V(r)J\ell \gamma \leq \epsilon/2$.

Thus, given a $\delta$ we wish to achieve, we have:
\begin{align*}
    \delta:=Me^{-N(\gamma/2)^{m_r}+m_r\log(2/\gamma)}.
\end{align*}
Choosing parameters:
\begin{align*}
    r &= \left\lceil 2\xi \log\left( \frac{4C_1}{\epsilon}  \right) \right\rceil \\
    &= \left\lceil 2\xi \log\left( \frac{4c'|A| J (D-1)! (2\xi)^{D-1} D^{D-1}}{\epsilon e^{1/2\xi }(1-e^{-1/2\xi})}  \right) \right\rceil \\
    \gamma &= \frac{\epsilon}{2C_2(r)} \\
    &= \frac{\epsilon}{2[2(r+k_0)]^DJ\ell}
\end{align*}
Thus we get a number of samples needed as scaling as:
\begin{align*}
    N &=   \left( \frac{2}{\gamma}\right)^{m_r} \log\left(\frac{M}{\delta} \right) + \left( \frac{2}{\gamma} \right)^{m_r}\log\left( \frac{\gamma}{2}\right)^{m_2} \\
    &= \left( \frac{\gamma}{2}\right)^{-[2(r+r_0+k_0)]^D\ell} \log\left(\frac{M}{\delta} \right) + \left( \frac{\gamma}{2} \right)^{-[2(r+r_0+k_0)]^D\ell}\log\left( \frac{\gamma}{2}\right)^{[2(r+r_0+k_0)]^D\ell}.
\end{align*}
where we have used that $m_r =V(r+r_0+k_0) \ell = [2(r+r_0+k_0)]^D\ell$.

\end{proof}

Thus, in order to get a good approximation to $f_O(x)$, we simply need sufficiently many samples of the observable of interest at different points in the parameter space of the phase.
We can learn these using classical shadows.
Since only one copy of each $\widetilde{\rhofixed}(\hat{Y}_i(x))$ is available, the value of observables reconstructed from it is likely to be too noisy. 
Here we follow the same steps as \cite{Onorati_Rouze_Stilch_Franca_Watson_2023} and use multiple copies of states ``close to'' $x$ to reconstruct a better shadow at $x$.

More specifically, consider a steady state from the phase $\rho(x)$ and a family $\rhofixed(x_1),\dots,\rhofixed(x_N)$ of states with the promise that for any $i\in [M]$ there exist $t$ vectors $x_{i_1},\dots, x_{i_t}$ such that $\max_{j\in[t]}\|x|_{\mathcal{S}_i(r)}-x_{i_j}|_{\mathcal{S}_i(r)}\|_\infty \le \gamma$. 
We run the shadow protocol and construct product operators $\widetilde{\rhofixed}( x_1),\dots,\widetilde{\rhofixed}(x_N)$. 
Then for any ball $B$ of radius $k_0$, we select the shadows $\widetilde{\rhofixed}(x_{i_1}),\dots \widetilde{\rhofixed}(x_{i_t})$ and construct the empirical average
\begin{align} \label{Eq:Aggregated_Shadow}
    \widetilde{\rho}_{B}(x):=\frac{1}{t}\sum_{j=1}^t\,\tr_{B^c}\big[\widetilde{\rhofixed}(x_{i_j})\big]\,.
\end{align}

\noindent The error of this estimate can be bound using the following from \cite{Onorati_Rouze_Stilch_Franca_Watson_2023}:

\begin{proposition}[Robust shadow tomography, Prop D.2 of \cite{Onorati_Rouze_Stilch_Franca_Watson_2023}]\label{Prop:Robust_Shadows}

Fix $\epsilon,\delta\in(0,1)$. 
In the notations of \Cref{Prop:Approximating_Observables_Fixed}, with probability $1-\delta'$, for any ball $B$ of radius $k_0$, the shadow $\widetilde{\rho}_{B}(x)$ satisfies $\|\widetilde{\rho}_{B}(x)-\tr_{B^c}[\rhofixed(x)]\|_1\le 2C_1\,e^{-\frac{r}{2\xi}}+C_2(r)\gamma+\epsilon$  as long as
\begin{align}\label{ttimes}
    q\ge \frac{8.12^{k_0}}{3.\epsilon^2}\log\left(\frac{n^{k_0}2^{k_0+1}}{\delta'}\right)\,.
\end{align}
\end{proposition}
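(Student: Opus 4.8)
The plan is to split the error into a deterministic \emph{bias} term, arising because the $q$ shadows entering the average come from nearby-but-distinct states $\rhofixed(x_{i_j})$ rather than from $\rhofixed(x)$ itself, and a \emph{statistical} term, arising from the single-shot nature of each classical shadow. Since every single-shot shadow is unbiased under the random Clifford measurement \cite{Huang2020}, the average $\widetilde{\rho}_B(x)$ has expectation $\E\,\widetilde{\rho}_B(x)=\frac{1}{q}\sum_{j}\tr_{B^c}[\rhofixed(x_{i_j})]$, and the triangle inequality gives
\begin{align*}
\norm{\widetilde{\rho}_B(x)-\tr_{B^c}[\rhofixed(x)]}_1\le \norm{\widetilde{\rho}_B(x)-\E\,\widetilde{\rho}_B(x)}_1+\norm{\E\,\widetilde{\rho}_B(x)-\tr_{B^c}[\rhofixed(x)]}_1.
\end{align*}
For the second (bias) term I would use convexity of the trace norm together with trace-norm duality, $\norm{\sigma}_1=\sup_{\norm{O_B}_\infty\le 1}|\tr[O_B\sigma]|$, to reduce to a per-observable estimate, and then invoke \cref{Prop:Approximating_Observables_Fixed} (specifically the bound \cref{Eq:Error_Actual_Learned} applied with $O_B$ supported on $B$): since each selected $x_{i_j}$ agrees with $x$ up to $\gamma$ on the enlarged region, every summand is bounded by $2C_1e^{-r/2\xi}+C_2(r)\gamma$, hence so is their average. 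This recovers exactly the first two terms of the claimed bound.

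For the statistical term I would pass to the Pauli basis on $B$. Expanding $\tr_{B^c}[\sigma]=2^{-|B|}\sum_{P}\tr[P\sigma]\,P$, each single-shot reduced shadow yields an unbiased estimator $\tr[P\,\widetilde{\rhofixed}(x_{i_j})]$ of $\tr[P\rhofixed(x_{i_j})]$ whose magnitude and second moment are controlled by the $1$-local Clifford shadow norm, which is $\le 3^{|B|}\le 3^{k_0}$ uniformly over states and over $P$. Applying a median-of-means estimator across the $q$ shadows (as in \cite{Onorati_Rouze_Stilch_Franca_Watson_2023}) estimates each averaged coefficient $\frac{1}{q}\sum_j\tr[P\rhofixed(x_{i_j})]$ to additive accuracy $\epsilon'$ with failure probability $\delta''$ once $q\gtrsim \tfrac{3^{k_0}}{\epsilon'^2}\log(1/\delta'')$, the constant $\tfrac{8}{3}$ being the standard median-of-means/Bernstein constant. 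I would then convert per-coefficient accuracy into a trace-norm bound using $\norm{\sigma}_1\le 2^{|B|/2}\norm{\sigma}_2$ and Parseval $\norm{\sigma}_2^2=2^{-|B|}\sum_P|\tr[P\sigma]|^2$, which gives $\norm{\widetilde{\rho}_B(x)-\E\,\widetilde{\rho}_B(x)}_1\le 2^{k_0}\epsilon'$. Choosing $\epsilon'=2^{-k_0}\epsilon$ makes this $\le\epsilon$ and simultaneously turns the sample requirement into $q\gtrsim \tfrac{3^{k_0}4^{k_0}}{\epsilon^2}\log(1/\delta'')=\tfrac{12^{k_0}}{\epsilon^2}\log(1/\delta'')$, which is precisely the $12^{k_0}/\epsilon^2$ scaling in \cref{ttimes}.

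Finally I would take a union bound to make the estimate hold simultaneously for all balls: there are at most $\binom{n}{k_0}\le n^{k_0}$ balls of radius $k_0$ and at most $\sim 2^{2k_0}$ Pauli strings on each, so setting $\delta''$ equal to $\delta'$ divided by this enumeration reproduces the logarithmic factor $\log\!\big(n^{k_0}2^{k_0+1}/\delta'\big)$. The main obstacle I anticipate is handling the fact that the $q$ single-shot estimators are independent but \emph{not} identically distributed, since the underlying states differ; one must therefore concentrate around the empirical mean $\frac{1}{q}\sum_j\tr[P\rhofixed(x_{i_j})]$ rather than a single target, verifying that the shadow-norm (variance) bound $3^{k_0}$ holds uniformly over all these states before the median-of-means argument applies. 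The secondary technical care is to propagate the exponential-in-$k_0$ factors cleanly through the Schatten-$1$–Schatten-$2$ conversion so that the variance cost $3^{k_0}$ and the dimensional cost $4^{k_0}$ combine to exactly $12^{k_0}$ without slack; the remainder is a direct invocation of the robust shadow concentration of \cite{Onorati_Rouze_Stilch_Franca_Watson_2023} glued to the bias estimate above.
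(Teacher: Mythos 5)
This proposition is not proved in the paper at all: it is imported verbatim from \cite{Onorati_Rouze_Stilch_Franca_Watson_2023} (Proposition D.2 there), so there is no in-paper argument to compare against line by line. Judged on its own merits, your reconstruction is essentially the right argument and matches the structure of the proof in the cited work: split $\norm{\widetilde{\rho}_{B}(x)-\tr_{B^c}[\rhofixed(x)]}_1$ into a deterministic bias, controlled via trace-norm duality and the per-observable estimate \cref{Eq:Error_Actual_Learned} of \cref{Prop:Approximating_Observables_Fixed} (this is exactly where the $2C_1e^{-r/2\xi}+C_2(r)\gamma$ terms come from), plus a concentration term of size $\epsilon$ from the single-shot shadows, with a union bound over the $\le n^{k_0}$ balls. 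Two small points of care. First, the bias bound is only valid on the covering event of \cref{xminusYSir}, i.e.\ conditionally on the $1-\delta$ event of \cref{Prop:Approximating_Observables_Fixed}; this is why the downstream theorem quotes success probability $(1-\delta)\cdot(1-\delta')$ rather than folding everything into $\delta'$, and your write-up should keep these two failure events separate. Second, your accounting of the logarithmic factor does not quite reproduce the stated constant: a per-Pauli union bound over the $4^{k_0}$ Pauli strings on $B$ would put $4^{k_0}$ inside the logarithm, whereas the stated $2^{k_0+1}$ is the dimensional prefactor $2d=2\cdot 2^{k_0}$ of a matrix Bernstein/Chernoff bound applied directly to the reduced shadows on $B$; the latter route also delivers the $8/3$ constant and the $12^{k_0}=3^{k_0}\cdot 4^{k_0}$ variance-times-dimension product in \cref{ttimes} in one step, without the Schatten-$1$/Schatten-$2$ conversion. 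Neither discrepancy affects the asymptotics, and your observation that the $q$ estimators are independent but not identically distributed (so one concentrates around the empirical mean of the nearby states, with the shadow-norm bound holding uniformly) is precisely the point that makes the ``robust'' version of \cite{Onorati_Rouze_Stilch_Franca_Watson_2023} differ from the vanilla shadow bound of \cite{Huang2020}.
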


We are now ready to prove the main result of this section --- this is the formal version of \cref{Theorem:Main_Result_Fixed_Points}:
\begin{theorem}[Learning Steady State Phases]

Use the notation of \cref{Prop:Approximating_Observables_Fixed} and \cref{Eq:Aggregated_Shadow}.
    Denote $\widetilde{f}_{O_i}(x)=\tr\big[O_i\, \widetilde{\rho}_{S_i}(x)\big]$ the function constructed from the shadow tomography protocol of \Cref{Prop:Robust_Shadows}.
    Consider a set of $N$ shadows $\{\widetilde{\rhofixed}(x_i) \}_{i=1}^N$ and suppose we wish that:
    \begin{align*}
        |f_{O}(x)-\widetilde{f}_O(x)|\le  \epsilon\,\sum_{i}\,\|O_i\|_\infty\,,
    \end{align*}
with probability $(1-\delta).(1-\delta')$, with associated parameters:
\begin{align*}
    r &=  2\xi \log\left( \frac{4c'|A| J (D-1)! (2\xi)^{D-1} D^{D-1}}{\epsilon e^{1/2\xi }(1-e^{-1/2\xi})}  \right)  \\
    \gamma &= \frac{\epsilon}{2[2(r+k_0)]^DJ\ell} \\
    q &= \frac{8.12^{k_0}}{3.\epsilon^2}\log\left(\frac{n^{k_0}2^{k_0+1}}{\delta'}\right)\,.
\end{align*}
Then it is sufficient to choose:
\begin{align*}
    N= q\left( \frac{\gamma}{2}\right)^{-[2(r+r_0+k_0)]^D\ell} \log\left(\frac{M}{\delta} \right) + q\left( \frac{\gamma}{2} \right)^{-[2(r+r_0+k_0)]^D\ell}\log\left( q\frac{\gamma}{2}\right)^{[2(r+r_0+k_0)]^D\ell}
\end{align*}
\end{theorem}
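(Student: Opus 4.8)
The plan is to combine the two propositions already in hand: the coupon-collector estimate of \cref{Prop:Approximating_Observables_Fixed}, which governs how well exact reduced steady states at nearby parameters approximate the target, and the robust shadow bound of \cref{Prop:Robust_Shadows}, which governs the statistical error of replacing those exact states by classical-shadow averages. The only genuinely new ingredient is upgrading the coupon-collector argument so that each cube of the discretised parameter space is visited by at least $q$ samples rather than by a single one, as required to feed \cref{Prop:Robust_Shadows}.

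First I would reduce the claim to a per-observable trace-norm statement. Since each $O_i$ is supported on $S_i$, we have $f_{O_i}(x)=\tr[O_i\,\tr_{S_i^c}[\rhofixed(x)]]$, so by H\"older's inequality
\begin{align*}
    |f_O(x)-\widetilde f_O(x)|\le \sum_{i=1}^M\|O_i\|_\infty\,\big\|\widetilde\rho_{S_i}(x)-\tr_{S_i^c}[\rhofixed(x)]\big\|_1\,.
\end{align*}
Hence it suffices to control each reduced shadow in trace norm, simultaneously for all $i\in[M]$ and all $x\in[-1,1]^m$. This is exactly the content of \cref{Prop:Robust_Shadows}, whose bound $2C_1e^{-r/2\xi}+C_2(r)\gamma+\epsilon$ already packages the three error sources: the locality-truncation term $2C_1e^{-r/2\xi}$ coming from \cref{Corollary:Localised_Lindbladian_All_Times}, the parameter-mismatch term $C_2(r)\gamma$, and the shadow sampling term $\epsilon$. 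Choosing $r$ and $\gamma$ as in \cref{Prop:Approximating_Observables_Fixed} makes the first two pieces at most $\epsilon$ each, and rescaling $\epsilon\mapsto\epsilon/3$ then yields the advertised bound $\epsilon\sum_i\|O_i\|_\infty$.

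Next I would supply the sampling guarantee. For fixed $i$ and $x$, \cref{Prop:Robust_Shadows} needs the cube of side $\gamma$ containing $x|_{\mathcal S_i(r)}$ to hold at least $q$ of the sampled parameters $Y_j|_{\mathcal S_i(r)}$; because any two points sharing a cube differ by at most $\gamma$ in $\ell^\infty$, all such samples automatically meet the closeness promise, and this holds uniformly over $x$. To force $q$ hits per cube I would partition the $N$ i.i.d.\ samples into $q$ groups of size $N/q$ and apply the single-hit coupon-collector bound of \cref{Prop:Approximating_Observables_Fixed} within each group: the probability a fixed cube is missed in one group is at most $e^{-(N/q)(\gamma/2)^{m_r}}$, and a union bound over the $q$ groups, the $M$ observables, and the $(2/\gamma)^{m_r}$ cubes gives $\delta=qM(2/\gamma)^{m_r}e^{-(N/q)(\gamma/2)^{m_r}}$. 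Solving for $N$ reproduces the stated expression, the prefactor $q$ and the extra $q$ inside the logarithm being precisely the cost of this grouping. Since the random parameters $Y_j$ and the shadow measurement outcomes are governed by independent randomness, the sampling event (probability $1-\delta$) and the shadow-concentration event (probability $1-\delta'$) multiply, giving success probability $(1-\delta)(1-\delta')$.

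The main obstacle is bookkeeping rather than conceptual: one must check that a \emph{single} collection of $N$ shadows, measured once, simultaneously serves every target $x\in[-1,1]^m$ and every observable $O_i$, and that the two failure modes — too few close samples, and shadow fluctuation given enough samples — are cleanly decoupled so their probabilities combine as claimed. Care is also needed that the group-splitting preserves the exponent $(\gamma/2)^{m_r}$ with $m_r=[2(r+r_0+k_0)]^D\ell$, since any degradation there feeds directly into the doubly-exponential-in-$r$ (hence $2^{\polylog(1/\epsilon)}$) scaling of $N$.
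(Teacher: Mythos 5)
Your proposal is correct and follows essentially the same route as the paper: combine the coupon-collector discretisation of \cref{Prop:Approximating_Observables_Fixed}, upgraded so that each cube receives at least $q$ samples, with the robust shadow bound of \cref{Prop:Robust_Shadows}, and multiply the two independent failure probabilities. Your explicit group-splitting argument reproduces exactly the paper's failure probability $\delta = Me^{-N(\gamma/2)^{m_r}/q+m_r\log(2/\gamma)+\log q}$ (which the paper asserts without detail), and you correctly note the $\epsilon/3$ rescaling needed to turn the three-term error $2C_1e^{-r/2\xi}+C_2(r)\gamma+\epsilon$ into the advertised $\epsilon\sum_i\|O_i\|_\infty$, a point the paper glosses over.
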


\begin{proof}
    We follow the reasoning of \cite[Theorem D.3]{Onorati_Rouze_Stilch_Franca_Watson_2023}:
    adapting the proof of \cref{Prop:Approximating_Observables_Fixed}, it is clear that with probability  $$1-\delta:=1-Me^{-N\frac{1}{q}(\gamma/2)^{m_r}+m_r\log(2/\gamma)+\log q}$$
each cube is visited at least $t$ times. Conditioned on that event, and choosing $q$ such that \Cref{ttimes} holds, we have that with probability $1-\delta'$ 
\begin{align}\label{Eq:Full_Error_Expression}
    |f_{O_i}(x)-\widetilde{f}_{O_i}(x)|\le \Big(2C_1 e^{-\frac{r}{2\xi}}+C_2(r)\gamma +\epsilon \Big)\|O_i\|_\infty\,.
\end{align}
\end{proof}

\subsection{Learning General Phases of Matter with Local Rapid Mixing}

\Cref{Sec:Fixed-Point_Phases} proved that we could efficiently learn a phase of matter which is restricted to steady states of Lindbladians. 
However, there are points in the phase which are not steady states of Lindbladians, but can nonetheless be reached by a rapid time evolution under a Lindbladian with some auxiliary state.
We wish to show that we can learn these phases too.
We first need to prove an analogue of \cref{Corollary:Localised_Lindbladian_All_Times}, but where the times are not necessarily equal:

\begin{lemma} 
    Let $\L(x) = \L + \sum \L_j(x_j)$ be a continuously parameterised family of Lindbladians, where each $\L_j(x_j)$ is a local term supported on $O(1)$ qudits and $x_j\in [-1,1]^\ell$ for $\ell = O(1)$.
    Then 
    \begin{align*}
        |f_O(\L(x),\omega,t) - f_O(\L(x|_{S(r)}),\omega,t')| \leq \norm{O_A}\left( 2C_1 e^{-r/2\xi} + J|S(r)|\delta t\right),
    \end{align*}
    for an $O(1)$ constant $C_1$ and where $|t-t'|\leq \delta t$.
\end{lemma}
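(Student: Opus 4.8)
The plan is to reduce everything to the equal-time result \cref{Corollary:Localised_Lindbladian_All_Times} by inserting an intermediate term carrying the restricted Lindbladian $\L(x|_{S(r)})$ evaluated at the \emph{original} time $t$. This splits the discrepancy into a \emph{parameter-localisation} part (same time, different Lindbladian) and a \emph{time-drift} part (same Lindbladian, different time). Concretely, I would write
\begin{align*}
    |f_O(\L(x),\omega,t) - f_O(\L(x|_{S(r)}),\omega,t')| &\leq |f_O(\L(x),\omega,t) - f_O(\L(x|_{S(r)}),\omega,t)| \\
    &\quad + |f_O(\L(x|_{S(r)}),\omega,t) - f_O(\L(x|_{S(r)}),\omega,t')|.
\end{align*}
The first summand is exactly of the shape controlled by \cref{Corollary:Localised_Lindbladian_All_Times}: the two Lindbladians agree throughout $S(r)$ and their generated states lie in the same phase, so it is bounded by $C_1 e^{-r/2\xi}\norm{O_A}$ (the stated constant $2C_1$ is a comfortable over-estimate, leaving room for a harmless second localisation if one prefers to localise the time-drift term as well). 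The only point to verify is that the corollary applies verbatim once the fixed product state $\refrho\otimes\omega$ is used as the initial state in place of $\refrho$; this is immediate, since its proof uses only that the initial state has unit trace norm.

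For the time-drift summand I would set $\mathcal{M}:=\L(x|_{S(r)})$ and apply the fundamental theorem of calculus in the form $e^{t\mathcal{M}}-e^{t'\mathcal{M}}=\int_{t'}^{t}\mathcal{M}\,e^{s\mathcal{M}}\,\d s$, together with the Heisenberg-picture identity $\tr[O\,\mathcal{M}e^{s\mathcal{M}}(\sigma)]=\tr[\mathcal{M}^*(O)\,e^{s\mathcal{M}}(\sigma)]$, to get
\begin{align*}
    |f_O(\mathcal{M},\omega,t) - f_O(\mathcal{M},\omega,t')| \leq \int_{\min(t,t')}^{\max(t,t')} \norm{\mathcal{M}^*(O)}_\infty\,\norm{e^{s\mathcal{M}}(\refrho\otimes\omega)}_1\,\d s.
\end{align*}
The crucial locality observation is that $\mathcal{M}^*$ annihilates the identity termwise, so every local generator whose support is disjoint from $\supp(O)=A$ kills $O$; hence $\mathcal{M}^*(O)=\sum_{j:\,\supp(\L_j)\cap A\neq\emptyset}\L_j^*(O)$ involves at most $|S(r)|$ terms, each of norm at most $\norm{\L_j^*}_{\infty\to\infty,cb}\norm{O_A}\leq J\norm{O_A}$. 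Since $e^{s\mathcal{M}}$ is trace preserving, $\norm{e^{s\mathcal{M}}(\refrho\otimes\omega)}_1=1$, so the integrand is bounded by $J|S(r)|\norm{O_A}$ uniformly in $s$; integrating over a window of length $|t-t'|\leq\delta t$ yields $J|S(r)|\,\delta t\,\norm{O_A}$. Adding the two contributions gives the claimed bound.

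The main thing to get right — and essentially the only place any real work hides — is the uniformity in $t$ of the time-drift estimate. A naive Duhamel bound coupled with Gr\"onwall would grow exponentially in $t$, which would be useless; the growth is killed here precisely because identity-preservation of $\mathcal{M}^*$ localises the generator's action to a fixed neighbourhood of $\supp(O)$, making $\mathcal{M}^*(O)$ a \emph{fixed} operator of bounded norm, while trace preservation of $e^{s\mathcal{M}}$ keeps the evolved state at unit trace norm. Consequently the integrand never grows and the window length $\delta t$ is the only time-dependence. Everything else is a direct invocation of the previously established equal-time localisation corollary.
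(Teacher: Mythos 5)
Your proposal is correct and follows essentially the same route as the paper: the identical triangle-inequality split into an equal-time parameter-localisation term (bounded by \cref{Corollary:Localised_Lindbladian_All_Times}) and a time-drift term, with the latter controlled by a Duhamel/fundamental-theorem-of-calculus integral of length at most $\delta t$. The one place you diverge is in bounding the integrand: the paper estimates $\norm{O_A}\norm{\L(x|_{S(r)})}_{1\to 1,cb}\norm{e^{s\L}}_{1\to 1,cb}\norm{\rho_0}_1$ directly and asserts the factor $J|S(r)|$, whereas you pass to the Heisenberg picture and use that each dual local term annihilates operators supported away from it, so that $\L^*(O_A)$ collapses to the $O(|A|)$ terms meeting $\supp(O_A)$; your version is in fact the cleaner justification of the $J|S(r)|$ factor (indeed it gives the stronger $O(J|A|)$), since the cb-norm of the full generator would naively scale with the lattice size rather than $|S(r)|$.
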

\begin{proof}
Since $\omega$ will be fixed throughout, we will neglect it for now and write $f_O(\L(x),t) \coloneqq f_O(\L(x),\omega,t)$.
    Using the triangle inequality:
    \begin{align*}
        |f_O(\L(x),t) - f_O(\L(x|_{S(r)}),t')| &\leq |f_O(\L(x),t) - f_O(\L(x|_{S(r)}),t)| + |f_O(\L(x|_{S(r)}),t) - f_O(\L(x|_{S(r)}),t')|.
    \end{align*}
    From \cref{Corollary:Localised_Lindbladian_All_Times}, the first term on the right-hand side is bounded as:
    \begin{align*}
        |f_O(\L(x),t) - f_O(\L(x|_{S(r)}),t)| \leq C_1 e^{-r/2\xi}\norm{O_A}.
    \end{align*}
    For the second term on the right-hand side, assume $t'>t$ without loss of generality, and consider:
    \begin{align*}
        f_O(\L(x|_{S(r)}),t') - f_O(\L(x|_{S(r)}),t) &= \tr \int_t^{t'} O_A\L(x|_{S(r)})e^{s\L(x|_{S(r)})}(\rho_0) ds \\
        &\leq\int_t^{t'} \norm{O_A}\norm{\L(x|_{S(r)})}_{1\rightarrow 1, cb} \norm{e^{s\L(x|_{S(r)})}}_{1\rightarrow 1, cb}  \norm{\rho_0}_1 ds \\  
        &\leq (t-t')\norm{O_A} J |S(r)|.
    \end{align*}
    Taking into account the case $t'\leq t$, we get 
    \begin{align*}
        |f_O(\L(x|_{S(r)}),t') - f_O(\L(x|_{S(r)}),t)| \leq J\delta t \norm{O_A}|S(r)|.
    \end{align*}
    Combining these two gives:
    \begin{align*}
        |f_O(\L(x),t) - f_O(\L(x|_{S(r)}),t')| \leq \norm{O_A}\left( 2C_1 e^{-r/2\xi} + J|S(r)|\delta t\right)
    \end{align*}
\end{proof}

\noindent We now move onto learning the phase of matter.
Similar to learning the steady state phase as outlined in \cref{Sec:Fixed-Point_Phases}, we will take $N$ samples, but noting that time must be considered as an additional parameter, they are labelled as $(Y_1, \tau_1), \dots (Y_N, \tau_N) \sim U$.
We are given access to the corresponding states $e^{\tau_i(t)\L(\hat{Y}_i(x))} (\rho^*\otimes \omega)$. 
We can then construct estimators:
\begin{align}
    \hat{f}(\L(x), \omega, t) &= \sum_j \tr[O_j 
 e^{\tau_i(t)\L(\hat{Y}_i(x))} (\rho^*\otimes \omega)] \\ \text{ with }\quad (\hat{Y}_i(x), \tau_i(t)) &\coloneqq \operatorname{argmin}_{({Y}_k, \tau_k)}\| (x|_{\mathcal{S}_i(r)}, t)-({Y}_k|_{\mathcal{S}_i(r)}, \tau_k)\|_{\ell^\infty}\,, 
\end{align}
where in the second expression, we treat $(x|_{\mathcal{S}_i(r)}, t)$ as a concatenated vector of the vector $x|_{\mathcal{S}_i(r)}$ and $t$.
We remember that $\refrho$ is the reference state for the phase.

In words, we approximate the expectation value of $O_i$ by that of the state whose parameters in a region around $S_i$ are the closest to the state of interest \textbf{and} who are close in time. 
We can then bound the error associated with this estimator as follows:
\begin{proposition}[Rapid Local Mixing: Learning General Lindbladian Phases] \label{Prop:Approximating_Observables_General}
Consider a phase of matter.
Then the estimator $\hat{f}(\L(x), \omega, t) = \sum_j \tr[O_j 
 e^{\tau_i(t)\L(\hat{Y}_i(x))} (\rho^*\otimes \omega)]$ satisfies the bound:
\begin{align*}
    \sup_{\substack{x\in [-1,1]^m\\ t\geq 0 \\ \omega \in W}} |f_O(\L(x), \omega,  t)) - \hat{f}_O(\L(x), \omega,  t))|\leq \epsilon \sum_{i_1}^M \norm{O_i}_\infty,
\end{align*}
with probability at least $(1-\delta)$, whenever
\begin{align*}
    N &= |W|\left( \frac{\gamma}{t_\epsilon}\right)^{-1}\left( \frac{\gamma}{2}\right)^{-[2(r+r_0+k_0)^D\ell]} \left[  \log\left(\frac{M}{\delta} \right) + [2(r+r_0+k_0)^D\ell]\log\left( \frac{\gamma}{2}\right) + \log\left( \frac{\gamma}{t_\epsilon}\right) \right].
\end{align*}
with
\begin{align*}
    r &=  2\xi \log\left( \frac{4c'|A| J (D-1)! (2\xi)^{D-1} D^{D-1}}{\epsilon e^{1/2\xi }(1-e^{-1/2\xi})}  \right)  \\
    \gamma &= \frac{\epsilon}{2[2(r+k_0)]^DJ\ell} \\
    t_\epsilon &= \frac{1}{\gamma'}\log\left(\frac{6c'(|A|)}{\epsilon}  \right).
\end{align*}

\end{proposition}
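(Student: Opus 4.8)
The plan is to mirror the proof of \cref{Prop:Approximating_Observables_Fixed} almost verbatim, now treating the evolution time $t$ as one extra continuous coordinate adjoined to the spatial parameters, and the auxiliary label $\omega$ as an extra discrete coordinate ranging over a set of size $|W|$. The single new analytic ingredient is the lemma immediately preceding this proposition, which bounds $|f_O(\L(x),\omega,t) - f_O(\L(x|_{S(r)}),\omega,t')|$ by $\norm{O_A}(2C_1 e^{-r/2\xi} + J|S(r)|\,\delta t)$ whenever $|t-t'|\le \delta t$. This plays exactly the role that \cref{Corollary:Localised_Lindbladian_All_Times} played in the steady-state case, now simultaneously controlling both the spatial-localisation error (the $e^{-r/2\xi}$ term) and the time-mismatch error (the $J|S(r)|\,\delta t$ term). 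Fixing $O_i$, I restrict attention to the parameters $x|_{\mathcal{S}_i(r)}$ supported on the enlarged region $S_i(r)$, of which there are $m_r = V(r+r_0+k_0)\ell$, and adjoin the coordinates $t$ and $\omega$.

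The one genuinely new difficulty is that $t$ ranges over the unbounded half-line $[0,\infty)$, whereas the coupon-collector argument needs a compact domain. I would resolve this with rapid mixing: by \cref{Corollary:Rapid_Mixing_II} the quantity $|\tr[O_A(T_t(\refrho) - \rho_\infty)]|$ decays like $e^{-\gamma' t}$, so choosing $t_\epsilon = \gamma'^{-1}\log(6c'(|A|)/\epsilon)$ guarantees that for every $t \ge t_\epsilon$ the value $f_O(\L(x),\omega,t)$ lies within $\epsilon/\text{const}$ of the steady-state value $f_O(\L(x),\omega,\infty) = \tr[O\,\rho_\infty(x)]$. Consequently it suffices to cover the compact time interval $[0,t_\epsilon]$ explicitly, and to handle all $t \ge t_\epsilon$ by reduction to the steady-state estimator of \cref{Prop:Approximating_Observables_Fixed}: a single sample with sufficiently large time in the matching spatial cell certifies the value to within $\epsilon$ up to the two-sided mixing error. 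Crucially $t_\epsilon = \Theta(\log(1/\epsilon))$, so this truncation contributes only a polylogarithmic factor and does not spoil the quasi-polynomial scaling.

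The covering step is then a direct coupon-collector argument on the augmented grid. I partition the compact box $\{1,\dots,|W|\} \times [-1,1]^{m_r} \times [0,t_\epsilon]$ into cells consisting of a discrete $\omega$-label, a spatial cube of side $\gamma$, and a time interval of width $\gamma$; there are $\sim |W|\,(2/\gamma)^{m_r}\,(t_\epsilon/\gamma)$ such cells, and a uniformly drawn sample lands in a prescribed cell with probability $p = |W|^{-1}(\gamma/2)^{m_r}(\gamma/t_\epsilon)$. By the coupon-collector estimate together with a union bound over all cells and over the $M$ observables, the probability that some cell of some $O_i$ is left unvisited is at most $M\,e^{-Np + \log(\#\text{cells})}$; setting this equal to $\delta$ and solving for $N$ reproduces the displayed sample bound, whose leading factor $|W|(t_\epsilon/\gamma)(2/\gamma)^{m_r}$ is precisely $1/p$ and whose bracketed term collects $\log(M/\delta)$ with the logarithm of the cell count. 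On the resulting good event every target $(x,\omega,t)$ with $t \le t_\epsilon$ admits a sample $(\hat{Y}_i(x), \tau_i(t), \omega)$ matching its $\omega$-label with $\|x|_{\mathcal{S}_i(r)} - \hat{Y}_i(x)|_{\mathcal{S}_i(r)}\|_\infty \le \gamma$ and $|t - \tau_i(t)| \le \gamma$.

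Finally I would portion the error and conclude. On the good event, applying the preceding lemma with $\delta t = \gamma$ gives $|f_{O_i} - \hat{f}_{O_i}| \le \norm{O_i}(2C_1 e^{-r/2\xi} + J|S_i(r)|\gamma)$; choosing $r$ and $\gamma$ exactly as stated forces each summand below $\epsilon\norm{O_i}/\text{const}$, and summing over $i\in[M]$ together with $f_O = \sum_i f_{O_i}$ and the triangle inequality yields the claimed uniform bound. I expect the main obstacle to be purely the unbounded time axis: once rapid mixing (\cref{Corollary:Rapid_Mixing_II}) supplies the cutoff $t_\epsilon = O(\log(1/\epsilon))$, the remainder is a faithful lift of the steady-state argument, the only bookkeeping changes being the extra $(t_\epsilon/\gamma)$ and $|W|$ factors in the cell count and the fresh use of the time-difference term $J|S_i(r)|\gamma$ from the preceding lemma.
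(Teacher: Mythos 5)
Your proposal follows the paper's proof almost exactly: the same truncation of the unbounded time axis at $t_\epsilon=O(\log(1/\epsilon))$ via rapid mixing, the same coupon-collector argument on the augmented grid $[-1,1]^{m_r}\times[0,t_\epsilon]$ (with the $|W|$ and $t_\epsilon/\gamma$ factors entering exactly as you describe), and the same error portioning.

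One accounting step is missing as written, though. The lemma preceding the proposition compares $f_O(\L(x),\omega,t)$ with $f_O(\L(x|_{S(r)}),\omega,t')$, i.e.\ the \emph{same} parameter vector restricted to $S(r)$ at a shifted time. But the estimator evaluates at $\hat{Y}_i(x)$, which only agrees with $x$ on $\mathcal{S}_i(r)$ up to $\gamma$ in $\ell^\infty$; so your final bound $\norm{O_i}\bigl(2C_1 e^{-r/2\xi}+J|S_i(r)|\gamma\bigr)$ omits the parameter-mismatch contribution. The paper handles this as a separate term in its triangle-inequality chain (the analogue of \cref{Eq:Term_2}), bounded via \cref{Lemma:Time_Evolution_Difference} by $\norm{O_i}V(r)J\ell\gamma$ --- the same perturbative estimate that appears in the steady-state proof of \cref{Prop:Approximating_Observables_Fixed} you say you are mirroring, so the fix is immediate. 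Since this extra term is also $O(V(r)J\gamma)$, including it only changes the constant in the choice of $\gamma$ (it is the reason the stated $\gamma$ carries the factor $J\ell$ in its denominator, and why the paper's proof collects the two contributions as $\gamma\norm{O_A}V(r)J(\ell+1)$); the grid construction, the sample-complexity bound, and the overall asymptotics are unaffected.
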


We note that if we take the asymptotic scaling, then $r=O(\log(1/\epsilon))$, $\gamma = O(\epsilon/\log^D(1/\epsilon))$ and $t_\epsilon = O(\log(1/\epsilon))$, which together gives a leading order term of:
\begin{align*}
    N = O\left( \log\left(\frac{M}{\delta} \right) 2^{-\textrm{polylog}(\epsilon^{-1})}  \right ).
\end{align*}

\begin{proof}[Proof of \cref{Prop:Approximating_Observables_General}.]
    We wish to be able to learn $f_O$ everywhere in the phase.
    However, while the parameter $x\in [-1,1]^m$ has a bounded domain, $t$ does not.
    We can fix this by realising we can get $\epsilon$ close to any state using \cref{Assumption:Local_Rapid_Mixing}:
         \begin{align*}
       | \tr[O_A(T_t(\rho_0) - \rho_\infty)]| \leq c'(|A|) e^{-\gamma' t}
    \end{align*}
    hence if we wish to get precision $\epsilon_{mix}$, it is sufficient to choose
    \begin{align} \label{Eq:eps_mix_definition}
        t_\epsilon = \frac{1}{\gamma'}\log\left(\frac{2c'(|A|)}{\epsilon_{mix}}  \right).
    \end{align}
    Thus to learn $f_O$ everywhere to high precision, we need only consider $t\in [0,t_\epsilon]$.

We fix $O_i$ and $r>0$, and restrict ourselves to the subset of parameters $x|_{\mathcal{S}_i(r)}$ and the time parameter $t$.
The number of parameters in that subset is bounded by the volume $V(r+r_0+k_0)$ of the ball $S_i(r+r_0)$ times the number $\ell$ of parameters per interaction. 
We denote it by $m_r:=V(r+r_0+k_0)\ell$. 
Next, we partition the parameter space $[-1,1]^{m_r}\times [0,t_\epsilon]$ onto cubes of side-length $\gamma\in (0,1)$. 
By the coupon collector's problem, we have that the probability that none of the sub-vectors $Y_j|_{\mathcal{S}_i(r)}$ is within one of those cubes is upper bounded by $e^{-N(\gamma/2)^{m_r}(\gamma/t_\epsilon)+m_r\log(2/\gamma)+\log(t_\epsilon/\gamma)}$.
By union bound, the probability that for any $i\in[M]$, any cube is visited by at least one sub-vector $Y_j|_{\mathcal{S}_i(r)}$ is lower bounded by $1-\delta$, $\delta:=Me^{-N(\gamma/2)^{m_r}(\gamma/t_\epsilon)+m_r\log(2/\gamma)+\log(t_\epsilon/\gamma)}$. 
In other words, with probability $1-\delta$ there is a $\hat{Y}_i(x)|_{\mathcal{S}_i(r)}$ in the $N$ samples satisfying 

\begin{align}\label{Eq:xminusYSir}
    \max\{ \|x|_{\mathcal{S}_i(r)}-\hat{Y}_i(x)|_{\mathcal{S}_i(r)}\|_{\infty} \ , \ |t-\tau_i| \}\le \gamma
\end{align}
 for all $i\in[M]$. 

 We wish to bound the following quantity:
 \begin{align} 
     |f_O(\L(x), \omega,  t)) - \hat{f}_O(\L(x), \omega,  t))| 
     &\leq |f_O(\L(x),t) -f_O(\L(x|_{S(r)}),t) | \label{Eq:Space_Time_Parameters} \\
     &+|f_O(\L(x|_{S(r)}),t) -f_O(\L(\hat{Y}_i(x)|_{S(r)}),t) | \label{Eq:Term_2}\\
    &+ |f_O(\L(\hat{Y}_i(x)|_{S(r)}),t)  -\hat{f}_O(\L(x),t) |, \label{Eq:Term_3} 
 \end{align}
 where we have used the triangle inequality.

Considering the second term, \cref{Eq:Term_2}, on the right-hand side and applying  \cref{Lemma:Time_Evolution_Difference}:
    \begin{align*}
  |f_O(\L(x|_{S(r)}),t) -f_O(\L(\hat{Y}_i(x)|_{S(r)}),t) |  \leq& \bigg| \tr\bigg[ O_A \int_0^t e^{-(t-s)\L(x|_{S(r)})}(\L(x|_{S(r)}) \\ &- \L(\hat{Y_i}(x)|_{S(r)}))e^{-s\L(\hat{Y_i}(x)|_{S(r)})} (\refrho) ds\bigg] \bigg| \\
    \leq& \bigg| \sum_{u} \tr\left[ \refrho \int_0^{t} S_{t-s}^* E_{u,r}^*T_s^*(O_A) ds     \right] \bigg| \\
     \leq& \sum_{u}|x_{u,r}-\hat{y}_{u,r}| \bigg| \tr\left[ \refrho \int_0^{t} S_{t-s}^* \L_{u,r}^* T_s^*(O_A) ds   \bigg|  \right],
    \end{align*}
    where here we have used $x_{u,r},\hat{y}_{u,r}$ to denote the elements of $x, \hat{Y}_i$ which describe the term $\L_{u,r}$, and we have that the Lindbladian is linearly parameterised.
    We then bound this as per \cref{Eq:Difference_Lindbladian_Single_Term}:
    \begin{align*}
    &\leq  c(|A|)\norm{O_A} \sum_{i\in S(r)} J\norm{x|_{S(r)} - \hat{Y_i}(x)|_{S(r)}}_{\ell_\infty} \\
    &\leq c(|A|)\norm{O_A} V(r)J \gamma.
\end{align*}
For the third term on the right-hand side \cref{Eq:Term_3},
    $|f_O(\L(\hat{Y}_i(x)|_{S(r)}),t)  -\hat{f}_O(\L(x),t) | $
assume $\tau_i(t)>t$ without loss of generality, and consider: 
    \begin{align*}
    |f_O(\L(\hat{Y}_i(x)|_{S(r)}),t)  -\hat{f}_O(\L(x),t)| &= |f_O(\L(\hat{Y}_i(x)|_{S(r)}),t)  -f_O(\L(\hat{Y}_i(x)|_{S(r)}),\tau_i(t))|  \\
         &= \tr \int_t^{\tau_i(t)} O_A\L(x|_{S(r)})e^{s\L(x|_{S(r)})}(\refrho) ds \\
        &\leq\int_t^{\tau_i(t)} \norm{O_A}\norm{\L(x|_{S(r)})}_{1\rightarrow 1, cb} \norm{e^{s\L(x|_{S(r)})}}_{1\rightarrow 1, cb}  \norm{\refrho}_1 ds \\  
        &\leq (\tau_i(t)-t)\norm{O_A} J |S(r)|.
    \end{align*}
    Taking into account the case $\tau_i(t)\leq t$, we get 
    \begin{align*}
         |f_O(\L(\hat{Y}_i(x)|_{S(r)}),t)  -\hat{f}_O(\L(x),t)| &\leq J |\tau_i(t)- t| \norm{O_A}|S(r)| \\
         &\leq J \gamma \norm{O_A}|S(r)|. 
    \end{align*}
    where the last line follows from $|\tau_i-t|\leq \gamma$ as per \cref{Eq:xminusYSir}.
    Thus using \cref{Eq:Space_Time_Parameters} we can bound the total difference as:
    \begin{align*}
        |f_O(\L(x|_{S(r)}),t) -\hat{f}_O(\L(x),t) | &\leq \norm{O_A} V(r)J\ell \gamma + \gamma J  \norm{O_A}|S(r)|
    \end{align*}
    Finally, if we make the restriction that $\tau_i(t)\leq t_\epsilon$, then we have additional error from this assumption (see \cref{Eq:eps_mix_definition} for definition of $\epsilon_{mix}$):
     \begin{align*}
        |f_O(\L(x|_{S(r)}),t) -\hat{f}_O(\L(x),t) | &\leq \norm{O_A}V(r)J\ell \gamma + \gamma J  \norm{O_A}|S(r)| + \epsilon_{mix} \\
        &= \gamma\norm{O_A}V(r)J\ell  + \gamma J  \norm{O_A}V(r) + \epsilon_{mix} \\
        &= \gamma\norm{O_A}V(r)J(\ell+1) + c(|A|)\norm{O_A}e^{-\gamma' t_\epsilon} \\
        &\leq \gamma\norm{O_A}J(\ell+1)(2r)^D + c(|A|)\norm{O_A}e^{-\gamma' t_\epsilon}.
    \end{align*}
    We now consider the total error according to \cref{Eq:Space_Time_Parameters}. 
    First consider: 
    \begin{align*}
    |f_O(\L(x),t) - f_O(\L(x|_{S(r)}),t)|, \ \ |\tr[O_A(\rho(\hat{Y_i}(x),t)- \rho(\hat{Y_i}(x)|_{S(r)}),t)]| \leq C_1e^{-r/2\xi}\norm{O_A},
\end{align*}
where the bound is given by \cref{Corollary:Localised_Lindbladian_All_Times}.
Using this and applying the bounds to \cref{Eq:Space_Time_Parameters}, we have: 
\begin{align*}
    |f_O(\L(x),t) -\hat{f}_O(\L(x),t) | &\leq   2C_1 e^{-r/2\xi} \norm{O_A} +  \gamma\norm{O_A}J(\ell+1)(2r)^D + c(|A|)\norm{O_A}e^{-\gamma' t_\epsilon}.
\end{align*}
We arbitrarily choose the error budget such that:
\begin{align*}
    2C_1 e^{-r/2\xi} \norm{O_A}, \; \gamma\norm{O_A}J(\ell+1)(2r)^D, \; c(|A|)\norm{O_A}e^{-\gamma' t_\epsilon}\leq \epsilon/3.
\end{align*}
This allows us to choose:
\begin{align*}
    r  &= 2\xi \log\left( \frac{24C_1}{\epsilon}  \right) \\
    \gamma &=  \frac{\epsilon}{J(\ell+1)(2r)^D}  \\
    t_\epsilon &= \frac{1}{\gamma'}\log \left( \frac{3c(|A|)}{\epsilon} \right).
\end{align*}

\noindent Doing so gives us a number of samples scaling as:
\begin{align*}
    N &=  \left( \frac{t_\epsilon}{\gamma}\right)\left( \frac{2}{\gamma}\right)^{m_r} \left[  \log\left(\frac{M}{\delta} \right) + \log\left( \frac{\gamma}{2}\right)^{m_r} + \log\left( \frac{\gamma}{t_\epsilon}\right) \right] \\
    &= \left( \frac{\gamma}{t_\epsilon}\right)^{-1}\left( \frac{\gamma}{2}\right)^{-[2(r+r_0+k_0)\ell]} \left[  \log\left(\frac{M}{\delta} \right) + [2(r+r_0+k_0)\ell]\log\left( \frac{\gamma}{2}\right) + \log\left( \frac{\gamma}{t_\epsilon}\right) \right].
\end{align*}

\noindent Finally, for need to repeat this process for every $\omega_i\in W$.
However, since $|W|=O(1)$, this only needs to be done a constant number of times.
This gives the lemma statement.
 
\end{proof}

We now have results showing that we learn phases of matter if we have estimates of observables to sufficient precision everywhere. 
We now need to show that we can generate such estimates efficiently. 
To do so, we again enlist the robust classical shadows technique discussed previously, and formalised in \cref{Prop:Robust_Shadows}.

Consider a state from the phase $\rho(x, \omega,  \tau)$ and a family $\rho(x_1, \omega, \tau_1),\dots,\rho(x_N\omega, \tau_N)$ of states with the promise that for any $i\in [M]$ there exist $t$ vectors $(x_{i_1}, \tau_{i_1}),\dots, (x_{i_t}, \tau_{i_t})$ such that $\max_{j\in[t]}\| (x|_{\mathcal{S}_i(r)}, \tau)-(x_{i_j}|_{\mathcal{S}_i(r)}, \tau_{i_j})\|_\infty \le \gamma$. 
We run the shadow protocol and construct product operators $\widetilde{\rho}( x_1, \omega, \tau_1),\dots,\widetilde{\rho}(x_N, \omega, \tau_N)$. 
Then for any ball $B$ of radius $k_0$, we select the shadows $\widetilde{\rho}(x_{i_1}, \omega, \tau_{i_1}),\dots \widetilde{\rho}(x_{i_t}, \omega, \tau_{i_t})$ and construct the empirical average
\begin{align} \label{Eq:Aggregated_Shadow_2}
    \widetilde{\rho}_{B}(x,\omega, \tau):=\frac{1}{q}\sum_{j=1}^q\,\tr_{B^c}\big[\widetilde{\rho}(x_{i_j}, \omega, \tau_{i_j})\big]\,.
\end{align}

\noindent Using this we can state our main result of this work (the formal version of \cref{Theorem:Main_Result}).

\begin{theorem}[Learning Phases of Matter] \label{Theorem:Learning_General_Phase}
Use the notation of \cref{Prop:Approximating_Observables_General}.
    Denote $\tilde{f}_{O_i}(\L(x), \omega,  t))=\tr\big[O_i\, \widetilde{\rho}_{S_i}(x,\omega, t)\big]$ the function constructed from the shadow tomography protocol of \Cref{Prop:Robust_Shadows}.
    Consider a set of $N$ shadows $\{\widetilde{\rho}(x_i, ,\omega_i, \tau_i) \}_{i=1}^N$ and suppose we wish that:
    \begin{align*}
    \sup_{\substack{x\in [-1,1]^m \\ t\geq 0 \\ \omega \in W}} |f_O(\L(x), \omega,  t)) - \tilde{f}_O(\L(x), \omega,  t))|\leq \epsilon \sum_{i_1}^M \norm{O_i}_\infty,
\end{align*}
with probability $(1-\delta).(1-\delta')$, with associated parameters:
\begin{align*}
     r &=  2\xi \log\left( \frac{4c'|A| J (D-1)! (2\xi)^{D-1} D^{D-1}}{\epsilon e^{1/2\xi }(1-e^{-1/2\xi})}  \right)  \\
    \gamma &= \frac{\epsilon}{2[2(r+k_0)]^DJ\ell} \\
    t_\epsilon &= \frac{1}{\gamma'}\log\left(\frac{6c'(|A|)}{\epsilon}  \right) \\
    q &= \frac{8.12^{k_0}}{3.\epsilon^2}\log\left(\frac{n^{k_0}2^{k_0+1}}{\delta'}\right)\,.
\end{align*}
Then it is sufficient to choose:
\begin{align*}
    N=  |W|q\left( \frac{\gamma}{t_\epsilon}\right)^{-1}\left( \frac{\gamma}{2}\right)^{-[2(r+r_0+k_0)\ell]} \left[  \log\left(\frac{M}{\delta} \right) + [2(r+r_0+k_0)\ell]\log\left( \frac{\gamma}{2}\right) + \log(q) + \log\left( \frac{\gamma}{t_\epsilon}\right) \right].
\end{align*}
\end{theorem}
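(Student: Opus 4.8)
The plan is to combine the approximation guarantee for the idealised estimator from \cref{Prop:Approximating_Observables_General} with the robust shadow tomography guarantee of \cref{Prop:Robust_Shadows}, the only genuinely new ingredient being a strengthened occupancy argument that forces every relevant cube in parameter--time space to be visited not merely once but at least $q$ times. First I would carry over the time-truncation step from \cref{Prop:Approximating_Observables_General}: using local rapid mixing (\cref{Assumption:Local_Rapid_Mixing}) I restrict attention to $t\in[0,t_\epsilon]$ at the cost of an error $c'(|A|)e^{-\gamma' t_\epsilon}$, which is absorbed into the error budget by the stated choice of $t_\epsilon$. I then partition $[-1,1]^{m_r}\times[0,t_\epsilon]$ into cubes of side $\gamma$, where $m_r=V(r+r_0+k_0)\ell$ counts the parameters on the enlarged support $S_i(r)$ together with the extra time coordinate.

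The crucial modification is that, in place of the single-sample-per-cube requirement of \cref{Prop:Approximating_Observables_General}, the shadow average of \cref{Eq:Aggregated_Shadow_2} needs $q$ independent samples in each cube. A coupon-collector estimate with multiplicity gives that the probability that some cube receives fewer than $q$ of the $N$ samples is bounded, after a union bound over the $M$ observables, by $M\exp\!\big(-N\tfrac{1}{q}(\gamma/2)^{m_r}(\gamma/t_\epsilon)+m_r\log(2/\gamma)+\log(t_\epsilon/\gamma)+\log q\big)$; setting this equal to $\delta$ and solving for $N$ reproduces the displayed sample count, including the overall factor $q$ and the additive $\log q$ term. Conditioned on this good event, for each $O_i$ there are at least $q$ shadows whose parameters agree with $(x|_{\mathcal{S}_i(r)},t)$ to within $\gamma$ in $\ell^\infty$, which is exactly the hypothesis needed for \cref{Prop:Robust_Shadows}.

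Next I would invoke \cref{Prop:Robust_Shadows} with $q$ chosen to satisfy \cref{ttimes}: with probability $1-\delta'$ the aggregated shadow obeys $\|\widetilde{\rho}_{S_i}(x,\omega,t)-\tr_{S_i^c}[\rho(\L(x),\omega,t)]\|_1\le 2C_1e^{-r/2\xi}+C_2(r)\gamma+\epsilon$. Since each $O_i$ is supported on $S_i$ we have $\tr[O_i\,\rho(\L(x),\omega,t)]=\tr[O_i\,\tr_{S_i^c}\rho(\L(x),\omega,t)]$, so by H\"older $|f_{O_i}-\tilde{f}_{O_i}|\le \|O_i\|_\infty\big(2C_1e^{-r/2\xi}+C_2(r)\gamma+\epsilon\big)$. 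Summing over $i$ and portioning the localisation, discretisation, time-truncation, and shadow-statistical contributions each to a fixed fraction of $\epsilon$ via the stated $r,\gamma,t_\epsilon,q$ yields $\sup_{x,t,\omega}|f_O-\tilde{f}_O|\le \epsilon\sum_i\|O_i\|_\infty$ with total probability $(1-\delta)(1-\delta')$. Finally, since the auxiliary state ranges over a set of size $|W|=O(1)$, I repeat the argument once per $\omega\in W$, multiplying $N$ by $|W|$ and giving the claimed bound.

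The main obstacle I anticipate is the occupancy bound with multiplicity $q$: one must verify that demanding $q$ hits per cube rather than one costs only a multiplicative factor $q$ together with the benign additive $\log q$ correction, which requires a careful concentration and union-bound estimate rather than the elementary coupon-collector tail used in \cref{Prop:Approximating_Observables_General}. The remaining steps are essentially bookkeeping: tracking how the two independent failure modes compose into $(1-\delta)(1-\delta')$, and checking that the error budget splits cleanly across the localisation, parameter-discretisation, time-truncation, and shadow-statistical terms.
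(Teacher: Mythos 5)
Your proposal follows essentially the same route as the paper's proof: a coupon-collector/occupancy argument strengthened to require $q$ visits per cube in the discretised parameter--time space (yielding exactly the failure probability $Me^{-N\frac{1}{q}(\gamma/2)^{m_r}(\gamma/t_\epsilon)+m_r\log(2/\gamma)+\log(t_\epsilon/\gamma)+\log q}$), followed by an application of \cref{Prop:Robust_Shadows} conditioned on that event, an error-budget split across the localisation, discretisation, time-truncation and shadow-statistical terms, and a final multiplication by $|W|$ for the auxiliary states. The paper states the multiplicity-$q$ occupancy step without further justification (deferring to the analogous argument in \cite{Onorati_Rouze_Stilch_Franca_Watson_2023}), so the concern you flag is one the paper also leaves implicit rather than a gap in your argument.
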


\begin{proof}
    We follow the reasoning of \cite[Theorem D.3]{Onorati_Rouze_Stilch_Franca_Watson_2023}:
    adapting the proof of \cref{Prop:Approximating_Observables_General}, it is clear that with probability  $$1-\delta:=1-Me^{-N\frac{1}{q}(\gamma/2)^{m_r}(\gamma/t_{\epsilon})+m_r\log(2/\gamma)+\log(t_\epsilon/\gamma)+\log q}$$
each cube is visited at least $q$ times. 
Conditioned on that event, and choosing $t$ such that \Cref{ttimes} holds, we have that with probability $1-\delta'$ 
\begin{align*}
    |f_O(\L(x),t) -\tilde{f}_O(\L(x),t) | &\leq   C_1 e^{-r/2\xi} \norm{O_A} +  \gamma\norm{O_A}J(\ell+1)(2r)^D + c(|A|)\norm{O_A}e^{-\gamma' t_\epsilon}.
\end{align*}
We then choose the parameters to achieve the error in the theorem statement.
\end{proof}

\section{Learning Families of States with Slow Mixing}


We can also consider the case where rapid local mixing does not occur \textbf{and} we allow for slower mixing that the $\polylog(n)$ global mixing. 
That is, we will replace the mixing assumption with 
\begin{align}
    \norm{ e^{t\mathcal{L}}(\rho_0\otimes \omega_0) - \rho_1\otimes \omega_1    }_1 \leq f(n)e^{-\gamma't}.
\end{align}
As a result, for a general local observable we have that the local mixing is only as fast as the global mixing between states in the same phase:
    \begin{align}\label{Eq:Slow_Local_Mixing}
       | \tr[O_A(T_t(\rho_0\otimes \omega_0) - \rho_\infty\otimes \omega _1)]| \leq \norm{O_A} f(n)e^{-\gamma' t}.
    \end{align}
Similarly to the rapidly mixing case, we can consider the set of steady states under this slow mixing condition.
We call such a set of states a \textit{Steady State Phase with Slow Mixing} (as they satisfy a similar condition the Lindbladian definition of phase).

\subsection{Slow Mixing: Steady State Phases}

We first consider learning steady state phases, as defined in \cref{Def:Fixed_Point_Phase}.
\noindent Implementing the slow mixing condition in the proof of \cref{Lemma:Time_Evolution_Difference} gives a weaker bound:
\begin{lemma}[Slow Local Mixing: Perturbation Bounds] \label{Lemma:Slow_Local_Time_Evolution}
    Let $\L$ be a Lindbladian, and $\L'=\L+E$ another local Lindbladian, where
    \begin{align*}
        E = \sum_{u,r}E_{u,r}
    \end{align*}
    and where $\L,\L'$ the steady states of both Lindbladians are in the same phase.
    Then:
    \begin{align*}
        |f_O(\L,t) - f_O(\L',t)| \leq c(n)\norm{O_A}\norm{E^*}_{\infty\rightarrow \infty, cb}.
    \end{align*}
    where $c(n)=O(f(n))$.
\end{lemma}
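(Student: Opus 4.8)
The plan is to reprise the proof of \cref{Lemma:Time_Evolution_Difference} line by line, changing only the step where the mixing hypothesis is invoked: wherever the rapid mixing condition (\cref{Assumption:Local_Rapid_Mixing}) was used, I would instead insert the slow mixing estimate \cref{Eq:Slow_Local_Mixing}. Writing $T_t,S_t$ for the semigroups generated by $\L,\L'$, I begin from the Duhamel identity
\begin{align*}
    T_t^*(O_A) - S_t^*(O_A) = \sum_{u,r}\int_0^t S_{t-s}^*\, E_{u,r}^*\, T_s^*(O_A)\,ds,
\end{align*}
so that $|f_O(\L,t)-f_O(\L',t)| = \big|\sum_{u,r}\tr[\refrho\int_0^t S_{t-s}^* E_{u,r}^* T_s^*(O_A)\,ds]\big|$, and for each $(u,r)$ I split the $s$-integral at the same cutoff $t_0(d)=\tfrac{\mu}{2}\tfrac{\log(v^2/2)}{v}\,d$, with $d=\operatorname{dist}(A,\supp E_{u,r})$.

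The short-time piece over $[0,t_0(d)]$ is bounded exactly as before by the Lieb-Robinson estimate of \cref{Lemma:Localised_Evolution}, contributing a factor $\nu_{\mu/2}^{-1}(d)=e^{-\mu d/2}$; this part never references the mixing rate and is therefore unchanged. The only genuine modification is in the long-time piece over $[t_0(d),t]$. There I subtract the stationary value, using $T_\infty^*(O_A)=\tr[O_A\rhofixed]\mathds{1}$ and $E_{u,r}^*(\mathds{1})=0$, to write the integrand in terms of $T_s^*(O_A)-\tr[O_A\rhofixed]\mathds{1}$, and then apply \cref{Eq:Slow_Local_Mixing} in place of rapid mixing:
\begin{align*}
    \Big|\tr\Big[\refrho\int_{t_0(d)}^t S_{t-s}^* E_{u,r}^*\big(T_s^*(O_A)-\tr[O_A\rhofixed]\mathds{1}\big)\,ds\Big]\Big| \leq \norm{E_{u,r}^*}_{\infty\rightarrow\infty,cb}\norm{O_A}\,f(n)\int_{t_0(d)}^\infty e^{-\gamma' s}\,ds.
\end{align*}
This replaces the polynomial prefactor $\poly(|A|)$ of the original proof by the system-size factor $f(n)$ and the rate $\gamma$ by $\gamma'$; the remaining integral equals $\tfrac{f(n)}{\gamma'}e^{-\gamma' t_0(d)}$, which still decays exponentially in $d$ because $t_0(d)\propto d$.

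Collecting both pieces, each term is bounded by $\norm{E_{u,r}^*}_{\infty\rightarrow\infty,cb}\norm{O_A}\,h(d)$ with $h(d)=e^{-\mu d/2}+\tfrac{f(n)}{\gamma'}e^{-\gamma' t_0(d)}$, the slow-mixing analogue of the function $h$ appearing in \cref{Lemma:Time_Evolution_Difference}. Summing over the polynomially many pairs $(u,r)$ and factoring out $\norm{E^*}_{\infty\rightarrow\infty,cb}$ yields the stated inequality with $c(n)=O(f(n))$. The main point requiring care --- and essentially the only one --- is to confirm that $f(n)$ enters purely as a global, $d$-independent prefactor, so that it does not disturb the exponential spatial decay that renders $\sum_{u,r}h(d_{u,r})$ convergent; since the decay in $d$ comes from $t_0(d)\propto d$ and survives intact, $f(n)$ factors cleanly through the sum and all the polynomial-in-$|A|$ constants of the rapid-mixing argument are absorbed into $c(n)$.
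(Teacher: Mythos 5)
Your proposal is correct and follows essentially the same route as the paper, whose proof is itself just a sketch instructing the reader to rerun the proof of \cref{Lemma:Time_Evolution_Difference} and substitute the slow mixing bound \cref{Eq:Slow_Local_Mixing} at the step corresponding to \cref{Eq:Second_Int_Line_2}. Your additional observation that $f(n)$ enters only as a $d$-independent prefactor, leaving the exponential spatial decay and hence the convergence of $\sum_{u,r}h(d_{u,r})$ intact, is exactly the point that makes the substitution harmless.
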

\begin{proof}[Proof Sketch]
    We follow the proof of \cref{Lemma:Time_Evolution_Difference} until we reach \cref{Eq:Second_Int_Line_2}.
    At this point we use the bound \cref{Eq:Slow_Local_Mixing}.
    Continuing from here gives the lemma statement.
\end{proof}

\noindent With this lemma in hand, we can follow through the same proof as for the rapid mixing proof.
In particular, in analogy to \cref{Corollary:Localised_Lindbladian_All_Times}, we get the following: 
\begin{corollary}[Slow Local Mixing: Localising Functions] \label{Corollary:Slow_Mixing_Localised_Lindbladian_All_Times}
    Let $\L(x) = \L + \sum \L_j(x_j)$ be a continuously parameterised family of Lindbladians, where each $L_j(x_j)$ is a local term supported on $O(1)$ qudits and $x_j\in [-1,1]^\ell$ for a constant $\ell$.
    Then 
    \begin{align*}
        |f_O(\L(x),t) - f_O(\L(x|_{S(r)}),t)| \leq C_1(n) e^{-r/2\xi}\norm{O_A},
    \end{align*}
    for an $C_1(n) = O(f(n))$.
\end{corollary}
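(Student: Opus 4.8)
The plan is to retrace the proof of \cref{Corollary:Localised_Lindbladian_All_Times} essentially verbatim, substituting the rapid-mixing perturbation estimate \cref{Lemma:Time_Evolution_Difference} with its slow-mixing counterpart \cref{Lemma:Slow_Local_Time_Evolution}. Recall that the rapid-mixing bound is obtained by writing the difference of evolved observables as a time integral of the perturbation $E=\L(x)-\L(x|_{S(r)})$, splitting this integral at a distance-dependent time $t_0(d)$, and bounding the short-time piece by Lieb--Robinson locality and the long-time piece by mixing to the steady state. This produces the per-term estimate \cref{Eq:Difference_Lindbladian_Single_Term}, in which each local term $\L_{u,r}$ at distance $d$ from $\supp(O)$ is suppressed by a spatially decaying weight $h(d)$. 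The only structural change under the slow-mixing hypothesis \cref{Eq:Slow_Local_Mixing} is that the long-time contribution now carries the prefactor $f(n)$ (with decay rate $\gamma'$) in place of the $O(1)$ mixing constant, so that $h(d)\le c'e^{-d/\xi}$ with $1/\xi=\min\{\gamma',\mu/2\}$ and the overall constant scales as $c(n)=O(f(n))$.

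Concretely, I would first reproduce the sum over off-region terms analogous to \cref{Eq:Sum_of_Differences}, obtaining
\begin{align*}
    |f_O(\L(x),t)-f_O(\L(x|_{S(r)}),t)|\le \norm{O_A}\,|A|\,\max_j\norm{\L_j}_{1\rightarrow 1,cb}\,c(n)\sum_{j\in S^c(r)}h(d_j),
\end{align*}
where $d_j$ denotes the distance from $\supp(O)$ to $\supp(\L_j)$ and $c(n)=O(f(n))$ comes from \cref{Lemma:Slow_Local_Time_Evolution}. Shifting the lattice origin to the centre of $\supp(O)$ and using $h(d)\le c'e^{-d/\xi}$, the remaining sum over lattice shells $|\ell|>r$ is identical to the rapid-mixing case: the shell volume is bounded by $\binom{a+D-1}{D-1}\le D^{D-1}a^{D-1}$, and the resulting series $\sum_{a>k_0+r/2}a^{D-1}e^{-a/\xi}$ converges to a quantity proportional to $e^{-r/2\xi}$. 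Absorbing $c(n)$, the combinatorial constants, and the summed weights into a single prefactor then yields the claimed bound with $C_1(n)=O(f(n))$.

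The one point requiring genuine care is verifying that the system-size dependence enters \emph{only} through the multiplicative prefactor and never degrades the spatial decay rate $e^{-r/2\xi}$. This is what makes the argument work: the short-time Lieb--Robinson estimate and the choice $t_0(d)=\tfrac{\mu}{2}\tfrac{\log(v^2/2)}{v}\,d$ depend solely on the Lindbladian strength $(J,f)$ and the Lieb--Robinson constants $\mu,v$ from \cref{Assumption:Lieb-Robinson_Condition}, all of which are independent of $n$, whereas $f(n)$ multiplies only the long-time mixing term. Consequently the exponent governing the decay in $r$ is unchanged relative to \cref{Corollary:Localised_Lindbladian_All_Times}, and the entire effect of slow mixing is confined to inflating the constant from $C_1=O(1)$ to $C_1(n)=O(f(n))$. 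I expect this separation to be the main obstacle to check carefully, since one must confirm that no hidden $f(n)$ factor sneaks into the Lieb--Robinson weights or the geometric summation and thereby spoils the clean decay in $r$.
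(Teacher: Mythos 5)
Your proposal is correct and follows essentially the same route as the paper, which likewise obtains \cref{Corollary:Slow_Mixing_Localised_Lindbladian_All_Times} by substituting \cref{Lemma:Slow_Local_Time_Evolution} for \cref{Lemma:Time_Evolution_Difference} and then repeating the geometric summation from the proof of \cref{Corollary:Localised_Lindbladian_All_Times}. Your observation that the $f(n)$ factor enters only through the long-time mixing term and leaves the spatial decay rate $e^{-r/2\xi}$ untouched is exactly the point the paper relies on.
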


Pushing these changes through the lemmas about learning --- in particular \cref{Prop:Approximating_Observables_Fixed}, the following can be shown:

\begin{theorem}[Slow Local Mixing: Learning steady state Phases]
Use the notation of \cref{Prop:Approximating_Observables_Fixed} and \cref{Eq:Aggregated_Shadow_2}.
    Denote $\widetilde{f}_{O_i}(x)=\tr\big[O_i\, \widetilde{\rho}_{S_i}(x)\big]$ the function constructed from the shadow tomography protocol of \Cref{Prop:Robust_Shadows}.
    Consider a set of $N$ shadows $\{\widetilde{\rho}(\beta,x_i) \}_{i=1}^N$ and suppose we wish that:
    \begin{align*}
        |f_{O}(x)-\widetilde{f}_O(x)|\le  \epsilon\,\sum_{i}\,\|O_i\|_\infty\,,
    \end{align*}
with probability $(1-\delta).(1-\delta')$, with associated parameters:
\begin{align*}
    r &=  2\xi \log\left( \frac{4c'|A| J (D-1)! (2\xi)^{D-1} D^{D-1} f(n)}{\epsilon e^{1/2\xi }(1-e^{-1/2\xi})}  \right)  \\
    \gamma &= \frac{\epsilon}{2[2(r+k_0)]^DJ\ell} \\
    q &= \frac{8.12^{k_0}}{3.\epsilon^2}\log\left(\frac{n^{k_0}2^{k_0+1}}{\delta'}\right)\,.
\end{align*}
Then it is sufficient to choose:
\begin{align*}
    N&= q\left( \frac{\gamma}{2}\right)^{-[2(r+r_0+k_0)]^D\ell} \log\left(\frac{M}{\delta} \right) + q\left( \frac{\gamma}{2} \right)^{-[2(r+r_0+k_0)]^D\ell}\log\left( q\frac{\gamma}{2}\right)^{[2(r+r_0+k_0)]^D\ell} \\
    N&= O\bigg(  2^{\polylog(f(n)/\epsilon)}\log\bigg(\frac{M}{\delta}\bigg)\log\bigg(\frac{n}{\delta}\bigg)\bigg).
\end{align*}
\end{theorem}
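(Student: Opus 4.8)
The plan is to repeat, essentially verbatim, the argument proving the rapid-mixing Learning Steady State Phases theorem, but feeding in the slow-mixing replacements for the two perturbation estimates. First I would invoke \cref{Lemma:Slow_Local_Time_Evolution} in place of \cref{Lemma:Time_Evolution_Difference}, and \cref{Corollary:Slow_Mixing_Localised_Lindbladian_All_Times} in place of \cref{Corollary:Localised_Lindbladian_All_Times}; the only effect is that the localisation constant becomes $C_1 = C_1(n) = O(f(n))$ rather than an $O(1)$ constant. Following the proof of \cref{Prop:Approximating_Observables_Fixed}, the per-term estimator error then decomposes as
\begin{align*}
    |f_{O_i}(x) - \hat{f}_{O_i}(x)| \leq \big( 2C_1(n)\,e^{-r/2\xi} + C_2(r)\gamma \big)\,\norm{O_i}_\infty,
\end{align*}
with $C_2(r) = V(r)J\ell$ unchanged, where $r$ is the radius of the enlarged region $S_i(r)$ and $\gamma$ the side-length of the parameter cubes.

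The crucial — and only genuinely new — step is that forcing the truncation term $2C_1(n)e^{-r/2\xi}\le \epsilon/2$ now requires enlarging the region to radius $r = \Theta\big(2\xi \log(f(n)/\epsilon)\big)$, so that the $f(n)$ dependence enters the proof solely through the logarithm inside $r$. I would then choose $\gamma = \epsilon/(2\,C_2(r)) = \epsilon/(2[2(r+k_0)]^D J\ell)$ exactly as before to control the second term. The remaining bookkeeping is identical: the number of restricted parameters is $m_r = V(r+r_0+k_0)\ell = [2(r+r_0+k_0)]^D\ell$, and since $r = \Theta(\log(f(n)/\epsilon))$ this is $m_r = \Theta(\polylog(f(n)/\epsilon))$.

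Next I would run the coupon-collector and union-bound argument verbatim: partition $[-1,1]^{m_r}$ into cubes of side $\gamma$, so that the probability no sample lands in a given cube is at most $e^{-N(\gamma/2)^{m_r}+m_r\log(2/\gamma)}$, and union-bound over the $M$ observables. Layering in the robust classical shadows of \cref{Prop:Robust_Shadows}, which demand $q = O(\epsilon^{-2}\log(n^{k_0}/\delta'))$ single-shot shadows per cube, multiplies the sample count by $q$ and supplies the $\log(n/\delta)$ factor. Collecting these, and using that $(\gamma/2)^{-m_r} = 2^{m_r \log(2/\gamma)} = 2^{\polylog(f(n)/\epsilon)}$ since both $m_r$ and $\log(1/\gamma)$ are $\polylog(f(n)/\epsilon)$, yields the claimed
\begin{align*}
    N = O\Big( 2^{\polylog(f(n)/\epsilon)}\,\log\big(\tfrac{M}{\delta}\big)\,\log\big(\tfrac{n}{\delta}\big) \Big).
\end{align*}

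I expect the main obstacle to be one of bookkeeping rather than of new ideas: one must verify that the dependence on $f(n)$ propagates only through the logarithm in $r$, and hence produces at most a quasi-polynomial — not genuinely exponential — blow-up in the cube count. A secondary point worth checking is consistency of the enlargement: since $r$ now grows with $n$ (e.g. $r = \Theta(\log n)$ when $f(n) = \poly(n)$), one should confirm $r \le s_0$ so that $S_i(r)$ remains a proper subregion and the decay estimate of \cref{Corollary:Slow_Mixing_Localised_Lindbladian_All_Times} still applies; this holds for all sufficiently large $n$.
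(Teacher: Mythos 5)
Your proposal is correct and follows essentially the same route as the paper: substitute the slow-mixing perturbation bounds so that $C_1$ becomes $C_1(n)=O(f(n))$, absorb the $f(n)$ dependence into the logarithm defining $r$, keep $\gamma=\epsilon/(2C_2(r))$, and rerun the coupon-collector/union-bound argument combined with the robust shadow estimate of \cref{Prop:Robust_Shadows}. The consistency check you flag (that $r$ now grows with $n$) is a reasonable extra observation, but the core argument matches the paper's proof.
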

\begin{proof}
     It is clear that with probability  $$1-\delta:=1-Me^{-N\frac{1}{q}(\gamma/2)^{m_r}+m_r\log(2/\gamma)+\log t}$$
    each cube is visited at least $q$ times.
    Thus the appropriate value of $N$ is given by:
    \begin{align*}
         N= q\left( \frac{\gamma}{2}\right)^{-m_r} \log\left(\frac{M}{\delta} \right) + q\left( \frac{\gamma}{2} \right)^{-m_r}\log\left( q\frac{\gamma}{2}\right)^{m_r}. 
    \end{align*}
    where $m_r = [2(r+r_0+k_0)]^D\ell$.

    To make an appropriate choice of $r,\gamma$, we need to understand the error between the localised function and the full version.
    We we use an analogy to \cref{Eq:Error_Actual_Learned}:
    \begin{align*}
    |f_{O_i}(x)-\widetilde{f}_{O_i}(x)|\le \Big(2C_1(n) e^{-\frac{r}{2\xi}}+C_2(r)\gamma +\epsilon \Big)\|O_i\|_\infty\,
    \end{align*}
    where $C_1(n)=O(f(n))$ as per \cref{Corollary:Slow_Mixing_Localised_Lindbladian_All_Times} and $C_2(r)=2[2(r+k_0)]^DJ\ell$ is defined as per the proof of \cref{Prop:Approximating_Observables_Fixed}. 
    Setting these to be of the relevant error size, we see that we can choose:
    \begin{align*}
        r &= 2 \xi \log\left( \frac{C_1(n)}{\epsilon} \right) =   2\xi \log\left( \frac{4c'|A| J (D-1)! (2\xi)^{D-1} D^{D-1}f(n)}{\epsilon e^{1/2\xi }(1-e^{-1/2\xi})}  \right) \\
        \gamma &= \frac{\epsilon}{C_2(r)}  = \frac{\epsilon }{2[2(r+k_0)]^DJ\ell} = O\bigg( \frac{\epsilon}{\log^D(f(n)/\epsilon)} \bigg) = O\bigg( \epsilon2^{-c\log\log(f(n)/\epsilon)} \bigg)
    \end{align*}
    
\end{proof}

\subsection{Slow Local Mixing: General Lindbladian Phases}

In the case of a general phase of matter as per \cref{Def:Phase_of_Matter}, we see that we can repeat the same analysis.
Using \cref{Corollary:Slow_Mixing_Localised_Lindbladian_All_Times}, we can derive a bound on the number of samples needed in a similar way to \Cref{Theorem:Learning_General_Phase}.
In particular, we only need to be careful about the time parameter when learning the entire phase (rather than just the steady states).
We use the same learning algorithm as in the rapid local mixing case.
\begin{theorem}[Slow Local Mixing: Learning Phases of Matter] \label{Theorem:Slow_Mixing:Learning_General_Phase}
Use the notation of \cref{Prop:Approximating_Observables_General}.
    Denote  $\tilde{f}_{O_i}(\L(x), \omega,  t))=\tr\big[O_i\, \widetilde{\rho}_{S_i}(x,\omega, t)\big]$ the function constructed from the shadow tomography protocol of \Cref{Prop:Robust_Shadows}.
    Consider a set of $N$ shadows $\{\widetilde{\rho}(x_i, \omega_i, \tau_i) \}_{i=1}^N$ and suppose we wish that:
    \begin{align*}
    \sup_{\substack{x\in [-1,1]^m \\ t\geq 0 \\ \omega \in W}} |f_O(\L(x), \omega,  t)) - \tilde{f}_O(\L(x), \omega,  t))|\leq \epsilon \sum_{i_1}^M \norm{O_i}_\infty,
\end{align*}
with probability $(1-\delta).(1-\delta')$, with associated parameters:
\begin{align*}
     r &=  2\xi \log\left( \frac{4c'|A| J (D-1)! (2\xi)^{D-1} D^{D-1}f(n)}{\epsilon e^{1/2\xi }(1-e^{-1/2\xi})}  \right)  \\
    \gamma &= \frac{\epsilon}{3[2(r+k_0)]^DJ(\ell+1)} \\
    t_\epsilon &= \frac{1}{\gamma'}\log\left(\frac{3f(n)}{\epsilon}  \right) \\
    q &= \frac{8.12^{k_0}}{3.\epsilon^2}\log\left(\frac{n^{k_0}2^{k_0+1}}{\delta'}\right)\,.
\end{align*}
Then it is sufficient to choose:
\begin{align*}
    N&=  |W|q\left( \frac{\gamma}{t_\epsilon}\right)^{-1}\left( \frac{\gamma}{2}\right)^{-[2(r+r_0+k_0)^D\ell]} \left[  \log\left(\frac{M}{\delta} \right) + [2(r+r_0+k_0)\ell]\log\left( \frac{\gamma}{2}\right) + \log(q) + \log\left( \frac{\gamma}{t_\epsilon}\right) \right] \\
    &= O\left( \log\bigg(\frac{n}{\delta}\bigg)\log\bigg(\frac{M}{\delta}\bigg)2^{\polylog(f(n)/\epsilon)} \right).
\end{align*}
\end{theorem}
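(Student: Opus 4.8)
The plan is to reprise the proof of \cref{Theorem:Learning_General_Phase} at the level of structure, substituting each rapid-mixing ingredient by its slow-mixing counterpart and then carefully tracking how the factor $f(n)$ propagates into the final parameters. First I would re-derive the analogue of \cref{Prop:Approximating_Observables_General} in the slow-mixing setting. Exactly as there, I would bound $|f_O(\L(x),\omega,t)-\hat{f}_O(\L(x),\omega,t)|$ by splitting it via the triangle inequality into a spatial-localisation term, a parameter-closeness term, and a time-closeness term. The spatial term is controlled by \cref{Corollary:Slow_Mixing_Localised_Lindbladian_All_Times}, which supplies $C_1(n)\,e^{-r/2\xi}\norm{O_A}$ with $C_1(n)=O(f(n))$; the parameter-closeness term is controlled by \cref{Lemma:Slow_Local_Time_Evolution}, contributing a term of order $c(n)\norm{O_A}V(r)J\gamma$ with $c(n)=O(f(n))$; and the time-closeness term is handled exactly as in the rapid case, by integrating $\L(\hat{Y}_i(x)|_{S(r)})$ against the evolution over the interval $|t-\tau_i|\le\gamma$, yielding $J\gamma\norm{O_A}|S(r)|$.

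Second, since $t$ ranges over $[0,\infty)$ I would truncate it at a cutoff $t_\epsilon$ using the slow local-mixing bound \cref{Eq:Slow_Local_Mixing}: because the mixing envelope is $f(n)e^{-\gamma' t}$, reaching the relevant steady state to precision $\epsilon_{\mathrm{mix}}$ requires only $t_\epsilon=\tfrac1{\gamma'}\log\!\big(f(n)/\epsilon_{\mathrm{mix}}\big)$. Splitting the total budget into three pieces of size $\epsilon/3$ then forces precisely the stated choices: $r=2\xi\log\!\big(4c'|A|J(D-1)!(2\xi)^{D-1}D^{D-1}f(n)/(\epsilon\,e^{1/2\xi}(1-e^{-1/2\xi}))\big)$, $\gamma=\Theta\!\big(\epsilon/([2(r+k_0)]^D J(\ell+1))\big)$, and $t_\epsilon=\tfrac1{\gamma'}\log(3f(n)/\epsilon)$. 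The essential observation, which I would flag explicitly, is that $f(n)$ enters \emph{only inside these logarithms}, so $r=O(\log(f(n)/\epsilon))$ and $t_\epsilon=O(\log(f(n)/\epsilon))$, whence the effective parameter count $m_r=[2(r+r_0+k_0)]^D\ell=\polylog(f(n)/\epsilon)$ and $\gamma^{-1}=O(\polylog(f(n)/\epsilon)/\epsilon)$.

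Third, I would run the coupon-collector argument over the discretised box $[-1,1]^{m_r}\times[0,t_\epsilon]$ partitioned into cubes of side $\gamma$, now demanding each cube be visited at least $q$ times rather than once, so that the robust shadow estimate of \cref{Prop:Robust_Shadows} applies with $q$ chosen as in \cref{ttimes}. A union bound over the $M$ observables and the $|W|=O(1)$ auxiliary states produces the stated $N$, and the aggregated-shadow error inherits the three-way split plus the shadow error $\epsilon$, reproducing a bound of the form $(2C_1(n)e^{-r/2\xi}+C_2(r)\gamma+\epsilon)\norm{O_i}_\infty$ now carrying the slow-mixing constants. Substituting the asymptotics above into $N=|W|q(\gamma/t_\epsilon)^{-1}(\gamma/2)^{-m_r}[\cdots]$ collapses the dominant $(2/\gamma)^{m_r}$ factor to $2^{\polylog(f(n)/\epsilon)}$ and the $q$ factor to $\log(n/\delta)$, giving $N=O\!\big(\log(n/\delta)\log(M/\delta)\,2^{\polylog(f(n)/\epsilon)}\big)$.

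The step I expect to be the real, if modest, obstacle is verifying that $f(n)$ genuinely appears only inside logarithms throughout the entire chain of estimates, and in particular that the $O(f(n))$ prefactors emerging from \cref{Corollary:Slow_Mixing_Localised_Lindbladian_All_Times} and \cref{Lemma:Slow_Local_Time_Evolution} are absorbed into $r$ and $t_\epsilon$ without ever multiplying the exponent $m_r$. This is exactly what keeps the blow-up at $2^{\polylog(f(n)/\epsilon)}$ rather than polynomial or worse in $f(n)$, and it is the crux of why slow mixing still permits sample-efficient learning whenever $f(n)$ is subexponential.
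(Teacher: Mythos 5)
Your proposal mirrors the paper's own proof essentially step for step: it adapts \cref{Prop:Approximating_Observables_General} by substituting \cref{Corollary:Slow_Mixing_Localised_Lindbladian_All_Times} and \cref{Lemma:Slow_Local_Time_Evolution} for their rapid-mixing counterparts, truncates the time axis at $t_\epsilon = \tfrac{1}{\gamma'}\log(3f(n)/\epsilon)$, splits the error budget three ways, and runs the same coupon-collector plus robust-shadow argument to obtain $N = O\big(\log(n/\delta)\log(M/\delta)2^{\polylog(f(n)/\epsilon)}\big)$. The one place you are arguably \emph{more} careful than the paper is in carrying the $c(n)=O(f(n))$ prefactor from \cref{Lemma:Slow_Local_Time_Evolution} into the parameter-closeness term (which would force $\gamma = O\big(\epsilon/(f(n)\operatorname{polylog}(f(n)/\epsilon))\big)$ rather than the stated $\gamma$, slightly contradicting your own claim that $f(n)$ enters only logarithmically), but since $(2/\gamma)^{m_r}$ remains $2^{\polylog(f(n)/\epsilon)}$ either way, the asymptotic conclusion is unaffected.
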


\begin{proof}
    We follow the reasoning of \cite[Theorem D.3]{Onorati_Rouze_Stilch_Franca_Watson_2023}:
    adapting the proof of \cref{Prop:Approximating_Observables_General}, it is clear that with probability  $$1-\delta:=1-Me^{-N\frac{1}{q}(\gamma/2)^{m_r}(\gamma/t_{\epsilon})+m_r\log(2/\gamma)+\log(t_\epsilon/\gamma)+\log q}$$
each cube is visited at least $q$ times. 
Conditioned on that event, and choosing $q$ such that \Cref{ttimes} holds, we have that with probability $1-\delta'$, and using the slow local mixing bound \cref{Corollary:Slow_Mixing_Localised_Lindbladian_All_Times}
\begin{align*}
    |f_O(\L(x),t) -\tilde{f}_O(\L(x),t) | &\leq   C_1(n) e^{-r/2\xi} \norm{O_A} +  \gamma\norm{O_A}J(\ell+1)(2r)^D + f(n)\norm{O_A}e^{-\gamma' t_\epsilon}.
\end{align*}
We divide up the error budget between these three terms equally such that:
\begin{align*}
    C_1(n) e^{-r/2\xi}, \  \gamma J(\ell+1)(2r)^D,\ f(n)e^{-\gamma' t_\epsilon} \leq \epsilon/3
\end{align*}
are all equal.
Thus we need to set 
\begin{align*}
    t_\epsilon &= \frac{1}{\gamma'}\log\left(\frac{3f(n)}{\epsilon}  \right) \\
     r &= 2\xi \log\left( \frac{3C_1(n)}{\epsilon} \right) = O\left(\log\left (\frac{f(n)}{\epsilon} \right)\right) \\ 
    \gamma &= \frac{\epsilon}{3J(\ell+1)(2(r+k_0))^D} = O\left( \frac{\epsilon}{\log^D(f(n))} \right).
\end{align*}
\end{proof}

\end{document}